\documentclass[12pt]{article}

\usepackage[utf8]{inputenc} 
\usepackage[T1]{fontenc}    
\usepackage{hyperref}       
\usepackage{url}            
\usepackage{booktabs}       
\usepackage{amsfonts}       
\usepackage{nicefrac}       
\usepackage{microtype}      
\usepackage{xcolor}         

\usepackage{complexity}

\usepackage{mathtools}
\usepackage{amsmath}
\usepackage{amssymb}
\usepackage{amsthm}

\usepackage{bbm}

\usepackage{authblk}

\usepackage{mleftright}
\usepackage{thm-restate}
\usepackage{bm}

\newcommand{\declarecolor}[2]{\definecolor{#1}{RGB}{#2}\expandafter\newcommand\csname #1\endcsname[1]{\textcolor{#1}{##1}}}
\declarecolor{White}{255, 255, 255}
\declarecolor{Black}{0, 0, 0}
\declarecolor{Maroon}{128, 0, 0}
\declarecolor{Coral}{255, 127, 80}
\declarecolor{Red}{182, 21, 21}
\declarecolor{LimeGreen}{50, 205, 50}
\declarecolor{DarkGreen}{0, 80, 0}
\declarecolor{Purple}{146, 42, 158}
\declarecolor{Navy}{0, 0, 128}
\declarecolor{LightBlue}{84, 101, 202}
\definecolor{mydarkblue}{rgb}{0,0.08,0.45}
\hypersetup{ %
    pdftitle={},
    pdfkeywords={},
    pdfborder=0 0 0,
    pdfpagemode=UseNone,
    colorlinks=true,
    linkcolor=Navy,
    citecolor=DarkGreen,
    filecolor=Purple,
    urlcolor=Purple,
}

\usepackage{fullpage}

\usepackage[capitalize,noabbrev,nameinlink]{cleveref}

\usepackage{MnSymbol}

\bibliographystyle{plainnat}
\usepackage{natbib}

\let\E\relax
\let\log\relax

\usepackage{enumitem}

\newcommand{\log}{\mathsf{log}}

\newcommand{\spa}{\mathsf{span}}
\newcommand{\dist}{\mathsf{dist}}
\newcommand{\diag}{\mathsf{diag}}
\newcommand{\supp}{\mathsf{supp}}

\newcommand{\dualgap}{\Phi}

\newcommand{\vrho}{\vec{\rho}}

\newcommand{\Amax}{\|\mat{A}^\flat\|_\infty}

\newcommand{\smin}{\sigma_{\mathrm{min}}}

\newcommand{\smax}{\sigma_{\mathrm{max}}}

\newcommand{\suppx}{B}
\newcommand{\suppy}{N}

\newcommand{\hatvx}{\widehat{\vec{x}}}

\newcommand{\hatvy}{\widehat{\vec{y}}}

\newcommand{\term}{\chi}

\newcommand{\barQ}{\widebar{\mat{Q}}}

\newcommand{\tilsuppx}{\widetilde{B}}
\newcommand{\tilsuppy}{\widetilde{N}}

\newcommand{\qperp}{\vec{q}^{\perp}}
\newcommand{\qpar}{\vec{q}^{\parallel}}

\newcommand{\vp}{\vec{p}}

\newcommand{\var}{\mathbb{V}}

\makeatletter
\let\save@mathaccent\mathaccent
\newcommand*\if@single[3]{%
  \setbox0\hbox{${\mathaccent"0362{#1}}^H$}%
  \setbox2\hbox{${\mathaccent"0362{\kern0pt#1}}^H$}%
  \ifdim\ht0=\ht2 #3\else #2\fi
  }
\newcommand*\rel@kern[1]{\kern#1\dimexpr\macc@kerna}
\newcommand*\widebar[1]{\@ifnextchar^{{\wide@bar{#1}{0}}}{\wide@bar{#1}{1}}}
\newcommand*\wide@bar[2]{\if@single{#1}{\wide@bar@{#1}{#2}{1}}{\wide@bar@{#1}{#2}{2}}}
\newcommand*\wide@bar@[3]{%
  \begingroup
  \def\mathaccent##1##2{%
    \let\mathaccent\save@mathaccent
    \if#32 \let\macc@nucleus\first@char \fi
    \setbox\z@\hbox{$\macc@style{\macc@nucleus}_{}$}%
    \setbox\tw@\hbox{$\macc@style{\macc@nucleus}{}_{}$}%
    \dimen@\wd\tw@
    \advance\dimen@-\wd\z@
    \divide\dimen@ 3
    \@tempdima\wd\tw@
    \advance\@tempdima-\scriptspace
    \divide\@tempdima 10
    \advance\dimen@-\@tempdima
    \ifdim\dimen@>\z@ \dimen@0pt\fi
    \rel@kern{0.6}\kern-\dimen@
    \if#31
      \overline{\rel@kern{-0.6}\kern\dimen@\macc@nucleus\rel@kern{0.4}\kern\dimen@}%
      \advance\dimen@0.4\dimexpr\macc@kerna
      \let\final@kern#2%
      \ifdim\dimen@<\z@ \let\final@kern1\fi
      \if\final@kern1 \kern-\dimen@\fi
    \else
      \overline{\rel@kern{-0.6}\kern\dimen@#1}%
    \fi
  }%
  \macc@depth\@ne
  \let\math@bgroup\@empty \let\math@egroup\macc@set@skewchar
  \mathsurround\z@ \frozen@everymath{\mathgroup\macc@group\relax}%
  \macc@set@skewchar\relax
  \let\mathaccentV\macc@nested@a
  \if#31
    \macc@nested@a\relax111{#1}%
  \else
    \def\gobble@till@marker##1\endmarker{}%
    \futurelet\first@char\gobble@till@marker#1\endmarker
    \ifcat\noexpand\first@char A\else
      \def\first@char{}%
    \fi
    \macc@nested@a\relax111{\first@char}%
  \fi
  \endgroup
}
\makeatother

\newcommand{\barsuppx}{\widebar{B}}
\newcommand{\barsuppy}{\widebar{N}}

\newcommand{\tilx}{\widetilde{\vec{x}}}
\newcommand{\tily}{\widetilde{\vec{y}}}

\newcommand{\tilxstar}{\widetilde{\vx}^\star}
\newcommand{\tilystar}{\widetilde{\vy}^\star}

\newcommand{\hypplane}{H}

\newcommand{\barA}{\widebar{\mat{A}}}

\DeclareMathOperator{\pr}{\mathbb{P}}

\DeclareMathOperator{\tr}{tr}

\newcommand{\ind}{r}

\newcommand*{\N}{{\mathbb{N}}}

\let\R\relax
\newcommand*{\R}{{\mathbb{R}}}
\newcommand*{\E}{{\mathbb{E}}}

\newcommand*{\cM}{{\mathcal{M}}}

\newcommand*{\cI}{{\mathcal{I}}}
\newcommand*{\cX}{{\mathcal{X}}}
\newcommand*{\cE}{{\mathcal{E}}}
\newcommand*{\cY}{{\mathcal{Y}}}
\newcommand*{\cZ}{{\mathcal{Z}}}
\newcommand*{\cB}{{\mathcal{B}}}

\newcommand*{\cA}{{\mathcal{A}}}

\newcommand{\defeq}{\coloneqq}

\newcommand{\compA}{\mat{A}_{\widebar{\suppx,\suppy}}}

\newcommand{\type}{\mathsf{Type}}

\newcommand{\nakedcite}[1]{\citeauthor{#1}, \citeyear{#1}}

\newcommand{\vx}{\vec{x}}
\newcommand{\vy}{\vec{y}}
\newcommand{\vz}{\vec{z}}
\newcommand{\hvz}{\widehat{\vec{z}}}

\newcommand{\numcon}{\ell}
\newcommand{\numopt}{k}

\newcommand{\aug}{\omega}

\let\co\relax

\DeclareMathOperator{\co}{\mathsf{Hull}}
\DeclareMathOperator{\ray}{\mathsf{Ray}}

\newcommand{\convec}{\vec{c}}
\newcommand{\optvec}{\vec{o}}
\newcommand{\conmat}{\mat{C}}
\newcommand{\optmat}{\mat{O}}

\newcommand{\vv}{\vec{v}}

\newcommand{\diam}{D}

\newcommand{\connum}{\kappa'}

\newcommand{\cZstar}{\mathcal{Z}^\star}
\newcommand{\cXstar}{\mathcal{X}^\star}
\newcommand{\cYstar}{\mathcal{Y}^\star}
\newcommand{\vzstar}{\vec{z}^\star}
\newcommand{\vxstar}{\vec{x}^\star}
\newcommand{\vystar}{\vec{y}^\star}

\newcommand{\proj}{\Pi}

\newcommand{\tilO}{\tilde{O}}

\newcommand{\barmatA}{\bar{\mat{A}}}

\newcommand{\ogda}{\texttt{OGDA}}
\newcommand{\egda}{\texttt{EGDA}}
\newcommand{\omwu}{\texttt{OMWU}}
\newcommand{\egt}{\texttt{IterSmooth}}

\renewcommand{\vec}[1]{\bm{#1}}
\newcommand{\mat}[1]{\mathbf{#1}}


\theoremstyle{plain}
\newtheorem{theorem}{Theorem}[section]
\newtheorem{lemma}[theorem]{Lemma}
\newtheorem{corollary}[theorem]{Corollary}
\newtheorem{proposition}[theorem]{Proposition}

\newtheorem{claim}[theorem]{Claim}

\theoremstyle{definition}
\newtheorem{definition}[theorem]{Definition}

\theoremstyle{remark}
\newtheorem{remark}[theorem]{Remark}


\title{Convergence of $\text{log}(1/\epsilon)$ for Gradient-Based Algorithms in Zero-Sum Games without the Condition Number:\\
\Large{A Smoothed Analysis}}

\author[1]{Ioannis Anagnostides}
\author[2]{Tuomas Sandholm}

\affil[1,2]{Carnegie Mellon University}
\affil[2]{Strategy Robot, Inc.}
\affil[2]{Strategic Machine, Inc.}
\affil[2]{Optimized Markets, Inc.}
\affil[ ]{\texttt{\{ianagnos,sandholm\}}\texttt{@cs.cmu.edu}}

\begin{document}

\maketitle

\pagenumbering{gobble}

\begin{abstract}
    Gradient-based algorithms have shown great promise in solving large (two-player) zero-sum games. However, their success has been mostly confined to the low-precision regime since the number of iterations grows polynomially in $1/\epsilon$, where $\epsilon > 0$ is the duality gap. While it has been well-documented that linear convergence---an iteration complexity scaling as $\textsf{log}(1/\epsilon)$---can be attained even with gradient-based algorithms, that comes at the cost of introducing a dependency on certain condition number-like quantities which can be exponentially large in the description of the game.
    
    To address this shortcoming, we examine the iteration complexity of several gradient-based algorithms in the celebrated framework of \emph{smoothed analysis}, and we show that they have \emph{polynomial smoothed complexity}, in that their number of iterations grows as a polynomial in the dimensions of the game, $\textsf{log}(1/\epsilon)$, and $1/\sigma$, where $\sigma$ measures the magnitude of the smoothing perturbation. Our result applies to optimistic gradient and extra-gradient descent/ascent, as well as a certain iterative variant of Nesterov's smoothing technique. From a technical standpoint, the proof proceeds by characterizing and performing a smoothed analysis of a certain \emph{error bound}, the key ingredient driving linear convergence in zero-sum games. En route, our characterization also makes a natural connection between the convergence rate of such algorithms and perturbation-stability properties of the equilibrium, which is of interest beyond the model of smoothed complexity.
\end{abstract}


\clearpage
\pagenumbering{arabic}

\section{Introduction}
\label{sec:intro}

We consider the fundamental problem of computing an \emph{equilibrium} strategy for a (two-player) zero-sum game
\begin{equation}
    \label{eq:zero-sum}
    \min_{\vx \in \Delta^{n}} \max_{\vy \in \Delta^{m}} \langle \vx, \mat{A} \vy \rangle,
\end{equation}
where $\Delta^{d+1} \defeq \{ \vx \in \R_{\geq 0}^{d+1} : \vx^\top \vec{1}_{d+1} = 1 \}$ represents the $d$-dimensional probability simplex and $\mat{A} \in \R^{n \times m}$ is the payoff matrix of the game. Tracing all the way back to Von Neumann's celebrated minimax theorem~\citep{vonNeumann28:Zur}, zero-sum games played a pivotal role in the early development of game theory~\citep{vonNeumann47:Theory} and the crystallization of linear programming duality~\citep{Dantzig51:proof}. Indeed, in light of the equivalence between zero-sum games and linear programming~\citep{Adler13:Equivalence,Stengel22:Zero,Brooks23:Canonical}, many central optimization problems can be cast as~\eqref{eq:zero-sum}.

State of the art algorithms for solving zero-sum games can be coarsely classified based on the desired accuracy of a feasible solution $(\vx, \vy)$, measured in terms of the \emph{duality gap}
\begin{equation}
    \label{eq:dualgap}
    \dualgap(\vx, \vy) \defeq \max_{\vy' \in \Delta^m} \langle \vx, \mat{A} \vy' \rangle - \min_{\vx' \in \Delta^n} \langle \vx', \mat{A} \vy \rangle.
\end{equation}
In the so-called low-precision regime, where one is content with a crude solution $(\vxstar, \vystar)$ such that $\Phi(\vxstar, \vystar) \eqqcolon \epsilon \gg 0$, the best available algorithms typically revolve around the framework of \emph{regret minimization}, both in practice~\citep{Farina21:Faster,Brown19:Solving,Zinkevich07:Regret,Tang21:Regret} and in theory~\citep{Carmon20:Coordinate,Carmon19:Variance,Carmon23:Whole,Grigoriadis95:Sublinear,Clarkson12:Sublinear,Alacaoglu22:Stochastic}---in conjunction with other techniques to speed up the per-iteration complexity, such as variance reduction, data structure design, and matrix sparsification~\citep{Zhang20:Sparsified,Farina22:Fast}. Such algorithms have been central to landmark results in practical computation of equilibrium strategies even in enormous games~\citep{Brown17:Superhuman,Bowling15:Heads,Moravvcik17:DeepStack,Perolat22:Mastering}.

The high-precision regime, where $\epsilon \ll \frac{1}{\poly(n m)}$, has turned out to be more elusive, with current LP-based techniques struggling to scale favorably in large instances. This deficiency can be in part attributed to the relatively high per-iteration complexity of LP-based approaches, such as interior-point methods or the ellipsoid algorithm, as well as their intense memory requirements. A promising antidote is to instead rely on iterative gradient-based methods that have a minimal per-iteration cost. Indeed, in a line of work pioneered by~\citet{Tseng95:Linear}, it is by known well-documented that \emph{linear convergence}---an iteration complexity scaling only as $\log(1/\epsilon)$---can been achieved even with such methods~\citep{Tseng95:Linear,Gilpin12:First,Wei21:Linear,Applegate23:Faster,Fercoq23:Quadratic}. There is, however, a major caveat to those results: the number of iterations no longer grows polynomially with the dimensions of the game $n$ and $m$, but instead depends on certain condition number-like quantities that could be exponentially large in the description of the problem; it is thus unclear how to interpret those results from a computational standpoint.

To address those shortcomings, in this paper we work in the celebrated framework of \emph{smoothed analysis} pioneered by~\citet{Spielman04:Smoothed}. Namely, our goal is to characterize the iteration complexity of certain gradient-based algorithms in zero-sum games when the payoff matrix $\mat{A}$ is subjected to small but random perturbations, as formally introduced below.

\begin{definition}[Zero-sum games under Gaussian perturbations]
    \label{def:Gauss-perturb}
    Let $\barmatA \in [-1, 1]^{n \times m}$. We assume that the payoff matrix is given by $\mat{A} \defeq \barmatA + \mat{G}$, where each entry of $\mat{G}$ is an independent (univariate) Gaussian random variable with zero mean and variance $\sigma^2 \leq 1$.
\end{definition}

Randomness here is only injected into the payoff matrix and not the set of constraints (that is, the probability simplex), which is the natural model; after applying the perturbation, the problem should still be a zero-sum game in the form of~\eqref{eq:zero-sum}. Under this model, we investigate the convergence of the following gradient-based algorithms.\footnote{The vanilla gradient descent/ascent algorithm does not even converge (in a last-iterate sense) in zero-sum games (\emph{e.g.},~\citep{Mertikopoulos18:Cycles}), which is why our analysis revolves around certain variants thereof. It is worth noting that regret minimization techniques provide guarantees concerning the average iterates, a distinction blurred in our introduction.} (Their formal description is given later in~\Cref{sec:prels}.)

\begin{enumerate}[noitemsep]
    \item \emph{optimistic gradient descent/ascent ($\ogda$)}~\citep{Popov80:Modification};\label{item:ogda}
    \item \emph{optimistic multiplicative weights update ($\omwu$)}~\citep{Syrgkanis15:Fast,Chiang12:Online,Rakhlin13:Online}; \label{item:omwu}
    \item \emph{extra-gradient descent/ascent ($\egda$)}~\citep{Korpelevich76:Extragradient};\label{item:egda} and
    \item an iterative variant of Nesterov's \emph{smoothing technique ($\egt$)}~\citep{Gilpin12:First,Nesterov05:Smooth}.\label{item:egt}
\end{enumerate}


Smoothed complexity allows interpolating between worst-case analysis---when the variance of the noise $\sigma^2$ is negligible---and average-case analysis---when the noise dominates over the underlying input. An average-case analysis is often unreliable since---as \citet{Edelman93:Eigenvalue} convincingly argued---a fully random matrix does not necessarily capture typical instances encountered in practice. \citet{Spielman04:Smoothed} put forward the framework of smoothed analysis as an attempt to explain the performance of algorithms in realistic scenarios; to understand how brittle worst-case instances really are. They famously proved that the simplex algorithm, under a certain pivoting rule, enjoys \emph{polynomial smoothed complexity}, meaning that its running time is bounded by some polynomial in the size of the input and $1/\sigma$. Smoothed analysis is by now a well-accepted algorithmic framework with a tremendous impact in the analysis of algorithms. We also argue that it is particularly well-motivated from a game-theoretic perspective: there is often misspecification or noise when modeling a game, so smoothed analysis offers a compelling way of bypassing pathological instances that are perhaps artificial in the first place.

Nevertheless, we are not aware of any prior work operating in the smoothed complexity model per~\Cref{def:Gauss-perturb} in the context of zero-sum games. To clarify this point, it is important to stress here that although zero-sum games can be immediately reduced to linear programs, that reduction is less clear in the smoothed complexity model. In particular, one set of constraints in the induced linear program takes the form $\mat{A} \vy \leq v \vec{1}_n \eqqcolon \vec{b}$, where $\vec{1}_n \in \R^{n}$ is the all-ones vector. According to the usual model of smoothed complexity in the context of linear programs, randomness has to be injected into both $\mat{A}$ and $\vec{b}$, but that clearly disturbs the validity of the equivalence. More broadly, reductions in the smoothed complexity model are quite delicate~\citep{Blaser15:Smoothed}; as a further example, even reductions involving solely linear transformations can break in the smoothed complexity model since independence---a crucial assumption in this framework---is not guaranteed to carry over. Relatedly, one interesting direction arising from the work of~\citet{Spielman03:Smoothed} is to perform smoothed analysis in linear programs which are guaranteed to be feasible and bounded, no matter the perturbation; zero-sum games under~\Cref{def:Gauss-perturb} constitute such a class. Besides the point above, different algorithms designed for the same problem can have entirely different properties, not least in terms of their smoothed complexity. The class of algorithms we consider in this paper is quite distinct from the ones shown to have polynomial smoothed complexity in the context of linear programs (described further in~\Cref{sec:related}). In many ways, gradient-based methods are simpler and more natural, which partly justifies their tremendous practical use. As a result, understanding their smoothed complexity is an important question.

\subsection{Our results}

Our main contribution is to show that, with the exception of $\omwu$, the other gradient-based algorithms mentioned above (\Cref{item:ogda,item:egda,item:egt}) have polynomial smoothed complexity with high probability---that is to say, with probability at least $1 - \frac{1}{\poly(n m)}$.

\begin{theorem}
    \label{theorem:informal}
    With high probability over the randomness of $\mat{A} \in \R^{n \times m}$ (\Cref{def:Gauss-perturb}), $\ogda$, $\egda$ and $\egt$ converge to an $\epsilon$-equilibrium after $\poly(n, m, 1/\sigma) \cdot \log(1/\epsilon)$ iterations.
\end{theorem}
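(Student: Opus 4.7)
The plan is to separate the argument into a deterministic piece that bounds the iteration count of each algorithm in terms of an error-bound constant $c(\mat{A})$, and a probabilistic piece that lower bounds $c(\mat{A})$ under the Gaussian smoothing of~\Cref{def:Gauss-perturb}. Since each of $\ogda$, $\egda$, and $\egt$ is known to converge linearly on bilinear saddle-point problems over simplices---with the rate dictated by a Hoffman/Tseng-type error bound---the deterministic piece reduces to pinning down a common quantity $c(\mat{A})$ that controls all three rates. The target iteration count will then be of the form $\poly(1/c(\mat{A})) \cdot \log(1/\epsilon)$, and the theorem follows if we can show $c(\mat{A}) \geq 1/\poly(n, m, 1/\sigma)$ with high probability.

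For the deterministic piece, I would exploit the KKT structure of the min-max LP to express $c(\mat{A})$ in terms of two geometric quantities at a chosen equilibrium $(\vxstar, \vystar)$: first, the smallest singular value $\smin(\mat{A}_{\suppx, \suppy})$ of the restricted payoff submatrix indexed by the supports $\suppx \defeq \supp(\vxstar)$ and $\suppy \defeq \supp(\vystar)$, which governs how quickly active coordinates can be disentangled from inactive ones; second, a slack parameter $\chi > 0$ quantifying by how much the non-support rows of $\mat{A} \vystar$ and the non-support columns of $\mat{A}^\top \vxstar$ strictly satisfy the complementary-slackness inequalities. Both quantities emerge naturally once one unrolls a Hoffman-type error bound on the KKT system: a near-zero value of either signals a near-degenerate equilibrium, which is precisely the \emph{perturbation-stability} viewpoint flagged in the introduction.

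For the probabilistic piece, I would fix a candidate support pair $(\suppx, \suppy)$, bound each of the two quantities on the event that this is the equilibrium support, and finally union bound over the at most $2^{n+m}$ possible choices. For $\smin(\mat{A}_{\suppx, \suppy})$, a Sankar--Spielman--Teng style tail bound on the least singular value of a Gaussian-perturbed matrix yields $\smin(\mat{A}_{\suppx, \suppy}) \geq \sigma / \poly(n, m)$ with polynomially small failure probability. For the slack $\chi$, once one conditions on $\mat{A}_{\suppx, \suppy}$ (which fixes $\vxstar, \vystar$), each of the $O(n+m)$ strict inequalities is a one-dimensional Gaussian linear combination of the remaining entries, so anti-concentration gives $\chi \geq \sigma / \poly(n, m)$ with high probability; the $2^{n+m}$ loss from the union bound is absorbed comfortably by these tails.

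The main obstacle I foresee is the interdependence between the equilibrium support and the entries of $\mat{A}$: since $(\suppx, \suppy)$ is itself a measurable function of $\mat{A}$, one cannot naively treat $\mat{A}_{\suppx, \suppy}$ and the complementary entries as independent. I would resolve this via a conditioning argument, leveraging that under Gaussian perturbation the equilibrium is almost surely unique and non-degenerate, so the support is well defined; one then conditions on the event that a particular $(\suppx, \suppy)$ is the support and verifies that the remaining entries retain enough Gaussian structure for anti-concentration to proceed. A secondary challenge is checking that the linear-rate constants of $\ogda$, $\egda$, and $\egt$ are indeed all controlled (possibly up to different polynomial exponents) by the same $c(\mat{A})$, so that a single smoothed bound applies uniformly; this requires revisiting each algorithm's existing linear-convergence proof and reading off its precise dependence on the error bound.
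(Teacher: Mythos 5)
Your high-level decomposition---deterministic characterization of an error bound, then a smoothed lower bound on it, then plugging into known linear-rate results---matches the paper's architecture, and your identification of a least-singular-value-type quantity and a complementary-slackness slack is close to the paper's $\gamma$ and $\beta$. However, the probabilistic piece as you describe it would fail, and the way you propose to fix the dependence on the random support is precisely the strategy the paper rules out.

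The critical flaw is the union bound over the $2^{n+m}$ candidate supports. You assert that "the $2^{n+m}$ loss from the union bound is absorbed comfortably by these tails," but the tails available here are not exponential. A Sankar--Spielman--Teng-type least-singular-value bound, and likewise the one-dimensional Gaussian anti-concentration bounds for the slack, give estimates of the form $\pr[X \leq \epsilon] \lesssim \epsilon \cdot \poly(n,m)/\sigma^2$---linear in $\epsilon$, nowhere near strong enough to pay a $2^{n+m}$ factor. If you insist on a naive union bound, you are forced to take $\epsilon$ exponentially small, which makes the resulting error-bound modulus (and hence the iteration complexity $\poly(1/\kappa)\log(1/\epsilon)$) exponential, defeating the purpose. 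The paper instead exploits that the support events $\{\type_{\suppx,\suppy}(\mat{A})=1\}$ \emph{partition} the probability space, so one can bound $\pr[\text{bad}]$ by $\max_{\suppx,\suppy}\pr[\text{bad}\mid\type_{\suppx,\suppy}]$ without any multiplicative blow-up (\Cref{prop:max-prob}, following~\citet{Spielman03:Smoothed}). No $2^{n+m}$ factor ever appears; what remains is to control the \emph{conditional} probability, which is a different and more delicate problem.

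This leads to the second gap: you wave at "verifying the remaining entries retain enough Gaussian structure" after conditioning, but the submatrix $\mat{A}_{\suppx,\suppy}$ is exactly where the conditioning bites. Conditioned on the event that $(\suppx,\suppy)$ is the equilibrium support, the law of $\mat{A}_{\suppx,\suppy}$ is not Gaussian, and a black-box Sankar--Spielman--Teng bound does not apply. The paper's workaround is the change of variables from $\mat{A}_{\suppx,\suppy}$ to $(\mat{Q},\vec b,\vec c,d)$ and then to $(\mat{Q},\tilxstar,\tilystar,v)$ (\Cref{lemma:change-variables}), computing the Jacobian $\det(\mat{Q})^2$ explicitly, and then verifying that the resulting (non-Gaussian) density satisfies the ratio-smoothness condition of \Cref{lemma:basic-smooth} so that the anti-concentration argument still goes through. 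Relatedly, the correct geometric object is not $\smin(\mat{A}_{\suppx,\suppy})$ but the transformed matrix $\mat Q$ (and its centered version $\barQ$): the simplex constraints introduce redundant coordinates, and the equilibrium is characterized by $\mat Q\tilystar=\vec c$, $\mat Q^\top\tilxstar=\vec b$, not by a system in $\mat{A}_{\suppx,\suppy}$. Finally, the paper's characterization (\Cref{theorem:characterization}) involves a third family of quantities---the minimum support probabilities $\alpha_P,\alpha_D$---that you omit; these turn out to be necessary both to relate $\barQ$ to $\mat Q$ (\Cref{lemma:barQ}) and to bound the contribution when the iterate is near but not at the equilibrium's support.
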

The main takeaway of this result is that, modulo pathological instances, certain gradient-based algorithms are reliable solvers in zero-sum games even in the high-precision regime. Similarly to earlier endeavors in the context of linear programs~\citep{Spielman04:Smoothed,Blum02:Smoothed}, a dependency of $\poly(1/\sigma)$ (as in~\Cref{theorem:informal}) is what we should expect; the one exception is the class of interior-point methods whose running time grows as $\log(1/\sigma)$, but those algorithms are (weakly) polynomial even in the worst case. We further remark that the polynomial dependency on $n$ and $m$ in~\Cref{theorem:informal} can almost certainly be improved, and we made no effort to optimize it.

Regarding $\omwu$, which is not covered by~\Cref{theorem:informal}, we also obtain a significant improvement in the iteration complexity compared to the worst-case analysis of~\citet{Wei21:Linear}, but our bound is still not polynomial. As we explain further in~\Cref{sec:proof-main}, the main difficulty pertaining to $\omwu$ is that the analysis of~\citet{Wei21:Linear} gives (at best) an exponential bound \emph{no matter the geometry of the problem}. With that mind, our result is essentially the best one could hope for without refining the worst-case analysis of $\omwu$, which is not within our scope here. We anticipate that our characterization herein will prove useful in conjunction with future developments in the worst-case complexity of $\omwu$, as well as in the analysis of other iterative methods.

\paragraph{The error bound} The central ingredient that enables gradient-based algorithms to exhibit linear convergence is a certain \emph{error bound}, given below as~\Cref{def:eb}. For compactness in our notation, we let $\cX \defeq \Delta^n$ and $\cY \defeq \Delta^m$. We then let $\vz \defeq (\vx, \vy)$, $\cZ \defeq \cX \times \cY$, and $\cZstar \defeq \cXstar \times \cYstar$, where $\cXstar$ and $\cYstar$ represent the (convex) set of equilibria for Player $x$ and Player $y$, respectively.

\begin{definition}[Error bound]
    \label{def:eb}
    Let $\Phi(\vz)$ denote the duality gap as introduced in~\eqref{eq:dualgap}. We say that the zero-sum game~\eqref{eq:zero-sum} satisfies an \emph{error bound} with modulus $\kappa \in \R_{> 0}$ if 
    \begin{equation}
    \label{eq:eb}
    \dualgap(\vz) \geq \kappa \| \vz - \proj_{\cZstar} (\vz) \| \quad \forall \vz \in \cZ.
\end{equation}
\end{definition}
Above, $\proj_{\cZstar}(\cdot)$ denotes the (Euclidean) projection operator; the set of games with a unique equilibrium has measure one, so we can safely replace $\proj_{\cZstar}(\vz)$ by the unique equilibrium $\vzstar \in \cZstar$. It has been known at least since the work of~\citet{Tseng95:Linear} that affine variational inequalities indeed satisfy~\eqref{eq:eb}. Nevertheless, it should come to no surprise that, even in $3 \times 3$ games, $\kappa$ can be arbitrarily small (\Cref{prop:ill-cond}), which in turn means that, linear convergence notwithstanding, the number of iterations prescribed by an analysis revolving around~\eqref{eq:eb} can be arbitrarily large. In fact, with the exception of $\omwu$, which is to be discussed further below, \Cref{def:eb} suffices to establish linear convergence (essentially) based on existing results.\footnote{\Cref{def:eb} also readily establishes linear convergence for other compelling primal-dual algorithms, as shown recently by~\citet{Applegate23:Faster}; in that paper, the error bound was referred to as ``sharpness,'' a terminology employed in other papers as well (\emph{e.g.},~\citep{Zarifis24}).  } Our main result pertaining to \Cref{def:eb} is that the modulus $\kappa$ is likely to be polynomial in the smoothed complexity model:

\begin{theorem}
    \label{theorem:informal-kappa}
    With high probability over the randomness of $\mat{A}$ (\Cref{def:Gauss-perturb}), the error bound per~\Cref{def:eb} is satisfied for any sufficiently small $\kappa \geq \poly(\sigma, 1/(nm))$. 
\end{theorem}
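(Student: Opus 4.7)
The plan is to reduce the error-bound modulus $\kappa$ to a handful of geometric quantities determined by the (almost surely unique) equilibrium $\vzstar=(\vxstar,\vystar)$, and then to show that each of these is polynomially bounded with high probability under the Gaussian perturbation. Because $\mat{A}$ has a continuous distribution, almost surely the game admits a unique equilibrium with value $v^\star$ and supports $B=\supp(\vxstar)$, $N=\supp(\vystar)$; furthermore strict complementarity holds, so the slack
\[
\gamma \defeq \min\Bigl\{ \min_{i\notin B}\bigl(\row{A}{i}\vystar - v^\star\bigr),\ \min_{j\notin N}\bigl(v^\star - (\vxstar)^{\top}\column{A}{j}\bigr) \Bigr\}
\]
is strictly positive.

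For $\vz=\vzstar+(\errx,\erry)$ sufficiently close to $\vzstar$ (specifically, whenever $\|(\errx,\erry)\|\cdot \Amax \le \gamma/2$), the non-active pure strategies cannot attain the outer max/min in the duality gap, and I would show
\[
\Phi(\vz) \;\ge\; \max_{j\in N} \errx^{\top}\column{A}{j} \;+\; \max_{i\in B}\bigl(-\row{A}{i}\erry\bigr).
\]
Since $\vec{1}_n^{\top}\errx=0$ and $\vec{1}_m^{\top}\erry=0$, the right-hand side is the maximum coordinate of two zero-sum vectors, and so can be lower-bounded (up to a $1/\sqrt{|N|+|B|}$ loss) by the Euclidean norms of $\mat{A}_{:,N}^{\top}\errx$ and $\mat{A}_{B,:}\erry$. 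These in turn are at least $\sigma_{\min}(\widehat{\mat{A}})\cdot\|(\errx,\erry)\|$, where $\widehat{\mat{A}}$ is the block matrix built from $\mat{A}_{:,N}^{\top}$ and $\mat{A}_{B,:}$ augmented with the simplex-constraint rows $\vec{1}_n^{\top}$, $\vec{1}_m^{\top}$ and restricted to the tangent space at $\vzstar$. For $\vz$ outside the ball of radius $\gamma/(2\Amax)$, piecewise-linearity of $\Phi$ together with the local bound globalizes to $\Phi(\vz)\ge \kappa\|\vz-\vzstar\|$ on all of $\cZ$, with $\kappa$ depending polynomially on $\gamma$, $\sigma_{\min}(\widehat{\mat{A}})$, $1/\Amax$, and the diameter.

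It then remains to carry out a smoothed analysis of $\gamma$ and $\sigma_{\min}(\widehat{\mat{A}})$. Both are determined by the perturbed entries of $\mat{A}$ together with the (random) supports $B,N$. For a fixed $(B,N)$, the slack $\gamma$ is the minimum of $O(n+m)$ affine functions of independent Gaussians with directional variance of order $\sigma^2/\poly(nm)$ (coming from the fact that $\vxstar,\vystar$ have entries of order $1/\poly(nm)$), and standard one-dimensional Gaussian anti-concentration gives $\Pr[\gamma\le t\mid B,N]\le O(t\cdot nm/\sigma)$; similarly, $\sigma_{\min}(\widehat{\mat{A}})$ is controlled via a Sankar--Spielman--Teng-type bound $\Pr[\sigma_{\min}(\widehat{\mat{A}})\le t\mid B,N]\le O(\sqrt{n+m}\,t/\sigma)$. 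The main obstacle is passing from conditional bounds on a fixed $(B,N)$ to an unconditional bound, since a naive union bound over the $2^{n+m}$ possible supports would cost exponentially and thereby destroy the polynomial guarantee on $\kappa$. Overcoming this requires either a direct argument that identifies the equilibrium support from $\mat{A}$ without enumeration, or a more careful conditioning (for instance on the sign pattern of the KKT residuals) under which the remaining Gaussian randomness is still enough to activate the two anti-concentration bounds above.
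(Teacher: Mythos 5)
Your high-level decomposition---reducing $\kappa$ to a strict-complementarity slack (the paper's $\beta_P,\beta_D$) and a least-singular-value quantity (controlled in the paper via $\gamma_P,\gamma_D$ and the matrix $\mat{Q}$), and then trying to anti-concentrate each of these---is in the same spirit as the paper's characterization in Theorem~\ref{theorem:characterization}. Two differences there: the paper also needs the quantities $\alpha_P,\alpha_D$ (the smallest positive equilibrium probability), which enter because it works with the shifted matrix $\barQ$ whose least singular value degrades by a factor involving $\alpha_D$; and the paper's local-to-global case split (by $\lambda=\|\vx_{\suppx}\|$) is spelled out more carefully than the piecewise-linearity appeal in your sketch. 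These are repairable details.

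The genuine gap is in the smoothed-analysis half, and it is the part you yourself flag as unresolved (``Overcoming this requires either\dots or\dots''). You implicitly assume that, after conditioning on the support event, the relevant quantities are still ``the minimum of $O(n+m)$ affine functions of independent Gaussians'' to which one can apply standard anti-concentration or a Sankar--Spielman--Teng bound. That is not the case: conditioning on $\type_{\suppx,\suppy}(\mat{A})$ is a global, highly entangled event, and the conditional law of the entries is no longer a product of Gaussians (nor even an independent perturbation of a fixed mean). The paper resolves both obstacles you raise with a specific pair of tools. First, there is no union bound at all: by Proposition~\ref{prop:max-prob}, $\pr[\cE]\le \max_{\suppx,\suppy}\pr[\cE\mid \type_{\suppx,\suppy}(\mat{A})]$, so it suffices to bound the worst conditional probability, with no sum over the $2^{n+m}$ supports. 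Second, to actually analyze that conditional law, the paper performs a change of variables (Lemma~\ref{lemma:change-variables}) from $\mat{A}_{\suppx,\suppy}$ to $(\mat{Q},\vxstar,\vystar,v)$ via the transformation~\eqref{eq:Q-trans}, so that the conditioning event becomes the linear inequalities $\mat{A}_{\barsuppx,\suppy}\vystar\ge v\vec{1}$, $\mat{A}^{\top}_{\barsuppy,\suppx}\vxstar\le v\vec{1}$; the induced density (which now carries a $\det(\mat{Q})^2$ Jacobian factor and is no longer Gaussian in $\mat{Q}$) is then handled by the slowly-varying-density lemma (Lemma~\ref{lemma:basic-smooth}, following Spielman--Teng) rather than by Gaussian anti-concentration directly. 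Without this max-over-supports observation and the change of variables, the plan you outline does not go through, so the proposal as written does not prove the theorem.
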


To establish this result, the first step is to lower bound $\kappa$ in terms of certain natural geometric features of the problem (\Cref{theorem:characterization}), which is discussed further in~\Cref{sec:technical}. Establishing~\Cref{theorem:informal-kappa} then reduces to analyzing each of those quantities under~\Cref{def:Gauss-perturb}. It turns out that bounding those quantities also suffices for characterizing $\omwu$, whose existing analysis due to~\citet{Wei21:Linear} involves some further ingredients besides the error bound of~\Cref{def:eb}.

\paragraph{Further implications} Our characterization of the error bound given in~\Cref{theorem:characterization} has some further important implications. First, a well-known vexing issue regarding computing equilibria even in zero-sum games is that a solution with small duality gap can still be relatively far from the equilibrium in the geometric sense, a phenomenon further exacerbated in multi-player games~\citep{Etessami07:Complexity}. Therefore, results providing guarantees in terms of the duality gap are not particularly informative when it comes to computing strategies close to the equilibrium in a geometric sense. At the same time, there are ample reasons why the latter guarantee is more appealing~\citep{Etessami07:Complexity}. \Cref{theorem:informal-kappa} implies that such concerns can be alleviated in the smoothed complexity model:

\begin{corollary}
    With high probability over the randomness of $\mat{A}$ (\Cref{def:Gauss-perturb}), any point $\vz \in \cZ$ with $\Phi(\vz) \leq \epsilon$ satisfies $\| \vz - \vzstar\| \leq \epsilon \cdot \poly(n, m, 1/\sigma)$.
\end{corollary}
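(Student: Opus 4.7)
The plan is to derive the corollary as an immediate consequence of \Cref{theorem:informal-kappa} combined with the error bound in \Cref{def:eb}. The key observation is that the corollary's conclusion is literally what one gets by dividing both sides of the error bound inequality by the modulus $\kappa$, so nearly all the work has already been absorbed into the statement of \Cref{theorem:informal-kappa}.

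More concretely, I would proceed as follows. First, I would condition on the high-probability event furnished by \Cref{theorem:informal-kappa}, on which the zero-sum game satisfies the error bound of \Cref{def:eb} with modulus $\kappa \geq \poly(\sigma, 1/(nm))$; equivalently, $1/\kappa \leq \poly(n, m, 1/\sigma)$. Second, I would invoke the remark following \Cref{def:eb}: under \Cref{def:Gauss-perturb}, the event that $\mat{A}$ has a non-unique equilibrium is a measure-zero event (as a semialgebraic set of lower dimension subjected to an absolutely continuous Gaussian perturbation), so with probability one we may identify $\proj_{\cZstar}(\vz)$ with the unique equilibrium $\vzstar$. Intersecting these two events still produces an event of high probability. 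Third, I would take an arbitrary feasible point $\vz \in \cZ$ satisfying $\Phi(\vz) \leq \epsilon$, and simply rearrange \eqref{eq:eb} to deduce
\begin{equation*}
    \| \vz - \vzstar \| \;\leq\; \frac{\Phi(\vz)}{\kappa} \;\leq\; \frac{\epsilon}{\kappa} \;\leq\; \epsilon \cdot \poly(n, m, 1/\sigma),
\end{equation*}
which is precisely the conclusion. Note that the error bound holds uniformly over $\vz \in \cZ$, so quantifying over all such $\vz$ on a single high-probability event is immediate and requires no additional union bound.

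Strictly speaking, there is no significant obstacle in this corollary: all the technical machinery is subsumed by \Cref{theorem:informal-kappa}, whose proof (as the introduction indicates) is the substantive contribution. The only minor issue worth flagging is ensuring the uniqueness-of-equilibrium reduction is clean---one could alternatively keep $\proj_{\cZstar}(\vz)$ in the conclusion and replace $\vzstar$ by $\proj_{\cZstar}(\vz)$, making the argument entirely mechanical and removing even the measure-zero caveat. Either way, the corollary follows in a few lines once \Cref{theorem:informal-kappa} is in hand.
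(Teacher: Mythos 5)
Your proof is correct and matches what the paper intends: the corollary is presented as an immediate consequence of \Cref{theorem:informal-kappa} together with \Cref{def:eb}, and your argument (condition on the high-probability event that $\kappa \geq \poly(\sigma, 1/(nm))$, use almost-sure uniqueness of the equilibrium, and divide through in the error bound) is exactly the intended reasoning, for which the paper gives no separate proof. Your remark that one could alternatively keep $\proj_{\cZstar}(\vz)$ in the statement to avoid the measure-zero caveat is a fair observation but unnecessary given the paper's own identification of $\proj_{\cZstar}(\vz)$ with $\vzstar$ after \Cref{def:eb}.
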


Beyond smoothed analysis, \Cref{theorem:characterization} applies to any non-degenerate game (in the sense of \Cref{def:nondegen}), and can be thereby used to parameterize the rate of convergence of gradient-based algorithms based on natural and interpretable game-theoretic quantities of the underlying game, which has eluded prior work. In particular, we make a natural connection between the complexity of gradient-based algorithms and \emph{perturbation stability} properties of the equilibrium. In light of misspecifications which are often present in game-theoretic modeling, focusing on games with perturbation-stable equilibria is well-motivated and has already received ample of interest in prior work~\citep{Balcan17:Nash,Awasthi10:Nash}; more broadly, perturbation stability is a common assumption in the analysis of algorithms beyond the worst-case model~\citep{Makarychev21:Perturbation}. There are different natural ways of defining perturbation-stable games; here, we assume that any perturbation with magnitude below $\delta > 0$, in that $\|\mat{A}' - \mat{A}\|_2 \leq \delta$, maintains the support of the equilibrium and the non-degeneracy of the game; we call such games \emph{$\delta$-support-stable} (\Cref{def:per-stable}). In this context, we show the following result.

\begin{corollary}
    \label{cor:per-stable}
    For any $\delta$-support-stable zero-sum game, $\ogda$, $\egda$ and $\egt$ converge to an $\epsilon$-equilibrium after $\poly(n, m, 1/\delta) \cdot \log(1/\epsilon) $ iterations.
\end{corollary}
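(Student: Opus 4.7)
The plan is to reduce Corollary~\ref{cor:per-stable} to the characterization of the error bound modulus $\kappa$ provided by Theorem~\ref{theorem:characterization} and then invoke the linear-convergence guarantees for $\ogda$, $\egda$, and $\egt$, each of which produces an iteration complexity scaling as $\poly(1/\kappa) \cdot \log(1/\epsilon)$ once the error bound of Definition~\ref{def:eb} holds. Note that the three algorithms in the statement are precisely those whose convergence is controlled by $\kappa$ alone (unlike $\omwu$), so the reduction is clean.

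Since $\delta$-support-stability (Definition~\ref{def:per-stable}) demands that every perturbation of operator norm at most $\delta$ preserves non-degeneracy and the support of the equilibrium, the game itself is non-degenerate and Theorem~\ref{theorem:characterization} applies, lower-bounding $\kappa$ by a polynomial in a finite list of natural geometric quantities: the minimum positive coordinate of the equilibrium $\vzstar$, the minimum complementary-slackness gap on each non-supported coordinate, and the minimum singular value of the restriction of $\mat{A}$ to the equilibrium support $\supp(\vxstar) \times \supp(\vystar)$.

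The heart of the argument is to show that each such quantity $q$ is at least $\poly(\delta, 1/(nm))$. I would proceed by contrapositive: if $q < c_q \cdot \delta$ for a suitable $c_q = \poly(1/(nm))$, then I exhibit an explicit perturbation $\mat{A}' = \mat{A} + \mat{E}$ with $\|\mat{E}\|_2 \leq \delta$ that either drives a supported coordinate of $\vzstar$ to zero, flips the sign of a non-supported slack, or renders the supported submatrix of $\mat{A}$ singular---any of which violates $\delta$-support-stability. Combining these lower bounds through Theorem~\ref{theorem:characterization} yields $\kappa \geq \poly(\delta, 1/(nm))$, and substituting into the $\poly(1/\kappa) \cdot \log(1/\epsilon)$ iteration-complexity bounds proved en route to Theorem~\ref{theorem:informal} completes the proof.

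The main obstacle will be the singular-value case: to argue that a small minimum singular value of the supported submatrix translates into a small destabilizing perturbation, I need to (i) lift a near-null singular direction to a perturbation $\mat{E}$ of the full matrix whose operator norm is still bounded by $\delta$, and (ii) verify that the perturbed game genuinely has an equilibrium with different support or becomes degenerate---the latter requiring one to trace how a singular perturbation of the support submatrix propagates through the KKT system of the perturbed game. This step is essentially a deterministic counterpart of the anti-concentration argument underlying Theorem~\ref{theorem:informal-kappa} in the Gaussian model; the geometric intuition is identical, but the bookkeeping must now be done without probabilistic slack.
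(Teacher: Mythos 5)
Your proposal follows essentially the same route as the paper: note that $\delta$-support-stability forces non-degeneracy, invoke \Cref{theorem:characterization} to reduce the error bound modulus $\kappa$ to the quantities of \Cref{def:quant}, bound each such quantity from below via a contrapositive perturbation argument, and then feed $\kappa$ into the linear-convergence rates used in \Cref{sec:proof-main}. The paper's proof of \Cref{theorem:eb-stable} is exactly this: for each of $\alpha_P,\beta_P,\gamma_P$ it constructs an explicit $\mat{A}'$ of small distance from $\mat{A}$ that would change the support or break non-degeneracy, whence $\delta$ lower-bounds the quantity in question.

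One imprecision worth flagging: you describe the third quantity as ``the minimum singular value of the restriction of $\mat{A}$ to the equilibrium support,'' i.e.\ $\smin(\mat{A}_{\suppx,\suppy})$, and propose to argue that a small $\smin(\mat{A}_{\suppx,\suppy})$ yields a small destabilizing perturbation. That version would be \emph{false}: the supported submatrix $\mat{A}_{\suppx,\suppy}$ may genuinely be singular in a non-degenerate game with a unique interior equilibrium (e.g.\ matching pennies, where $\mat{A}_{\suppx,\suppy}=\bigl(\begin{smallmatrix}1 & -1\\ -1 & 1\end{smallmatrix}\bigr)$ is rank one), so singularity of $\mat{A}_{\suppx,\suppy}$ does not contradict stability. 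The relevant object is the reduced matrix $\mat{Q}$ of~\eqref{eq:Q-trans}, obtained after eliminating the redundant simplex variable; the quantities $\gamma_P,\gamma_D$ of \Cref{def:quant} are defined via column/row distances for $\mat{Q}$, which relate to $\smin(\mat{Q})$ by \Cref{prop:neg-id}. The paper's argument accordingly perturbs $\mat{Q}$ along its smallest singular direction, making $\mat{Q}$ singular (which \emph{does} break non-degeneracy), and then maps this perturbation back to $\mat{A}_{\suppx,\suppy}$ through the well-conditioned transformation $\mat{T}$ of \Cref{claim:T}. If you replace $\mat{A}_{\suppx,\suppy}$ by $\mat{Q}$ in your plan and route the ``lift'' through $\mat{T}$, your step (i) and (ii) go through exactly as you describe and the argument matches the paper's.
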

That is, games in which $\delta$ is not too close to $0$ are more amenable to gradient-based algorithms, which is a quite natural connection. \Cref{cor:per-stable} is shown by relating each of the quantities involved in~\Cref{theorem:characterization} to parameter $\delta$ defined above.

\section{Notation}

Before we proceed with our technical content, we first 
take the opportunity to streamline our notation; further background on smoothed analysis and a description of the algorithms referred to earlier (\Cref{item:ogda,item:omwu,item:egda,item:egt}) is given later in~\Cref{sec:prels}, as it is not important for the purpose of the main body.

We use boldface letters, such as $\vx, \vy, \vec{b}, \vec{c}$, to represent vectors in a Euclidean space. For a vector $\vx \in \R^n$, we access its $i$th coordinate via a subscript, namely $\vx_i$. Superscripts (together with parantheses) are typically reserved for the (discrete) time index. We denote by $\|\vx\|$ the Euclidean norm, $\|\vx\| \defeq \sqrt{\sum_{i=1}^n \vx_i^2}$, the $\ell_\infty$ norm by $\|\vx\|_\infty \defeq \max_{1 \leq i \leq n} |\vx_i|$, and the $\ell_1$ norm by $\|\vx\|_1 \defeq \sum_{i=1}^n |\vx_i|$. For $\vx, \vx' \in \R^n$, we let $\dist(\vx, \vx') \defeq \| \vx - \vx'\|$. $\spa(\cdot)$ represents the linear space spanned by a given set of vectors. For $\vx \in \R^n$ and a subset $\suppx \subseteq [n]$, we denote by $\vx_{\suppx} \in \R^{\suppx}$ the subvector of $\vx$ induced by $\suppx$. We let $\vec{1}_n \in \R^n$ be the all-ones vector of dimension $n$; we will typically omit the subscript when it is clear from the context. For vectors $\vx \in \R^n$ and $\vy \in \R^m$, we write $(\vx, \vy) \in \R^{n + m}$ to denote their concatenation. Throughout this paper, we use $\vx$ and $\vy$ to denote the strategy of Player $x$ and Player $y$, respectively.

To represent matrices, we use boldface capital letter, such as $\mat{A}, \mat{Q}$. It will sometimes be convenient to use $\mat{A}^\flat \in \R^{n m}$ to represent a vectorization of $\mat{A} \in \R^{n \times m}$. We overload notation by letting $\|\mat{A}\|$ be the spectral norm of $\mat{A}$. For a matrix $\mat{A} \in \R^{n \times m}$ and subsets $\suppx \subseteq [n], \suppy \subseteq [m]$, we denote by $\mat{A}_{\suppx, \suppy} \in \R^{\suppx \times \suppy}$ the submatrix of $\mat{A}$ induced by $\suppx$ and $\suppy$. $\mat{A}_{i, :}$ and $\mat{A}_{:, j}$ represent the $i$th row and $j$th column of $\mat{A}$, respectively. The singular values of a matrix $\mat{M} \in \R^{d \times d}$ are denoted by $\sigma_1(\mat{M}) \geq \sigma_2(\mat{M}) \geq \dots \geq \sigma_d(\mat{M}) \geq 0$ (not to be confused with our notation for the variance $\sigma^2$). To be more explicit, we may also use $\smax(\mat{M}) \defeq \sigma_1(\mat{M})$ and $\smin(\mat{M}) \defeq \sigma_d(\mat{M})$.
\section{Smoothed analysis of the error bound}
\label{section:smoothed}

In this section, we perform a smoothed analysis of the error bound---as introduced earlier in~\Cref{def:eb}---in (two-player) zero-sum games. It is first instructive to point out why smoothed analysis is useful in the first place: the modulus $\kappa$ can be arbitrarily close to $0$ even when $n = m = 3$ (that is, $3 \times 3$ games); this is detrimental as the iteration complexity of algorithms such as $\ogda$ grows as a polynomial in $1/\kappa$.

\begin{proposition}
    \label{prop:ill-cond}
    There exists a $3 \times 3$ zero-sum game such that $\kappa$ per~\Cref{def:eb} is arbitrarily close to $0$.
\end{proposition}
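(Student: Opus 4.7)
The plan is to exhibit a concrete one-parameter family of $3\times 3$ zero-sum games for which the error-bound modulus $\kappa$ shrinks linearly with the parameter. Concretely, I would take
\[ \mat{A}_\epsilon \defeq \begin{pmatrix} 0 & \epsilon & 0 \\ 0 & -\epsilon & 0 \\ 0 & 0 & 0 \end{pmatrix}, \]
for which the bilinear form simplifies to $\vx^\top \mat{A}_\epsilon \vy = \epsilon\,\vy_2(\vx_1-\vx_2)$. A straightforward best-response computation then gives $\max_{\vy' \in \Delta^3} \vx^\top \mat{A}_\epsilon \vy' = \max\{0, \epsilon(\vx_1-\vx_2)\}$ and $\min_{\vx' \in \Delta^3} (\vx')^\top \mat{A}_\epsilon \vy = -\epsilon \vy_2$, so the duality gap reads $\Phi(\vz) = \max\{0,\epsilon(\vx_1-\vx_2)\} + \epsilon \vy_2$ and the equilibrium set is $\cZstar = \{(\vx, \vy) \in \cZ : \vx_1 \leq \vx_2 \text{ and } \vy_2 = 0\}$. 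The crucial point is that $\cZstar$ is independent of $\epsilon$---it is the geometry that persists while the payoffs vanish.

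To finish, I would instantiate the witness point $\vz \defeq ((1,0,0),(0,0,1)) \in \cZ$. Then $\Phi(\vz) = \epsilon$. Since the $\vy$-component already satisfies $\vy_2 = 0$, computing $\dist(\vz, \cZstar)$ reduces to projecting $\vx = (1,0,0)$ onto $\{\vx \in \Delta^3 : \vx_1 \leq \vx_2\}$; a short KKT (or symmetry) argument shows the projection equals $(1/2, 1/2, 0)$, giving $\|\vz - \proj_{\cZstar}(\vz)\| = 1/\sqrt{2}$. Combining these with~\eqref{eq:eb} forces $\kappa \leq \sqrt{2}\,\epsilon$, which tends to $0$ as $\epsilon \to 0^+$, yielding the claim.

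There is no real technical obstacle here; the only step warranting any care is the projection computation, which is routine. If one further wanted the example to possess a \emph{unique} equilibrium, matching the ``generic'' viewpoint emphasized just after \Cref{def:eb}, I would perturb $\mat{A}_\epsilon$ by an $O(\epsilon^2)$ term chosen to collapse $\cZstar$ to a singleton; this leaves the leading $\Theta(\epsilon)$ asymptotics of $\kappa$ intact and therefore does not affect the conclusion.
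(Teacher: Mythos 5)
Your proof is correct and follows essentially the same strategy as the paper: exhibit an explicit one-parameter family of $3\times 3$ matrices whose entries scale with the parameter, together with a witness point whose duality gap shrinks linearly in the parameter while its distance to the equilibrium set stays bounded away from zero. The paper's example uses $\diag(\gamma, 2\gamma, 1)$, which has the minor cosmetic advantage of a unique equilibrium; your example produces a non-singleton $\cZstar$, which \Cref{def:eb} accommodates via the projection operator, and your projection computation (yielding $(1/2,1/2,0)$ and hence $\kappa \leq \sqrt{2}\,\epsilon$) checks out.
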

In proof, it is enough to consider the ill-conditioned diagonal matrix
\begin{equation}
    \label{eq:ill-cond}
    \mat{A} = \begin{pmatrix}
        \gamma & 0 & 0 \\
        0 & 2 \gamma & 0 \\
        0 & 0 & 1
    \end{pmatrix},
\end{equation}
where $0 < \gamma \ll 1$. The (unique) equilibrium of~\eqref{eq:ill-cond} reads $\vxstar = \vystar = \frac{1}{3 + 2\gamma} ( 2, 1, 2\gamma) \in \Delta^3$. Now, considering $\vx = (1, 0, 0)$ and $\vy = (0, 0, 1)$, for the duality gap we have $\Phi(\vx, \vy) = \gamma$, while the distance of $(\vx, \vy)$ from the optimal solution $(\vxstar, \vystar)$ is at least $\frac{3}{3 + 2 \gamma}$. In turn, by~\Cref{def:eb}, this means that $\kappa \leq 2 \gamma$. So, \Cref{prop:ill-cond} follows by taking $\gamma \to 0$.\footnote{ If we want to specify the game with a (finite) number of $L$ bits, \Cref{prop:ill-cond} tells us that the modulus $\kappa$ can be exponentially small in $L$.}

\Cref{prop:ill-cond} exposes one type of pathology that can decelerate gradient-based algorithms, which is evidently related to the poor spectral properties of the payoff matrix. This intuition is quite helpful when equilibria are fully supported---as is the case in~\eqref{eq:ill-cond}---but has to be significantly refined more broadly, as we formalize in the sequel.

To sidestep such pathological examples, we thus turn to the smoothed analysis framework of~\Cref{def:Gauss-perturb}.

\subsection{Overview}
\label{sec:technical}

The most natural approach to analyze the error bound in the smoothed complexity model is to rely on an existing (worst-case) analysis proving that a positive $\kappa$ exists, and then attempt to refine that analysis. Yet, at least based on such prior results we are aware of, that turns out to be challenging. As an example, let us consider the recent analysis of~\citet{Wei21:Linear}. As we explain in more detail in~\Cref{sec:proof-main}, \citet{Wei21:Linear} relate the modulus $\kappa$ of the error bound to the (inverse of the) norm of a solution to a certain feasible linear program; the existence of a legitimate $\kappa > 0$ then follows readily from feasibility. Now, this reduction seems quite promising: \citet{Renegar94:Some} has shown that the norm of a solution to a linear program can be bounded in terms of its \emph{condition number}---the distance to infeasibility in our case, and~\citet{Dunagan11:Smoothed} later proved that the condition number of linear programs is polynomial in the smoothed complexity model. Nevertheless, there are some difficulties in materializing that argument. First, the induced linear program involves terms depending on both the payoff matrix and the geometry of the constraints (the probability simplex in our case). Consequently, the analysis of~\citet{Dunagan11:Smoothed} does not carry over since randomness is only injected into the payoff matrix. The second and more important obstacle is that the induced linear program depends on the optimal solution, which in turn depends on the randomness of the payoff matrix; this significantly entangles the underlying distribution. As there are exponentially many possible configurations, we cannot afford to argue about each one separately and then apply the union bound. This difficulty is in fact known to be the crux in performing smoothed analysis~\citep{Spielman04:Smoothed}.\footnote{This is not a concern in the \emph{unconstrained} setting, where $\cX = \R^n$ and $\cY = \R^m$, in which a polynomial smoothed complexity follows readily from existing results relating the convergence of $\ogda$ or $\egda$ to the condition number of the payoff matrix $\mat{A}$ (\emph{e.g.},~\citep{Mokhtari20:A,Li23:Nesterov,Azizian20:Accelerating}), which in turn is well-known to be polynomial in the smoothed complexity model~\citep{Spielman04:Smoothed}.}

To address those challenges, we provide a new characterization of the error bound in terms of some natural quantities of the underlying game (\Cref{theorem:characterization}), which in some sense capture the difficulty of the problem. 
We are then able to use a technique due to~\citet{Spielman04:Smoothed}, exposed in~\Cref{sec:smoothed-geo}, to bound the probability that each of the involved quantities is close to $0$ (\Cref{prop:alpha,prop:beta,prop:gamma}), even though the underlying distribution is quite convoluted. The resulting analysis follows the one given by~\citet{Spielman03:Smoothed} in the context of termination of linear programs, but still has to account for a number of structural differences.

In what follows, we structure our argument as follows. First, in~\Cref{sec:geometric}, we relate the modulus $\kappa$ to some natural quantities capturing key geometric features of the problem. \Cref{sec:smoothed-geo} then proceed by analyzing those quantities in the smoothed analysis framework.

\subsection{Characterization of the error bound}
\label{sec:geometric}

Our first goal is to characterize the error bound in terms of certain natural quantities, which will then enable us to provide polynomial error bounds in the smoothed complexity model. Our only assumption here is that the zero-sum game is \emph{non-degenerate}, in the sense of~\Cref{def:nondegen} below; this can always be met with the addition of an arbitrarily small amount of noise (\Cref{fact:unique}). As such, our characterization here has an interest beyond the smoothed analysis framework, casting the error bound in terms of more interpretable game-theoretic quantities; for example, a concrete implication is given in \Cref{sec:parameterized}. 

Let us denote by $v$ the \emph{value} of game~\eqref{eq:zero-sum}, that is,
\begin{equation*}
    v = \min_{\vx \in \cX} \max_{\vy \in \cY} \langle \vx, \mat{A} \vy \rangle = \max_{\vy \in \cY} \min_{\vx \in \cX}  \langle \vx, \mat{A} \vy \rangle,
\end{equation*}
which is a consequence of the minimax theorem~\citep{vonNeumann28:Zur}. We are now ready to state the formal definition of a non-degenerate game.

\begin{definition}[Non-degenerate game]
    \label{def:nondegen}
    A zero-sum game described with a payoff matrix $\mat{A}$ and value $v$ is said to be \emph{non-degenerate} if it admits a unique equilibrium $(\vxstar, \vystar) \in \cZ$, and $\vxstar$ and $\vystar$ make tight exactly $n$ of the inequalities $\{ \vx_i \geq 0 \}_{i \in [n]} \cup \{ \langle \vx, \mat{A}_{:, j} \rangle \leq v \}_{j \in [m]}$ and $m$ of the inequalities $\{ \vy_j \geq 0 \}_{j \in [m]} \cup \{ \langle \vy, \mat{A}_{i, :} \rangle \geq v \}_{i \in [n]}$, respectively.
\end{definition}

In the sequel, we will make constant use of the fact that the set of degenerate games has measure zero under the law induced by~\Cref{def:Gauss-perturb} (\Cref{fact:unique}).

In this context, we let $\suppx(\vxstar) \defeq \{ i \in [n] : \vxstar_i > 0 \}$ denote the \emph{support} of $\vxstar$ (corresponding to Player $x$), and similarly $\suppy(\vystar) \defeq \{ j \in [m] : \vystar_j > 0 \}$ for the support of Player $y$. The strict complementarity theorem~\citep{Ye11:Interior} tells us that $\suppx$ indexes exactly the set of tight inequalities $\{ \langle \vy, \mat{A}_{i, :} \rangle \geq v \}_{i \in [n]}$, and symmetrically, $\suppy$ indexes exactly the set of tight inequalities $\{ \langle \vx, \mat{A}_{:, j} \rangle \leq v \}_{j \in [m]}$. In particular, this implies that $|\suppx| = |\suppy|$ with probability $1$. It will also be convenient to define $\barsuppx \defeq  [n] \setminus \suppx$ and $\barsuppy \defeq [m] \setminus \suppy$.

Now, at a high level, one can split solving a zero-sum game into two subproblems: i) identifying the support of the equilibrium, and ii) solving the induced \emph{linear system} to specify the exact probabilities within the support. It will be helpful to have that viewpoint in mind in the upcoming analysis, and in particular in the proof of~\Cref{theorem:characterization}. Roughly speaking, thinking of $\kappa$ as a measure of the problem's difficulty, we will relate $\kappa$ to i) the difficulty of identifying the support of the equilibrium, and ii) the difficulty of solving the induced linear system. To be clear, those two subproblems are only helpful for the purpose of the analysis, and they are certainty intertwined when using algorithms such as $\ogda$.

Staying on the latter task, we will make use of a certain transformation so as to eliminate one of the redundant variables. Namely, for any $\hatvx_{\suppx} \in \Delta(\suppx)$ and $\hatvy_{\suppy} \in \Delta(\suppy)$, let us select a fixed pair of coordinates $(i, j) \in \suppx \times \suppy$ (for example, the ones with the smallest index). Using the fact that $\langle \hatvx_{\suppx}, \vec{1} \rangle = 1$ and $\langle \hatvy_{\suppy}, \vec{1} \rangle = 1$, we can eliminate $\hatvx_i$ and $\hatvy_j$ by writing
\begin{equation}
    \label{eq:Q-trans}
    \langle \hatvx_{\suppx}, \mat{A}_{\suppx, \suppy} \hatvy_{\suppy} \rangle = \langle \tilx, \mat{Q} \tily \rangle - \langle \tilx, \vec{c} \rangle - \langle \tily, \vec{b} \rangle + d,
\end{equation}
where $\tilx \in \R_{\geq 0}^{\tilsuppx}, \tily \in \R_{\geq 0}^{\tilsuppy}$ (for $\tilsuppx \defeq \suppx \setminus \{i\}$ and $\tilsuppy \defeq \suppy \setminus \{j\}$) coincide with $\hatvx_{\suppx}$ and $\hatvy_{\suppy}$ on all coordinates in $\tilsuppx$ and $\tilsuppy$, respectively, and $\mat{A}^\flat_{\suppx, \suppy} = \mat{T} ( \mat{Q}^\flat, \vec{b}, \vec{c}, d)$ for a (non-singular) linear transformation $\mat{T} \in \R^{(\suppx \suppy)\times(\suppx \suppy)}$. (We spell out the exact definition of $\mat{T}$ later in~\Cref{sec:proofs-char}, as it is not important for our purposes here; it follows by simply writing $\hatvx_i = 1 - \langle \tilx, \vec{1} \rangle $ and $\hatvy_j = 1 - \langle \tily, \vec{1} \rangle$.) The point of transformation~\eqref{eq:Q-trans} is that, by eliminating one of the redundant variables, there is a convenient characterization of the equilibrium (\Cref{claim:equi-char}); namely, $\mat{Q} \vystar = \vec{c}$ and $\mat{Q}^\top \vxstar = \vec{b}$.

We are now ready to introduce the key quantities upon which our characterization relies on. It turns out that those are analogous to the ones considered by~\citet{Spielman03:Smoothed} in the context of analyzing the termination of linear programs; this is not coincidental, as our analysis was especially targeted to do so.

\begin{definition}
    \label{def:quant}
    Let $\mat{A}$ be the payoff matrix of a non-degenerate game, $(\vxstar, \vystar) \in \cZ$ the unique equilibrium, and $\suppx \subseteq [n], \suppy \subseteq [m]$ the support of $\vxstar$ and $\vystar$ respectively. We introduce the following quantities.
    \begin{enumerate}
    \item $\alpha_P(\mat{A}) \defeq \min_{i \in \suppx} ( \vxstar_i)$ and $\alpha_D(\mat{A}) \defeq \min_{j \in \suppy} (\vystar_j)$; \label{item:alpha}
    \item $\beta_P(\mat{A}) \defeq \min_{j \in \barsuppy} ( v - \langle \vxstar_{\suppx}, \mat{A}_{\suppx, j} \rangle )$ and $\beta_D(\mat{A}) \defeq \min_{ i \in \barsuppx} ( \langle \mat{A}_{i, \suppy}, \vystar_{\suppy} \rangle - v )$; \label{item:beta}and
    \item $\gamma_P(\mat{A}) \defeq \min_{ j \in \tilsuppy} \dist( \mat{Q}_{:, j}, \spa(\mat{Q}_{:, \tilsuppy-j}))$ and $\gamma_D(\mat{A}) \defeq \min_{ i \in \tilsuppx} \dist( \mat{Q}_{i, :}, \spa(\mat{Q}_{\tilsuppx-i, :}))$, where we use the shorthand notation $\tilsuppx - i \defeq \tilsuppx \setminus \{ i \}$ ($\tilsuppy - j \defeq \tilsuppy \setminus \{j\}$), and $\mat{Q} = \mat{Q}(\mat{A})$ is defined in~\eqref{eq:Q-trans}.\label{item:gamma}
\end{enumerate}
\end{definition}
(Above, we adopt the convention that if a minimization problem is with respect to an empty set, the minimum is to be evaluated as $1$.)

\Cref{item:gamma} above will enable us to control the norm of solutions to any linear system induced by $\mat{Q}$, as we explain in the sequel. Our proof will actually rely on a slightly different matrix, which we call $\barQ$; the lemma below relates the geometry of $\barQ$ to $\mat{Q}$, and reassures us that the condition number of 
$\barQ$ cannot be far from that of $\mat{Q}$ so long as $1 - \sum_{j \in \tilsuppy} \vystar_j \geq \alpha_D(\mat{A})$ (by~\Cref{item:alpha}) is not too close to $0$. (A symmetric statement holds when focusing on Player $y$.)

\begin{restatable}{lemma}{barQlem}
    \label{lemma:barQ}
    Let $\vec{c} = \mat{Q} \tilystar = \sum_{j \in \tilsuppy} \tilystar_j \mat{Q}_{:, j}$, and suppose that $\barQ \in \R^{\tilsuppx \times \tilsuppy}$ is such that its $j$th column is equal to $\mat{Q}_{:, j} - \vec{c}$. Then,
    \[
    \min_{ j \in \tilsuppy} \dist( \mat{Q}_{:, j}, \spa(\mat{Q}_{:, \tilsuppy - j})) \leq \left( 1 + \frac{|\tilsuppy|}{ 1 - \sum_{j \in \tilsuppy} \vystar_j } \right) \min_{ j \in \tilsuppy} \dist( \barQ_{:, j}, \spa(\barQ_{:, \tilsuppy - j})).
    \]
\end{restatable}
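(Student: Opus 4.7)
My plan is to reduce both sides of the inequality to a single linear-algebraic quantity and then relate them by an explicit change of variables. The key observation is the identity
\[
\min_{j \in \tilsuppy} \dist(\mat{M}_{:,j}, \spa(\mat{M}_{:, \tilsuppy - j})) = \min_{\vec{\nu} \neq \vec{0}} \frac{\|\mat{M}\vec{\nu}\|}{\|\vec{\nu}\|_\infty},
\]
valid for any matrix $\mat{M}$ whose columns are indexed by $\tilsuppy$. The direction $(\geq)$ holds because any $\vec{\nu}$ with $\vec{\nu}_j = 1$ and $\vec{\nu}_k = -\mu_k$ for $k \neq j$ gives $\mat{M}\vec{\nu} = \mat{M}_{:,j} - \sum_{k \neq j} \mu_k \mat{M}_{:,k}$ and has $\|\vec{\nu}\|_\infty \geq 1$; the direction $(\leq)$ follows by rescaling so that the maximum-magnitude coordinate of $\vec{\nu}$ equals $1$. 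Applying this identity to both $\mat{Q}$ and $\barQ$, it suffices to compare the two minimum ratios.

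The linking step uses $\barQ = \mat{Q} - \vec{c}\,\vec{1}^\top$ together with $\vec{c} = \mat{Q}\tilystar$ to write, for every $\vec{\mu}$,
\[
\barQ\vec{\mu} = \mat{Q}\vec{\mu} - (\vec{1}^\top \vec{\mu})\mat{Q}\tilystar = \mat{Q}\bigl(\vec{\mu} - (\vec{1}^\top \vec{\mu})\tilystar\bigr) = \mat{Q}\vec{\nu},
\]
where $\vec{\nu} \defeq \vec{\mu} - (\vec{1}^\top \vec{\mu})\tilystar$. Left-multiplying this definition by $\vec{1}^\top$ gives $\vec{1}^\top \vec{\nu} = (1 - \sum_{j \in \tilsuppy} \tilystar_j)\,\vec{1}^\top \vec{\mu}$; since non-degeneracy guarantees that the coordinate of $\vystar$ eliminated by transformation~\eqref{eq:Q-trans} is strictly positive (so $1 - \sum_{j \in \tilsuppy} \tilystar_j > 0$), the relation is invertible and yields $\vec{\mu} = \vec{\nu} + \frac{\vec{1}^\top \vec{\nu}}{1 - \sum_{j \in \tilsuppy} \tilystar_j}\,\tilystar$. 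In particular, $\vec{\mu} \neq \vec{0}$ if and only if $\vec{\nu} \neq \vec{0}$, so both minimizations remain well-posed under the change of variables.

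The final step is an elementary triangle inequality: using $|\vec{1}^\top \vec{\nu}| \leq |\tilsuppy| \cdot \|\vec{\nu}\|_\infty$ and $\|\tilystar\|_\infty \leq 1$, the explicit formula above gives $\|\vec{\mu}\|_\infty \leq \bigl(1 + |\tilsuppy|/(1 - \sum_{j \in \tilsuppy} \tilystar_j)\bigr) \|\vec{\nu}\|_\infty$. Picking $\vec{\mu}^\star$ to minimize $\|\barQ\vec{\mu}\|/\|\vec{\mu}\|_\infty$ (a minimizer exists by compactness after normalizing $\|\vec{\mu}\|_\infty = 1$) and setting $\vec{\nu}^\star$ to be its image under the change of variables,
\[
\min_{\vec{\nu} \neq \vec{0}} \frac{\|\mat{Q}\vec{\nu}\|}{\|\vec{\nu}\|_\infty} \leq \frac{\|\mat{Q}\vec{\nu}^\star\|}{\|\vec{\nu}^\star\|_\infty} = \frac{\|\barQ\vec{\mu}^\star\|}{\|\vec{\nu}^\star\|_\infty} \leq \left(1 + \frac{|\tilsuppy|}{1 - \sum_{j \in \tilsuppy} \tilystar_j}\right)\frac{\|\barQ\vec{\mu}^\star\|}{\|\vec{\mu}^\star\|_\infty},
\]
closing the argument. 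The proof is essentially elementary linear algebra; I do not anticipate any genuine obstacle beyond the conceptual reformulation of distance-to-span as $\|\mat{M}\vec{\nu}\|/\|\vec{\nu}\|_\infty$, which converts the problem into a clean $\ell_\infty$-comparison under the explicit affine map induced by $\barQ = \mat{Q} - \vec{c}\,\vec{1}^\top$.
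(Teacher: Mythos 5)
Your proof is correct and takes a cleaner route than the paper's. The paper picks the index $j'$ minimizing the distance for $\barQ$, writes the near-dependence relation $\barQ_{:,j'} = \sum_{j \ne j'} \vrho_j \barQ_{:,j} + \epsilon\vec{r}$, re-expresses it as a combination of columns of $\mat{Q}$ with coefficients $\phi_j$, and argues by contradiction that $\max_j |\phi_j| \ge \epsilon'$ before dividing through. You package the same underlying idea into the reusable identity $\min_j \dist(\mat{M}_{:,j}, \spa(\mat{M}_{:,\tilsuppy - j})) = \min_{\vec{\nu} \neq \vec{0}} \|\mat{M}\vec{\nu}\|/\|\vec{\nu}\|_\infty$, together with the observation that $\barQ = \mat{Q}(\mat{I} - \tilystar\vec{1}^\top)$ induces an invertible linear change of variables $\vec{\mu} \leftrightarrow \vec{\nu} = \vec{\mu} - (\vec{1}^\top\vec{\mu})\tilystar$ satisfying $\barQ\vec{\mu} = \mat{Q}\vec{\nu}$; the constant in the lemma then falls out of a one-line $\ell_\infty$ estimate $\|\vec{\mu}\|_\infty \le (1 + |\tilsuppy|/(1-\sum_j\tilystar_j))\,\|\vec{\nu}\|_\infty$, using that $1-\sum_{j\in\tilsuppy}\tilystar_j$ equals the eliminated coordinate of $\vystar$ and is therefore strictly positive. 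The $\ell_\infty$-ratio reformulation is what gives the argument its modularity: both sides of the target inequality become instances of the same variational quantity, and the whole proof reduces to comparing $\|\vec{\mu}\|_\infty$ with $\|\vec{\nu}\|_\infty$, replacing the paper's contradiction step and explicit bookkeeping of the $\phi_j$'s with a direct bound.
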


Next, we recall a fairly standard bound relating the magnitude of a solution to a linear system $\tilx = \mat{M} \vec{p}$ with the smallest singular value of a full-rank matrix $\mat{M}$.

\begin{restatable}{lemma}{sminlem}
    \label{lemma:pnorm}
    Let $\mat{M} \in \R^{d \times d}$ be a full-rank matrix. For any $\tilx \in \R^{d}$ there is $\vp \in \R^{d}$ with $\|\vp\| \leq \frac{1}{\smin(\mat{M})} \|\tilx \|$ such that
    \begin{equation*}
        \tilx = \mat{M} \vec{p} = \sum_{j =1}^d \vp_j \mat{M}_{:, j}.
    \end{equation*}
\end{restatable}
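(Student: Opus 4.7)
The plan is to spell out a one-line textbook argument, since the lemma is a standard linear algebra fact applied to a full-rank square matrix. First I would observe that, because $\mat{M} \in \R^{d \times d}$ is full-rank, it is invertible, so the equation $\tilx = \mat{M} \vp$ admits the unique solution $\vp \defeq \mat{M}^{-1} \tilx$. The final expression $\mat{M} \vp = \sum_{j=1}^{d} \vp_j \mat{M}_{:,j}$ stated in the lemma is then just the definition of matrix-vector multiplication written column by column, so no extra work is needed to produce that form.

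To control $\|\vp\|$, I would invoke the operator-norm inequality
\[
\|\vp\| = \|\mat{M}^{-1} \tilx\| \leq \|\mat{M}^{-1}\| \cdot \|\tilx\|,
\]
where $\|\cdot\|$ denotes the spectral norm as in the notation section. The proof then reduces to the identity $\|\mat{M}^{-1}\| = 1/\smin(\mat{M})$, which follows immediately from the singular value decomposition: writing $\mat{M} = \mat{U} \mat{\Sigma} \mat{V}^\top$ with $\mat{\Sigma} = \diag(\sigma_1(\mat{M}), \dots, \sigma_d(\mat{M}))$ and $\sigma_d(\mat{M}) > 0$, one has $\mat{M}^{-1} = \mat{V} \mat{\Sigma}^{-1} \mat{U}^\top$, whose largest singular value is $1/\sigma_d(\mat{M}) = 1/\smin(\mat{M})$.

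There is no real obstacle in this argument. The only point I would make sure to highlight explicitly is that the full-rank hypothesis on $\mat{M}$ is precisely what guarantees both that $\smin(\mat{M}) > 0$ (so the bound is meaningful) and that $\mat{M}^{-1}$ exists (so a solution $\vp$ is well-defined). One could alternatively prove the lemma without explicitly inverting $\mat{M}$, by taking the SVD and constructing $\vp = \mat{V} \mat{\Sigma}^{-1} \mat{U}^\top \tilx$ directly, but the two approaches are identical up to notation.
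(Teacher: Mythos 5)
Your proof is correct and takes essentially the same approach as the paper: the paper also constructs $\vp = \mat{V}\mat{\Sigma}^{-1}\mat{U}^\top\tilx$ via the SVD and bounds $\|\vp\| \leq \|\mat{V}\|\,\|\mat{\Sigma}^{-1}\|\,\|\mat{U}^\top\|\,\|\tilx\| \leq \smin(\mat{M})^{-1}\|\tilx\|$, exactly the route you identify as equivalent to inverting $\mat{M}$ and bounding by the operator norm.
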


Moreover, to connect~\Cref{lemma:pnorm} with $\gamma_P(\mat{A})$, we observe that the smallest singular value can also be lower bounded in terms of the smallest distance of a column from the linear space spanned by the rest of the columns---which now matches the expression of \Cref{item:gamma} we saw earlier. In particular, we will make use of the so-called negative second moment identity~\citep{Tao10:Random} (\Cref{prop:neg-id}), which implies that
\begin{equation}
    \label{eq:lb-sigma}
    \smin(\barQ) \geq \sqrt{\frac{1}{ \sum_{j \in \tilsuppy } \dist^{-2}( \barQ_{:, j} ,  \spa( \barQ_{:, \tilsuppy - j} ) ) }} \geq \frac{1}{\sqrt{|\tilsuppy|}} \min_{j \in \tilsuppy} \dist(\barQ_{:, j} ,  \spa( \barQ_{:, \tilsuppy-j} )).
\end{equation}
\Cref{prop:neg-id} also implies that $\gamma_D(\mat{A}) \geq \frac{1}{\sqrt{|\tilsuppx|}} \gamma_P(\mat{A})$, and so it will suffice to lower bound $\gamma_P(\mat{A})$ in the sequel. We are now ready to proceed with the main result of this subsection. Below, we use the notation ``$\gtrsim$'' to suppress lower-order terms and absolute constants.

\begin{restatable}{theorem}{mainchar}
    \label{theorem:characterization}
    Let $\mat{A}$ be a non-degenerate payoff matrix, and suppose that $(\alpha_P(\mat{A}), \alpha_D(\mat{A}))$, $(\beta_P(\mat{A}), \beta_D(\mat{A}))$ and $(\gamma_P(\mat{A}), \gamma_D(\mat{A}))$ are as in~\Cref{def:quant}. Then, the error bound (\Cref{def:eb}) is satisfied for any sufficiently small modulus
    \[
    \kappa \gtrsim \frac{1}{ \|\mat{A}^\flat\|_\infty} \frac{1}{\min(n, m)^3} \min \left\{ ( \alpha_D(\mat{A}))^2 \beta_D(\mat{A})\gamma_P(\mat{A}), ( \alpha_P(\mat{A}))^2 \beta_P(\mat{A}) \gamma_D(\mat{A}) \right\}.
    \]
\end{restatable}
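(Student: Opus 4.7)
The plan is to establish the error bound by decomposing the discrepancy $\vz - \vzstar$ into its \emph{on-support} and \emph{off-support} pieces, relative to $\suppx$ and $\suppy$, and then lower-bounding $\Phi(\vz)$ by each piece separately. Writing $\errx = \vx - \vxstar$ and $\erry = \vy - \vystar$, the triangle inequality together with $\|\cdot\| \leq \|\cdot\|_1$ gives $\|\vz - \vzstar\| \leq \|\errx_{\suppx}\|_1 + \|\errx_{\barsuppx}\|_1 + \|\erry_{\suppy}\|_1 + \|\erry_{\barsuppy}\|_1$, so it suffices to control $\Phi$ from below by each of the four terms on the right up to a single universal modulus.

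The off-support mass is treated by a one-line exploitation argument. Using $\Phi(\vx,\vy) \geq \langle \vx, \mat{A}\vystar\rangle - \langle \vxstar, \mat{A}\vy\rangle$, together with the complementary-slackness facts $(\mat{A}\vystar)_i = v$ for $i \in \suppx$ and $(\mat{A}\vystar)_i \geq v + \beta_D(\mat{A})$ for $i \in \barsuppx$ (and symmetrically for player $y$), one obtains
\[
\Phi(\vx,\vy) \;\geq\; \beta_D(\mat{A})\,\|\errx_{\barsuppx}\|_1 \;+\; \beta_P(\mat{A})\,\|\erry_{\barsuppy}\|_1,
\]
which contributes the $\beta$ factors to the final bound and is the easy step.

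The on-support mass is the heart of the argument and proceeds through the linear-system characterization recalled after \eqref{eq:Q-trans}: at equilibrium $\mat{Q}\tilystar = \vec{c}$ and $\mat{Q}^\top \tilxstar = \vec{b}$, so deviations of $\tily - \tilystar$ and $\tilx - \tilxstar$ are bounded by inverting $\mat{Q}$ and $\mat{Q}^\top$ respectively. Concretely, since $(\mat{A}^\top \vxstar)_j = v$ for $j \in \suppy$, one has $(\mat{A}^\top \vx)_j - v = \mat{A}_{\suppx, j}^\top \errx_{\suppx} + \mat{A}_{\barsuppx, j}^\top \errx_{\barsuppx}$, whose second summand is $\ell_\infty$-dominated by $\Amax\,\|\errx_{\barsuppx}\|_1$ and therefore already controlled by step one. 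Since $\Phi(\vx,\vy) \geq \max_{j \in \suppy}[(\mat{A}^\top \vx)_j - v]$, this exhibits $\Phi$, up to the controlled cross-term, as the $\ell_\infty$ norm of the residual of the on-support linear system. Inverting requires three ingredients: \Cref{lemma:pnorm} applied to the square matrix $\barQ$ turns a residual bound into an $\ell_2$ bound on the solution at the cost of $1/\smin(\barQ)$; \eqref{eq:lb-sigma} bounds $\smin(\barQ)$ from below by the minimum column-to-span distance of $\barQ$ divided by $\sqrt{|\tilsuppy|}$; and \Cref{lemma:barQ} relates those distances back to $\gamma_P(\mat{A})$, paying a factor of order $|\tilsuppy|/(1 - \sum_j \vystar_j) \leq |\tilsuppy|/\alpha_D(\mat{A})$. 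A second factor of $\alpha_D(\mat{A})$ enters when recovering the pivot coordinate's deviation from $\tilx - \tilxstar$ via the normalization constraint, and when converting $\ell_2$ bounds on the $\barQ$-solution back into simplex-feasible $\ell_1$ perturbations of $\errx_{\suppx}$. Running the same argument on the dual side yields the companion factors $(\alpha_P(\mat{A}))^2 \beta_P(\mat{A}) \gamma_D(\mat{A})$, and taking the minimum of the two produces the stated bound.

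The main obstacle is the bookkeeping in the on-support step: the residual of the $\mat{Q}$-system is not literally $\Phi$ but differs from it by cross-terms---such as $\mat{A}_{\barsuppx,\suppy}^\top \errx_{\barsuppx}$ and the deviations arising from the $\mat{T}$ and $\barQ$ substitutions---that couple on-support and off-support deviations, and these must be absorbed into the $\beta$-controlled bounds of step one without degrading the final modulus. The aggregate losses---the $\sqrt{|\tilsuppy|}$ in \eqref{eq:lb-sigma}, the dimensional factor in \Cref{lemma:barQ}, norm conversions between $\ell_1$, $\ell_2$ and $\ell_\infty$, and the Lipschitz factor $\Amax$ of the bilinear form---multiply out to the $\Amax^{-1} \cdot \min(n,m)^{-3}$ prefactor in the stated modulus.
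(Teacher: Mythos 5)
Your skeleton matches the paper's: decompose the discrepancy into on-support and off-support pieces, control the off-support piece via the $\beta$-quantities, and control the on-support piece by inverting the $\barQ$-system through \Cref{lemma:pnorm}, \eqref{eq:lb-sigma}, and \Cref{lemma:barQ}. The off-support bound $\Phi(\vz) \geq \beta_D(\mat{A})\|\errx_{\barsuppx}\|_1 + \beta_P(\mat{A})\|\erry_{\barsuppy}\|_1$ is correct, and you are right that the cross-term $\mat{A}_{\barsuppx,j}^\top \errx_{\barsuppx}$ must be played off against it (the paper formalizes this via a two-case split on the on-support mass $\lambda$, which is equivalent to what you describe).

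There is, however, a genuine gap in the on-support step. You write that $\Phi$, up to cross-terms, \emph{is} the $\ell_\infty$ norm of the residual $\term_j \defeq \langle \tilx - \tilxstar, \mat{Q}_{:,j} - \vec{c}\rangle$, and you plan to invert that residual bound via $\smin(\barQ)$. But $\Phi$ only dominates the one-sided quantity $\max_{j \in \suppy}\term_j$, not $\max_j |\term_j|$, and these can differ drastically: the inversion $\|\term\| \geq \smin(\barQ)\|\tilx - \tilxstar\|$ gives a lower bound on the \emph{two-sided} norm, which says nothing about the \emph{signed} max that $\Phi$ actually controls (the residual could in principle have a large negative coordinate and a near-zero max). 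Bridging this is the crux of the paper's argument: it uses the balance identity $\sum_{j \in \suppy} \vystar_j \term_j = 0$, which follows from $\mat{Q}\tilystar = \vec{c}$, to deduce $\sum_j \max(0,\term_j) \geq -\alpha_D(\mat{A})\sum_j \min(0,\term_j)$, and then pairs this with the inner product $\sum_j \term_j \vec{p}_j = \|\tilx - \tilxstar\|^2$ (where $\barQ\vec{p} = \tilx - \tilxstar$) to obtain $\max_j \term_j \gtrsim \smin(\barQ)\bigl(1 + 1/\alpha_D(\mat{A})\bigr)^{-1}\|\tilx - \tilxstar\|$. This step is not merely bookkeeping; it is where one of the two $\alpha_D$-factors in the stated modulus originates (the other comes from \Cref{lemma:barQ}, as you correctly note). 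Your attribution of the second $\alpha_D$ to pivot recovery and $\ell_2$-to-$\ell_1$ conversions is therefore incorrect, and without the balance-identity argument the on-support inversion you describe does not actually produce a lower bound on $\Phi$.
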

It is enough to explain how to lower bound $\kappa > 0$ such that $\max_{\vy' \in \cY} \langle \vx, \mat{A} \vy' \rangle - v \geq \kappa \| \vx - \proj_{\cXstar}(\vx) \| = \kappa \|\vx - \vxstar\|$ for any $\vx \in \cX$. In a nutshell, our argument is divided based on the magnitude $\lambda \defeq \|\vx_{\suppx}\|$, which can be thought of as a measure of closeness from the support of the equilibrium. When $\lambda \ll 1$, which means that $\vx$ is still far from the support of the equilibrium, $\max_{\vy' \in \cY} \langle \vx, \mat{A} \vy' \rangle - v$ is governed by $\beta_D(\mat{A})$. In the contrary case, our basic strategy revolves around showing that the error bound can be treated as in the unconstrained case, which would then relate the modulus $\kappa$ to the smallest singular value of the underlying matrix (essentially by~\Cref{lemma:pnorm})---and subsequently to $\gamma_P(\mat{A})$ due to~\eqref{eq:lb-sigma}. Indeed, this turns out to be possible by working with matrix $\barQ$, as defined earlier in~\Cref{lemma:barQ}. We defer the precise argument to~\Cref{sec:proofs-char}.

\subsection{Smoothed analysis}
\label{sec:smoothed-geo}

Having established \Cref{theorem:characterization}, our next step is to show that each of the quantities introduced in~\Cref{def:quant} is unlikely to be too close to $0$ in the smoothed complexity model, which would then imply~\Cref{theorem:informal-kappa}. The main difficulty lies in the fact that each configuration that may arise depends on the support of the equilibrium, which in turn depends on the underlying randomization of $\mat{A}$, thereby significantly complicating the underlying distribution. Further, one cannot afford to argue about each configuration separately and then apply the union bound as there are too many possible configurations. To tackle this challenge, we follow the approach put forward by~\citet{Spielman03:Smoothed}.

In particular, given that all quantities of interest in~\Cref{theorem:characterization} depend on the support of the equilibrium, it is natural to proceed by partitioning the probability space over all possible supports, and then bound the worst possible one---that is, the one maximizing the probability we want to minimize. In doing so, the challenge is that one has to condition on the equilibrium having a given support (formally justified by~\Cref{prop:max-prob}). To argue about the induced probability density function upon such a conditioning, it is convenient to perform a change of variables from $\mat{A}$ to a new set of variables that now contains the equilibrium $(\vxstar, \vystar)$ (\Cref{lemma:change-variables}). The basic idea here is that since the event we condition on concerns the equilibrium, it is helpful to have that equilibrium being part of our set of variables. The induced probability density function is now quite complicated, but can still be analyzed using the following lemma.

\begin{restatable}[\nakedcite{Spielman03:Smoothed}]{lemma}{basicsmoothlem}
    \label{lemma:basic-smooth}
    Let $\rho$ be the probability density function of a random variable $X$. If there exist $\delta > 0$ and $c \in (0, 1]$ such that
    \begin{equation}
        \label{eq:smooth-den}
        0 \leq t \leq t' \leq \delta \implies \frac{\rho(t')}{\rho(t)} \geq c,
    \end{equation}
    then
    \[
    \pr[X \leq \epsilon \mid X \geq 0 ] \leq \frac{\epsilon}{c \delta}.
    \]
\end{restatable}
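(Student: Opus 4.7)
The plan is to bound the conditional probability by lower-bounding the denominator $\pr[X \geq 0]$ in terms of the numerator $\pr[0 \leq X \leq \epsilon]$, using hypothesis~\eqref{eq:smooth-den} to transfer mass from the interval $[0, \epsilon]$ onto later subintervals of $[0, \delta]$. At the outset I may assume $\epsilon \leq \delta$, since otherwise $\epsilon/(c\delta) \geq 1/c \geq 1$ and the bound is vacuous as the conditional probability is trivially at most $1$.

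The key observation is that for every $t \in [0, \epsilon]$ and every positive integer $k$ with $(k+1)\epsilon \leq \delta$, the pair $(t, t+k\epsilon)$ satisfies $0 \leq t \leq t + k\epsilon \leq \delta$, so hypothesis~\eqref{eq:smooth-den} yields $\rho(t + k\epsilon) \geq c\, \rho(t)$ pointwise. Integrating over $t \in [0, \epsilon]$ and changing variables gives, for each such $k$,
\begin{equation*}
\int_{k\epsilon}^{(k+1)\epsilon} \rho(u)\,du \;=\; \int_0^\epsilon \rho(t + k\epsilon)\,dt \;\geq\; c \int_0^\epsilon \rho(t)\,dt \;=\; c \, \pr[0 \leq X \leq \epsilon].
\end{equation*}
Summing over $k = 0, 1, \dots, K-1$ with $K \defeq \lfloor \delta/\epsilon \rfloor$ (the $k=0$ term is automatic since $c \leq 1$) and bounding $\pr[X \geq 0] \geq \int_0^{K\epsilon} \rho(u)\,du$, I obtain
\begin{equation*}
\pr[X \geq 0] \;\geq\; c\, K \, \pr[0 \leq X \leq \epsilon].
\end{equation*}
Rearranging yields $\pr[X \leq \epsilon \mid X \geq 0] \leq 1/(cK) \leq \epsilon/\bigl(c(\delta - \epsilon)\bigr)$, which is of the same order as the claimed bound $\epsilon/(c\delta)$. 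The exact stated constant can be recovered either by a trivial reduction to $\epsilon \leq c\delta/2$ (the claim being vacuous otherwise) or, more cleanly, by averaging the same pointwise comparison over a continuum of translation offsets rather than a discrete sum over disjoint blocks.

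The proof is primarily bookkeeping, and the only genuine subtlety is that \eqref{eq:smooth-den} is a pointwise hypothesis valid only on the window $[0, \delta]$: one cannot iterate the ratio bound past $\delta$, so the sum must be truncated at $K\epsilon \leq \delta$. This truncation is what costs the factor of roughly $\epsilon/\delta$ in the final estimate and is where the parameter $\delta$ enters the conclusion.
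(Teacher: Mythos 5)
The paper cites this lemma from Spielman and Teng (2003) without reproving it, so there is no in-paper proof to compare against. Your blocking argument captures the right idea---use the ratio hypothesis to transfer density from $[0,\epsilon]$ onto the rest of $[0,\delta]$---but it proves only $\pr[X\le\epsilon\mid X\ge0]\le\epsilon/(c(\delta-\epsilon))$, which is strictly weaker than the stated $\epsilon/(c\delta)$ for every $\epsilon>0$, and neither patch you propose closes the gap. The ``trivial reduction to $\epsilon\le c\delta/2$'' is incorrect: for $c\delta/2<\epsilon<c\delta$ the target bound $\epsilon/(c\delta)$ lies strictly in $(1/2,1)$ and hence is not vacuous, while your proven bound $\epsilon/(c(\delta-\epsilon))$ can exceed it. And averaging the window bound $\int_s^{s+\epsilon}\rho\ge c\,\pr[0\le X\le\epsilon]$ over a continuum of offsets $s\in[0,\delta-\epsilon]$ yields, after Fubini (each $u$ is covered by at most $\epsilon$ worth of offsets), $\frac{\epsilon}{\delta-\epsilon}\pr[X\ge0]\ge c\,\pr[0\le X\le\epsilon]$---exactly the same $\epsilon/(c(\delta-\epsilon))$.

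The fix is not to tile $[0,\delta]$ by $\epsilon$-blocks but to keep $[0,\epsilon]$ and $[\epsilon,\delta]$ separate. For any $t\in[0,\epsilon]$ and $t'\in[\epsilon,\delta]$ the hypothesis gives $\rho(t')\ge c\,\rho(t)$; averaging over $t$ gives $\rho(t')\ge\frac{c}{\epsilon}\pr[0\le X\le\epsilon]$, and integrating over $t'\in[\epsilon,\delta]$ gives $\pr[\epsilon\le X\le\delta]\ge\frac{c(\delta-\epsilon)}{\epsilon}\pr[0\le X\le\epsilon]$. Therefore
\[
\pr[X\ge0]\ \ge\ \pr[0\le X\le\epsilon]\left(1+\frac{c(\delta-\epsilon)}{\epsilon}\right)=\frac{c\delta+(1-c)\epsilon}{\epsilon}\,\pr[0\le X\le\epsilon]\ \ge\ \frac{c\delta}{\epsilon}\,\pr[0\le X\le\epsilon],
\]
where the last inequality uses $c\le1$. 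Rearranging gives $\pr[X\le\epsilon\mid X\ge0]\le\epsilon/(c\delta)$ exactly. The step your argument misses is that the mass on $[0,\epsilon]$ itself needs no factor of $c$, and the resulting surplus $(1-c)\epsilon$ in the denominator is precisely what the assumption $c\le1$ lets you discard.
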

In words, random variables whose density is smooth---in the sense of~\eqref{eq:smooth-den}---are unlikely to be too close to $0$. Gaussian random variables certainly have that property (\Cref{lemma:smooth-ratio}), but it is not confined to the Gaussian law; the analysis of~\citet{Spielman03:Smoothed}---and subsequently our result---is not tailored to the Gaussian case.

We are now ready to state our main results in the smoothed complexity model; the proofs are deferred to~\Cref{sec:proofs-smoothed}. We commence with $\beta_P(\mat{A})$, which is the easiest to analyze. In particular, the following result is a consequence of an anti-concentration bound with respect to a conditional Gaussian random variable (\Cref{lemma:smooth-Gauss}).

\begin{restatable}{proposition}{propbeta}
    \label{prop:beta}
    Let $\beta_P(\mat{A})$ be defined as in~\Cref{item:beta}. For any $\epsilon \geq 0$,
    \begin{equation*}
        \underset{\mat{A}}{\pr} \left[\beta_P(\mat{A}) \leq \frac{\epsilon}{5 \Amax } \right] \leq \epsilon \frac{e \min(n, m)^2}{ \sigma^2}.
    \end{equation*}
\end{restatable}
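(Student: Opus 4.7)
The plan follows the template sketched in \Cref{sec:technical}: partition the probability space by the support of the equilibrium (uniquely defined with probability $1$ by non-degeneracy), reduce via \Cref{prop:max-prob} to the worst-case support configuration $(B, N)$, and then apply the smooth-density anti-concentration of \Cref{lemma:smooth-Gauss} to a conditional Gaussian. Of the three families of quantities in \Cref{def:quant}, $\beta_P(\mat{A})$ is the most tractable because, after conditioning, the relevant random variables remain simple conditional Gaussians and no change-of-variables step is needed.

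Concretely, fix a candidate pair $(B, N) \subseteq [n] \times [m]$ with $|B| = |N| \leq \min(n, m)$ and condition on the event that the equilibrium's support is exactly $(B, N)$. On this event, $(\vxstar_B, \vystar_N, v)$ is fully determined by the submatrix $\mat{A}_{B, N}$, whereas for each $j \in [m] \setminus N$ the column $\mat{A}_{B, j}$ consists of Gaussian entries that are independent of $\mat{A}_{B, N}$. Consequently, conditional on $\mat{A}_{B, N}$, the quantity $Y_j \defeq v - \langle \vxstar_B, \mat{A}_{B, j} \rangle$ is Gaussian with variance $\sigma^2 \|\vxstar_B\|^2 \geq \sigma^2/\min(n, m)$, where the lower bound uses that $\vxstar_B$ is a probability vector (hence $\|\vxstar_B\|^2 \geq 1/|B|$ by Cauchy--Schwarz). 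Moreover, strict complementarity in a non-degenerate game forces $Y_j > 0$ for all $j \in [m] \setminus N$ on the conditioned event, and since the different $Y_j$ involve disjoint Gaussian columns, they are conditionally independent given $\mat{A}_{B, N}$, so this positivity conditioning decouples across $j$.

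With these ingredients, I would invoke \Cref{lemma:smooth-Gauss} on the conditional Gaussian $Y_j$ to bound $\pr[Y_j \leq t \mid Y_j > 0, \mat{A}_{B, N}]$ in terms of $t$ and the conditional standard deviation $\sigma \|\vxstar_B\| \geq \sigma/\sqrt{\min(n,m)}$. A union bound over the at-most-$m$ indices $j \in [m] \setminus N$ and the reduction over supports via \Cref{prop:max-prob} combine these per-coordinate estimates; substituting the normalized threshold $t = \epsilon/(5 \|\mat{A}^\flat\|_\infty)$ -- whose factor of $\|\mat{A}^\flat\|_\infty$ is there to make the inequality dimensionally consistent with the overall scale of $\mat{A}$ -- and carefully tracking the constants coming out of \Cref{lemma:smooth-Gauss} together with the variance lower bound yields the claimed $\epsilon \cdot e \min(n, m)^2 / \sigma^2$.

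The main obstacle, as is usual in smoothed analyses, is that the support of the equilibrium is a complicated function of $\mat{A}$: conditioning on it couples the entries of $\mat{A}$, and a naive union bound over the exponentially many possible supports would be prohibitive. This is precisely what \Cref{prop:max-prob} is designed to circumvent. Past this reduction, the analysis for $\beta_P$ is comparatively benign, because conditioning on $\mat{A}_{B, N}$ leaves the columns $\mat{A}_{B, j}$ for $j \in [m] \setminus N$ as free Gaussians, so the problem collapses to controlling the density of a conditional Gaussian near $0$ -- a one-shot application of \Cref{lemma:smooth-Gauss}.
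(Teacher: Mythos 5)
Your plan matches the paper's proof of \Cref{prop:beta} in all essential respects: reduce via \Cref{prop:max-prob} (the paper routes this through \Cref{lemma:change-variables}, but only as a bookkeeping device — as you note, no nontrivial Jacobian is needed for $\beta_P$), exploit conditional independence of the columns $\mat{A}_{B,j}$ for $j \in \barsuppy$ to decouple the positivity conditioning, observe that each $Y_j$ is a conditional Gaussian with variance at least $\sigma^2/|B| \geq \sigma^2/\min(n,m)$ by Cauchy--Schwarz, apply \Cref{lemma:smooth-Gauss} after rescaling so the mean has magnitude at most $1$, and union bound over $j$. The only thing left implicit in your sketch is the explicit rescaling bookkeeping that produces the $5\Amax$ denominator (the paper sets $\epsilon = \epsilon'(|v| + 4|\E[\vec{g}_j]|)$ and bounds this by $5\Amax$), but you correctly flagged where that factor comes from.
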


The analysis of $\gamma_P(\mat{A})$ is more challenging, and makes crucial use of~\Cref{lemma:basic-smooth}. As we alluded to earlier, a key step is to change variables from $\mat{A}_{\suppx, \suppy}$ to $(\mat{Q}, \vec{b}, \vec{c}, \cdot)$---in accordance with~\eqref{eq:Q-trans}---and then to $(\mat{Q}, \vxstar, \vystar, \cdot)$ based on $\mat{Q} \tilystar = \vec{c}$, $\mat{Q}^\top \tilxstar = \vec{b}$. It is important to note that $\mat{Q}$ no longer contains independent random variables even though $\mat{A}_{\suppx, \suppy}$ is (by~\Cref{def:Gauss-perturb}); this stems from the presence of a redundant variable in $\vxstar_{\suppx}$ (since $\langle \vxstar_{\suppx}, \vec{1} \rangle = 1$). Nevertheless, we can still overcome this issue using~\Cref{lemma:basic-smooth}, leading to the following bound.

\begin{restatable}{proposition}{propgamma}
    \label{prop:gamma}
    Let $\gamma_P(\mat{A})$ be defined as in~\Cref{item:gamma}. For any $\epsilon \geq 0$,
    \[
        \underset{\mat{A}}{\pr} \left[ \gamma_P(\mat{A}) \leq \frac{\epsilon}{ 4 \max_{j \in \tilsuppy} \|\mat{Q}_{:, j} \| + 20 \Amax  + 3} \right] \leq \epsilon \frac{4 e \min(n, m)^3}{\sigma^2}.
    \]
\end{restatable}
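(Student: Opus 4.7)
My plan is to follow the strategy of~\citet{Spielman03:Smoothed} outlined in~\Cref{sec:smoothed-geo}. First, by~\Cref{prop:max-prob} the event $\{\gamma_P(\mat{A}) \leq \epsilon'\}$ partitions along the equilibrium support $(\suppx, \suppy)$, so it suffices to bound $\pr[\gamma_P(\mat{A}) \leq \epsilon' \mid \text{support is } (\suppx, \suppy)]$ uniformly in $(\suppx, \suppy)$. Fix such a support with $|\suppx| = |\suppy| = r$. Entries of $\mat{A}$ outside the $\suppx \times \suppy$ block are independent Gaussians unused by $\gamma_P$, so I condition them out. On $\mat{A}_{\suppx, \suppy}$ I perform the two-stage change of variables prepared by~\Cref{lemma:change-variables}: first, the nonsingular linear map $\mat{T}$ of~\eqref{eq:Q-trans} reparameterizes $\mat{A}^\flat_{\suppx, \suppy}$ as $(\mat{Q}^\flat, \vec{b}, \vec{c}, d)$ with Jacobian $|\det \mat{T}|$; second, the equilibrium identities $\mat{Q} \tilystar = \vec{c}$ and $\mat{Q}^\top \tilxstar = \vec{b}$ of~\Cref{claim:equi-char} let me replace $(\vec{b}, \vec{c})$ by $(\tilxstar, \tilystar)$, with a new Jacobian equal to $(\det \mat{Q})^2$. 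Under this parameterization, conditioning on ``equilibrium support is $(\suppx, \suppy)$ with equilibrium $(\vxstar, \vystar)$'' amounts to fixing a subset of the new coordinates.

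Next, I fix $j \in \tilsuppy$ and additionally condition on $(\tilxstar, \tilystar)$ and on the columns $\mat{Q}_{:, \tilsuppy - j}$. The residual randomness lies in the column $\mat{Q}_{:, j}$, and the distance $\dist(\mat{Q}_{:, j}, \spa(\mat{Q}_{:, \tilsuppy - j}))$ coincides with $|\langle \mat{Q}_{:, j}, \vec{u} \rangle|$ for a unit vector $\vec{u}$ orthogonal to $\spa(\mat{Q}_{:, \tilsuppy - j})$, which is determined by the conditioning. The conditional density of $\langle \mat{Q}_{:, j}, \vec{u} \rangle$ is the product of a Gaussian factor (inherited from $\mat{A}_{\suppx, \suppy}$ through $\mat{T}^{-1}$) with the polynomial Jacobian $(\det \mat{Q})^2$; it is not itself Gaussian, but on an interval $[0, \delta]$ with $\delta$ chosen proportional to $\sigma / \min(n, m)$ its ratio satisfies $\rho(t')/\rho(t) \geq c = \Omega(1)$ for $0 \leq t \leq t' \leq \delta$, by combining~\Cref{lemma:smooth-ratio} for the Gaussian factor with first-order control of the polynomial factor over the short range. \Cref{lemma:basic-smooth} then yields $\pr[|\langle \mat{Q}_{:, j}, \vec{u} \rangle| \leq \epsilon' \mid \cdot] \leq \epsilon'/(c\delta)$, and a union bound over the at most $\min(n,m)$ choices of $j \in \tilsuppy$ completes the estimate. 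The additive $20\Amax + 3$ and multiplicative $4\max_{j \in \tilsuppy} \|\mat{Q}_{:, j}\|$ in the denominator of the stated threshold capture, respectively, the shift of the Gaussian mean (bounded by $\|\barmatA\|_\infty \leq \Amax$) and the scale of the polynomial Jacobian, since $\det \mat{Q}$ is linear in $\mat{Q}_{:, j}$ with coefficient vector of magnitude controlled by subdeterminants of $\mat{Q}_{:, \tilsuppy - j}$ and hence by $\|\mat{Q}_{:, j}\|$ on the range of interest.

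The technical crux is verifying the smoothness condition~\eqref{eq:smooth-den} after this two-stage change of variables. Unlike~\Cref{prop:beta}, where $\beta_P(\mat{A})$ was an affine functional of the \emph{independent} block $\mat{A}_{\suppx, \barsuppy}$ and the bound reduced to one-dimensional Gaussian anti-concentration (\Cref{lemma:smooth-Gauss}), here the coordinates of $\mat{Q}$ are entangled with the equilibrium through the redundancy $\langle \vxstar_{\suppx}, \vec{1} \rangle = 1$, and the Jacobian of the map $(\vec{b}, \vec{c}) \mapsto (\tilxstar, \tilystar)$ itself depends on $\mat{Q}$. Carefully tracking these factors together with the event on which we condition, while keeping the smoothness constants polynomial in $\min(n, m)$, $1/\sigma$, $\Amax$, and $\|\mat{Q}_{:, j}\|$, is where essentially all the work lies.
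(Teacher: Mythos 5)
Your proposal follows essentially the same route as the paper: condition on the equilibrium support via \Cref{prop:max-prob} and \Cref{lemma:change-variables}, change variables through $\mat{T}$ and then to $(\mat{Q}, \tilxstar, \tilystar, v)$ picking up the Jacobian $\det(\mat{Q})^2$, decompose $\mat{Q}_{:,j}$ into components parallel and perpendicular to $\spa(\mat{Q}_{:,\tilsuppy-j})$, and apply \Cref{lemma:basic-smooth} with \Cref{lemma:smooth-ratio} to the density of the perpendicular component $t$, finishing with a union bound over $j \in \tilsuppy$. One detail in your sketch is imprecise, though: you say the polynomial Jacobian contribution is handled by ``first-order control over the short range,'' but that is not how it works—and indeed cannot work, since $\det(\mat{Q})^2 \propto t^2 C(\mat{Q}_{:,\tilsuppy-j})^2$ vanishes at $t=0$, so the ratio $(t'/t)^2$ is unbounded as $t \to 0$ and is not close to $1$ on any interval touching the origin. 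The correct observation, as in the paper, is that this factor is \emph{monotone increasing} in $t$, so $(t'/t)^2 \geq 1$ for $0 \leq t \leq t'$ and the Jacobian factor only helps the required lower bound $\rho(t')/\rho(t) \geq c$; the smoothness constant $c$ therefore comes solely from the Gaussian factors via \Cref{lemma:smooth-ratio}.
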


Similar reasoning, albeit with some further complications, provides a bound for $\alpha_P(\mat{A})$, which is given below.

\begin{restatable}{proposition}{propalpha}
    \label{prop:alpha}
    Let $\alpha_P(\mat{A})$ be defined as in~\Cref{item:alpha}. For any $\epsilon \geq 0$,
    \[
        \underset{\mat{A}}{\pr}\left[\alpha_P(\mat{A}) \leq \frac{\epsilon}{ 25 ( \Amax + 1 )^2 } \right] \leq \epsilon \frac{8 e^2 m n \min(n, m)}{\sigma^2}.
    \]
\end{restatable}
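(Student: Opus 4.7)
My plan is to adapt the change-of-variables and density-smoothness strategy used in the proof of \Cref{prop:gamma}. By \Cref{prop:max-prob} it suffices to condition on the support $(\suppx, \suppy)$ of the equilibrium being any fixed pair with $|\suppx| = |\suppy| = k$ and pay only an $O(nm)$ factor for the enumeration of the support. Once $(\suppx, \suppy)$ is fixed, only the block $\mat{A}_{\suppx,\suppy}$ determines $(\vxstar_\suppx, \vystar_\suppy)$ through the strict-complementarity system, so I concentrate the randomness on those $k^2$ entries and perform the same two-step change of coordinates as for $\gamma_P$: first from $\mat{A}_{\suppx,\suppy}$ to $(\mat{Q}^\flat, \vec{b}, \vec{c}, d)$ via~\eqref{eq:Q-trans}, and then from $(\vec{b}, \vec{c})$ to $(\tilxstar, \tilystar)$ using $\mat{Q}^\top \tilxstar = \vec{b}$ and $\mat{Q} \tilystar = \vec{c}$ from \Cref{claim:equi-char}.

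Fix an index $i \in \suppx$. In the new coordinates, the conditional joint density of $(\mat{Q}, \tilxstar, \tilystar, d)$ equals the Gaussian density of the pre-image $\mat{A}_{\suppx,\suppy}$ times a Jacobian depending only on $\mat{Q}$. Integrating out all variables other than $\vxstar_i$ produces a marginal density $\rho$, and I would verify the ratio condition of \Cref{lemma:basic-smooth} for $\rho$. Varying $\vxstar_i$ over $[0,\delta]$ with all other new variables held fixed, each pre-image entry of $\mat{A}_{\suppx,\suppy}$ is displaced by $O(\delta(\Amax+1))$, so the Gaussian density ratio $\rho(t')/\rho(t)$ stays bounded below by an absolute constant provided $\delta = \Theta(1/(\Amax+1)^2)$. \Cref{lemma:basic-smooth} then yields $\pr[\vxstar_i \le \epsilon' \mid \text{support}] = O(\epsilon'(\Amax+1)^2 / \sigma^2)$, and a union bound over $i \in \suppx$ (at most $\min(n,m)$ values) combined with the $O(nm)$ support-selection penalty produces the stated bound.

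The main obstacle, and likely the reason the statement carries a \emph{quadratic} $(\Amax+1)^2$ (compared to only a linear dependence in \Cref{prop:beta} and \Cref{prop:gamma}), is the interaction of the simplex constraint $\langle \vxstar_\suppx, \vec{1}\rangle = 1$ with the elimination performed in~\eqref{eq:Q-trans}. That elimination singles out a reference coordinate $j \in \suppx$ and writes $\vxstar_j = 1 - \langle \tilxstar, \vec{1}\rangle$; if $i = j$ the coordinate $\vxstar_i$ enters the pre-image $\mat{A}_{\suppx,\suppy}$ only indirectly through this affine substitution, contributing an extra factor of $\Amax+1$ in the displacement bound. The cleanest way to handle both cases uniformly is to repeat the argument using a different reference coordinate for each $i$, or equivalently to track the perturbation of every affected entry of $\mat{A}_{\suppx,\suppy}$ carefully through the simplex constraint and absorb the extra factor into the final constant.
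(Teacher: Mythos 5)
Your high-level plan---change variables via~\eqref{eq:Q-trans}, condition on the support through \Cref{prop:max-prob}, then apply \Cref{lemma:basic-smooth} to the marginal of $\tilxstar_i$---matches the paper's scaffolding, but it misses the central technical obstacle that makes $\alpha_P$ harder than $\gamma_P$, namely that \emph{the conditioning event itself moves when you vary $\vxstar_i$}.

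Concretely: after the change of variables, the density of $(\compA, \mat{Q}, \vxstar, \vystar, v)$ is the Gaussian density of the pre-image $\mat{A}_{\suppx,\suppy}$ times $\det(\mat{Q})^2$, \emph{restricted to} the event $\{\mat{A}_{\barsuppx,\suppy}\vystar_\suppy \geq v\vec{1}\} \cap \{\mat{A}^\top_{\barsuppy,\suppx}\vxstar_\suppx \leq v\vec{1}\}$. The second constraint involves $\vxstar_\suppx$. In your proposal you hold ``all other new variables fixed'' (in particular $\mat{A}_{\suppx,\barsuppy}$, which is what $\mat{A}^\top_{\barsuppy,\suppx}$ refers to) and slide $\tilxstar_i$ from $t$ to $t'$. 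But then the constraint $\mat{A}^\top_{\barsuppy,\suppx}\vxstar_\suppx \leq v\vec{1}$ can become violated, the marginal density drops to zero, and the ratio condition~\eqref{eq:smooth-den} of \Cref{lemma:basic-smooth} fails outright. Bounding only the ratio of Gaussian densities of the pre-image (your ``each pre-image entry of $\mat{A}_{\suppx,\suppy}$ is displaced by $O(\delta(\Amax+1))$'' step) simply ignores the indicator of the conditioning event, which is exactly the part that depends on $\tilxstar_i$. This is the ``entanglement of the optimal solution with the randomness'' the paper flags as the main crux, and the $\gamma_P$ proof did not face it because varying the perpendicular component $t$ of $\mat{Q}_{:,j}$ does not touch $\vxstar$ or $\vystar$ at all.

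The paper's resolution is an additional change of variables you would need to discover: set $\vec{a} \defeq \mat{A}_{\barsuppy,i'}$ for the eliminated index $i' \in \suppx \setminus \tilsuppx$ and pass from $\mat{A}_{\suppx,\barsuppy}$ to $(\barA_{\tilsuppx,\barsuppy}, \vec{a})$ with $\barA_{\tilsuppx,\barsuppy} \defeq \mat{A}_{\tilsuppx,\barsuppy} - \vec{1}\vec{a}^\top$. This rewrites the constraint as $\barA^\top_{\barsuppy,\tilsuppx}\tilxstar \leq v\vec{1} - \vec{a}$; $\tilxstar$ and $\vec{a}$ turn out to be independent, with $\{\vec{a}_j\}_{j\in\barsuppy}$ pairwise independent. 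Integrating out $\vec{a}$ replaces the hard indicator by a smooth factor $\rho_2(t) = \prod_{j\in\barsuppy}\pr_{\vec{a}_j}[\langle\barA_{\tilsuppx-i,j},\tilxstar_{\tilsuppx-i}\rangle + \barA_{i,j}t \leq v - \vec{a}_j]$, so the marginal density becomes $\rho_1(t)\cdot\rho_2(t)$, and the paper bounds the ratio of each factor separately ($\rho_1$ via \Cref{lemma:smooth-ratio}, $\rho_2$ via \Cref{lemma:smooth-Gauss}). Your write-up has no analogue of $\rho_2$ and no mechanism to produce one.

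Finally, your diagnosis of the quadratic $(\Amax+1)^2$ is not quite right. It is not an artifact of the ``$i=j$'' reference coordinate (the paper unions only over $i\in\tilsuppx$ anyway). It arises because the perturbation direction for $\alpha_P$ is $\tilxstar \mapsto \tilxstar + t\vec{e}_i$, which displaces $(\vec{b}, d) = (\mat{Q}^\top\tilxstar, v + \langle\tilxstar,\mat{Q}\tilystar\rangle)$ by $\Theta(t\|\mat{Q}\|)$, while the reference point $\vec{r}_{i,j}(t)$ already has norm $\Theta(\|\mat{Q}^\flat\|_\infty)$; multiplying the displacement by the norm in \Cref{lemma:smooth-ratio} gives the square. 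In the $\gamma_P$ proof the perturbation direction $\qperp$ was a \emph{unit} vector, so the displacement was $\Theta(t)$ and only one factor of $\Amax$ survived. The $\rho_2$ factor contributes a second (independent) quadratic term of the same form.
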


Armed with \Cref{prop:alpha,prop:beta,prop:gamma} and~\Cref{theorem:characterization}, we can establish~\Cref{theorem:informal} by suitably leveraging existing results, as we formalize in~\Cref{sec:proof-main}.

\section{Parameterized results for perturbation-stable games}
\label{sec:parameterized}

Another important implication of our characterization in~\Cref{theorem:characterization} is that it enables connecting the convergence rate of gradient-based algorithms to natural and interpretable game-theoretic quantities. In particular, here we highlight a connection with perturbation-stable games, in the following formal sense.

\begin{definition}
[Perturbation-stable games]
    \label{def:per-stable}
    Let $\mat{A}$ be the payoff matrix of a non-degenerate game. We say that the game is \emph{$\delta$-support-stable}, with $\delta > 0$, if for any $\mat{A}'$ with $\| \mat{A} - \mat{A}' \| \leq \delta$ it holds that $\mat{A}'$ is a non-degenerate game whose equilibrium has the same support as $\mat{A}$.
\end{definition}
Perhaps the simplest example of a support-stable game with a favorable parameter $\delta > 0$ arises when $\mat{A}$ is the $2 \times 2$ identity matrix. Indeed, as long as the perturbation parameter $\delta$ remains below a certain absolute constant, the perturbed game still admits a unique full-support equilibrium. To see this, suppose for the sake of contradiction that the perturbed game has an equilibrium such that Player $x$ plays one of the two actions with probability $1$. Player $y$ would then obtain a utility of at least $1 - O(\delta)$. But the value of the original game was $1/2$, which in turn implies that the value of the perturbed game is $1/2 \pm \Theta(\delta)$; for a sufficiently small $\delta$ this leads to a contradiction. Similar reasoning applies with respect to Player $y$. (The previous argument carries over more broadly to diagonally dominant $2 \times 2$ payoff matrices.)

As we have highlighted already, games with perturbation-stable equilibria---albeit under different notions of stability---have already received a lot of attention in the literature~\citep{Balcan17:Nash,Awasthi10:Nash} (\emph{cf.}~\citet{Cohen86:Perturbation}), and are part of a broader trend in the analysis of algorithms beyond the worst case (for further background, we refer to the excellent book edited by~\citet{Roughgarden21:Beyond}). Our goal here is to make the following natural connection.

\begin{restatable}{theorem}{perstable}
    \label{theorem:eb-stable}
    Any $\delta$-support-stable game (per \Cref{def:per-stable}) satisfies the error bound for any sufficiently small modulus
    \[
        \kappa \geq \poly\left( \frac{1}{n}, \frac{1}{m}, \delta \right).
    \]
\end{restatable}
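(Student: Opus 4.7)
The plan is to invoke \Cref{theorem:characterization}, reducing the task to establishing $\poly(1/n, 1/m, \delta)$ lower bounds on each of the six quantities $\alpha_P, \alpha_D, \beta_P, \beta_D, \gamma_P, \gamma_D$ from \Cref{def:quant}; the prefactor $\|\mat{A}^\flat\|_\infty$ is $O(1)$ since $\barmatA \in [-1,1]^{n \times m}$ and we may restrict attention to $\delta = O(1)$. For each quantity the argument proceeds by contrapositive: if the quantity falls below the desired polynomial threshold, we exhibit an explicit perturbation $\mat{E}$ with $\|\mat{E}\| \leq \delta$ such that $\mat{A}+\mat{E}$ is either degenerate or has a strictly different equilibrium support, contradicting \Cref{def:per-stable}.

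The two direct bounds come from $\beta_P$ and $\gamma_P$. For $\beta_P$: if $v - \langle \vxstar, \mat{A}_{:,j^\star}\rangle = \eta$ for some $j^\star \in \barsuppy$, the rank-one shift $\mat{E} \defeq \eta\, \vec{1}_n \vec{e}_{j^\star}^\top$, of spectral norm $\eta\sqrt{n}$, keeps $(\vxstar, \vystar)$ an equilibrium of $\mat{A}+\mat{E}$---column $j^\star$ is never multiplied by a positive $\vystar$-coordinate, so Player $x$'s best response is unaffected---while making the inequality $\langle \vxstar, (\mat{A}+\mat{E})_{:, j^\star}\rangle \leq v$ tight even though $\vystar_{j^\star} = 0$, a violation of strict complementarity. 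Hence $\mat{A}+\mat{E}$ is degenerate, forcing $\beta_P \geq \delta/\sqrt{n}$. For $\gamma_P$: by definition, $\gamma_P(\mat{A}) \geq \smin(\mat{Q})$ (the distance from any column to the span of the others is at least $\smin$), so $\gamma_P \leq \eta$ implies $\smin(\mat{Q}) \leq \eta$ and, by Eckart--Young, a spectral-norm-$\eta$ perturbation of $\mat{Q}$ renders it singular. Pulling this perturbation back through the explicit change of variables $\mat{T}$ in~\eqref{eq:Q-trans}---whose conditioning is $\poly(|\suppx|, |\suppy|)$ since $\mat{T}$ merely encodes $\hatvx_i = 1 - \vec{1}^\top \tilx$ and $\hatvy_j = 1 - \vec{1}^\top \tily$---yields an $\mat{A}$-perturbation of norm $O(\eta \cdot \poly(n,m))$. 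A singular $\mat{Q}(\mat{A}+\mat{E})$ forces the KKT identities $\mat{Q}\tilystar = \vec{c}$, $\mat{Q}^\top\tilxstar = \vec{b}$ to be either rank-deficient on the original support (giving a continuum of equilibria, hence degeneracy) or infeasible there (forcing a support change), both of which contradict $\delta$-support-stability. Thus $\gamma_P \geq \poly(1/n, 1/m, \delta)$; $\beta_D$ and $\gamma_D$ are symmetric.

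The main obstacle is $\alpha_P$, because $\vxstar$ depends implicitly on $\mat{A}$ and cannot be directly manipulated by an atomic rank-one adjustment. The plan is to use the already-established $\gamma$-bound to obtain a directional Lipschitz estimate for $\vxstar$. Specifically, on the region where the support remains frozen (guaranteed by stability for $\|\mat{E}\| \leq \delta$), differentiating the KKT identities $\mat{Q}\tilystar = \vec{c}$ and $\mat{Q}^\top\tilxstar = \vec{b}$ via the implicit function theorem gives $\|\vxstar(\mat{A}+\mat{E}) - \vxstar(\mat{A})\| \leq \|\mat{Q}^{-1}\| \cdot \poly(n,m) \cdot \|\mat{E}\|$, with $\|\mat{Q}^{-1}\| = 1/\smin(\mat{Q}) \leq \poly(n, m, 1/\gamma_P)$. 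Therefore, if $\alpha_P(\mat{A}) = \vxstar_{i^\star} = \eta$, selecting $\mat{E}$ aligned with the gradient $\partial \vxstar_{i^\star}/\partial \mat{A}$---nonzero thanks to invertibility of the KKT matrix---and scaled so that $\|\mat{E}\| = O(\eta \cdot \poly(n, m, 1/\gamma_P))$ pushes $\vxstar_{i^\star}$ through zero, again violating support-stability. The delicate step is verifying that the relevant gradient does not vanish, and that the straight-line path from $\mat{A}$ to $\mat{A}+\mat{E}$ remains inside the fixed-support regime (so the implicit-function estimate is valid along the whole path); this is exactly where the $\delta$-support-stability hypothesis is used. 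The dual quantity $\alpha_D$ follows by the symmetric argument, and combining all six bounds through \Cref{theorem:characterization} yields the claimed $\kappa \geq \poly(1/n, 1/m, \delta)$.
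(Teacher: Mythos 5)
Your treatment of $\beta_P$ and $\gamma_P$ matches the paper's: in both cases you exhibit an explicit spectral perturbation, and the arguments (rank-one column shift to create an extra tight inequality; Eckart--Young to kill $\smin(\mat{Q})$ and pull back through $\mat{T}$) are essentially identical to what the paper does. Two small remarks there: the paper's $\beta_P$-perturbation only touches rows in $\suppx$ rather than the full column, saving a factor of $\sqrt{n/|\suppx|}$, and the paper routes $\gamma_P$ through $\smin(\mat{Q})$ via \Cref{prop:neg-id} rather than the cleaner direct bound $\gamma_P \geq \smin(\mat{Q})$ that you observe; neither difference matters for a $\poly$ bound.

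The $\alpha_P$ argument is where you diverge from the paper and where there is a genuine gap. You invoke an implicit-function-theorem bound of the form $\|\vxstar(\mat{A}+\mat{E}) - \vxstar(\mat{A})\| \leq \|\mat{Q}^{-1}\| \cdot \poly(n,m)\cdot\|\mat{E}\|$ and then conclude that choosing $\|\mat{E}\| = O(\eta\cdot\poly(n,m,1/\gamma_P))$ ``pushes $\vxstar_{i^\star}$ through zero.'' But the Lipschitz estimate is an \emph{upper} bound on the change of $\vxstar$; it cannot by itself guarantee that any perturbation of the indicated magnitude drives $\vxstar_{i^\star}$ below zero. To make your gradient argument quantitative you would need a \emph{lower} bound on the directional sensitivity, $\|(\mat{Q}^{-1})_{:,i^\star}\| \geq 1/\smax(\mat{Q})$, which is controlled by $\smax(\mat{Q})$ (i.e., $\|\mat{A}\|$), not by $\smin(\mat{Q})$ or $\gamma_P$; citing $\|\mat{Q}^{-1}\| = 1/\smin$ here is the wrong quantity and does not close the argument. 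The paper avoids this issue entirely with a direct construction: set $\tilx'_{i'} \defeq 0$ (or, when $i'$ is the eliminated coordinate, shift the remaining coordinates so that $\vec{1}^\top \tilx' = 1$), define $\vec{b}' \defeq \mat{Q}^\top \tilx'$, observe $\|\vec{b} - \vec{b}'\| \leq \alpha_P(\mat{A})\|\mat{Q}_{i',:}\|$, and note that if the perturbed game with data $(\mat{Q}, \vec{b}', \vec{c}, d)$ (pulled back to $\mat{A}'$ through $\mat{T}^{-1}$) were non-degenerate with the same support, its unique equilibrium would be exactly $\tilx'$, contradicting $\tilx'_{i'} = 0$ (or $\vec{1}^\top\tilx' = 1$). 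This is shorter and needs no sensitivity lower bound. You also omit the case $i' \notin \tilsuppx$ (the eliminated coordinate achieving the minimum), which the paper handles separately.
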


By virtue of our discussion in~\Cref{sec:proof-main}, \Cref{theorem:eb-stable} immediately implies~\Cref{cor:per-stable}. Indeed, we observe that all parameters involved in~\Cref{theorem:characterization} can be lower bounded in terms of the stability parameter of \Cref{def:per-stable}, as we formalize in~\Cref{sec:proof-stable}.

\section{Conclusions and future research}

In conclusion, we performed the first smoothed analysis with respect to a number of well-studied gradient-based algorithms in zero-sum games. In particular, we showed that $\ogda$, $\egda$ and $\egt$ all enjoy polynomial smoothed complexity, meaning that their iteration complexity grows as a polynomial in the dimensions of the game, $1/\sigma$, and $\log(1/\epsilon)$; for $\omwu$, our analysis reveals a significant improvement over the worst-case bound due to~\citet{Wei21:Linear}, but it still remains superpolynomial. We also made a connection between the rate of convergence of the above algorithms and a natural perturbation-stability property of the equilibrium, which is interesting beyond the model of smoothed complexity.

A number of interesting avenues for future research remain open. First, is it the case that $\omwu$ has polynomial smoothed complexity or is there an inherent separation with the other algorithms we studied? Answering this question in the positive would necessitate significantly improving the worst-case analysis of $\omwu$ due to~\citet{Wei21:Linear} (\emph{cf.}~\citet{Cai24:Fast} for a recent development concerning the last-iterate convergence of $\omwu$). Beyond $\omwu$, our results could also prove useful for establishing polynomial bounds for other natural dynamics in the smoothed analysis framework. Moreover, our characterization of the error bound in~\Cref{theorem:characterization} assumes that the game is non-degenerate. This is an innocuous assumption in the smoothed complexity model, as it holds with probability $1$, but nevertheless it would be interesting to generalize it to any game. Doing so could shed some light into whether \Cref{theorem:eb-stable} holds with respect to other, perhaps more natural notions of perturbation stability beyond~\Cref{def:per-stable}. It would also be interesting to investigate other models of smoothed complexity that account for dependencies between the entries of the payoff matrix~\citep{Bhaskara24:New}. Moreover, our focus has been on zero-sum games under simplex constraints, but we suspect that more general positive results should be attainable under polyhedral constraint sets; perhaps the most notable such candidate is the class of \emph{extensive-form games}~\citep{Romanovskii62:Reduction,Stengel96:Efficient}. Even beyond (two-player) zero-sum games, \Cref{theorem:informal} could apply to (multi-player) \emph{polymatrix} zero-sum games~\citep{Cai16:Zero}. It is less clear whether the model of smoothed complexity can be informative when it comes to convergence to \emph{coarse correlated equilibria} in multi-player games.

\section*{Acknowledgments}

We are grateful to the anonymous reviewers at NeurIPS for their helpful feedback. The first author is indebted to Ioannis Panageas for many insightful discussions. This material is based on work supported by the Vannevar Bush Faculty Fellowship ONR N00014-23-1-2876, National Science Foundation grants RI-2312342 and RI-1901403, ARO award W911NF2210266, and NIH award A240108S001.

\bibliography{dairefs}

\clearpage

\appendix

\section{Further related work}
\label{sec:related}

Besides the pioneering work of~\citet{Spielman04:Smoothed}, which revolved around the simplex algorithm, other prominent algorithms for solving linear programs have also been investigated through the lens of smoothed complexity. \citet{Blum02:Smoothed} showed that perceptron, a popular algorithm in machine learning, also enjoys a polynomial smoothed complexity (with high probability) for solving linear programming feasibility problems, which can also capture general linear programs via a binary search procedure. Further, \citet{Dunagan11:Smoothed} performed a smoothed analysis of interior-point methods by relying on an earlier characterization due to~\citet{Renegar95:Incorporating}.

Beyond linear programming and (two-player) zero-sum games, there has been a considerable interest in understanding the smoothed complexity of Nash equilibria in general-sum games, but the outlook that has emerged from this endeavor is rather bleak~\citep{Chen09:Settling,Boodaghians20:Smoothed,Rubinstein16:Settling}. On a more positive note, \citet{Daskalakis24:Smooth} recently considered a more permissive solution concept they refer to as a \emph{smooth Nash equilibrium}; the basic idea of their relaxation is that instead of considering best-response deviations, they restrict to deviations that do not assign too much probability mass on any pure strategy, as controlled by a certain parameter. For a certain regime of that parameter, they obtained positive results, bypassing the intractability of the usual Nash equilibrium. Considering smooth Nash equilibria could also be fruitful in the context of zero-sum games. In particular, we surmise that, if one is content with convergence to smooth Nash equilibria, the error bound could exhibit more favorable properties. Smoothed analysis has also been applied to more structured classes of games, such as congestion or potential games~\citep{Giannakopoulos23:Smooted,Giannakopoulos23:On,Chen20:Smoothed}, as well as other important problems in game theory~\citep{Gatti13:Strong,Buriol11:Smoothed}. Other notable developments in a broader context were covered in an older survey by~\citet{Spielman09:Smoothed}; for more recent developments, we point to, for example, \citet{Christ23:Smoothed,Chen24:Smoothed,Huiberts23:Upper}, and the many references therein.

Moreover, average-case analysis has also been a popular topic in the optimization literature~\citep{Cunha22:Only,Paquette23:Halting,Scieur20:Universal}, and so it is worth relating our results to that line of work. In particular, let us focus on the recent work of~\citet{Cunha22:Only}. First, that paper targets a certain class of convex quadratic problems, whereas we examine zero-sum games. They also operate under a different perturbation model, deriving a parametrization based on the concentration of the eigenvalues of a certain matrix. Further, without strong convexity, \citet{Cunha22:Only} establish a complexity scaling with $\poly(1/\epsilon)$, while here we target the $\log(1/\epsilon)$ regime. We finally remark that the techniques employed are also quite different. In particular, \citet[Problem 2.1]{Cunha22:Only} posit that the optimal solution does not depend on the underlying randomization. In contrast, as we have already highlighted, the fact that the equilibrium is a function of the randomization constitutes the main technical crux in our setting. At the same time, \citet{Cunha22:Only} encountered several challenges not present in our setting, so overall those results are complementary.

Beyond smoothed complexity, understanding the last-iterate convergence of gradient-based methods such as $\ogda$ and $\egda$ has received tremendous interest in the literature especially in recent years; \emph{e.g.}, \citep{Golowich20:Tight,Cai22:Finite,Gorbunov22:Last,Vankov23:Last,Golowich20:Last,Mahdavinia22:Tight,Antonakopoulos21:Adaptive,Mertikopoulos19:Optimistic,Abe23:Last}. It is worth noting that linear convergence has also been documented for the more challenging class of extensive-form games~\citep{Lee21:Last}, as well as Markov games~\citep{Song23:Can}. Nevertheless, there are lower bounds precluding linear convergence beyond affine variational inequalities~\citep{Golowich20:Tight,Wei21:Linear}. We also refer to the works of~\citet{Cohen17:Hedging} and~\citet{Giannou21:Rate} for further characterizations of the convergence rate of no-regret dynamics in multi-player games.

Contrary to the above line of work, which focuses on last-iterate convergence, the most common approach to solving zero-sum games revolves around regret minimization whereby optimality guarantees concern the average strategies. Learning in such settings has been a popular research topic as it captures many central and natural problems; two notable recent applications are learning from multiple distributions~\citep{Haghtalab22:Demand} and multi-calibration~\citep{Haghtalab23:Unifying}. Yet, there are at least three limitations of the no-regret framework worth highlighting here. The first one, which has been stressed extensively already, is that the number of iterations must grow at least as $\Omega(1/\epsilon)$ when one insists on taking (uniform) averages~\citep{Daskalakis15:Near}. The second and more nuanced caveat is that the no-regret framework does not provide instance-based guarantees based on natural game-theoretic parameters of the problem (see, for example, the discussion of~\citet{Maiti23:Instance}). Building on earlier work~\citep{Wei21:Linear,Tseng95:Linear}, some of our results here attempt to address this shortcoming by coming up with a more interpretable parameterization of the iteration complexity of algorithms such as $\ogda$. The final limitation is that, convergence to the set of equilibria notwithstanding, no-regret guarantees provide no information regarding properties of the equilibrium reached. Although not an issue in non-degenerate zero-sum games, equilibrium selection still remains a central problem. Earlier results~\citep{Wei21:Linear,Tseng95:Linear} provide an interesting characterization for the last iterate of $\ogda$ and $\egda$ by showing that the limit point is the projection of the initial point to the set of equilibria.

Finally, it is worth pointing out the best available theoretical guarantees for solving zero-sum games. Assuming that each entry of $\mat{A}$ has absolute value bounded by $1$, \eqref{eq:zero-sum} can be solved in $\tilO(\max\{n, m\}^\omega)$~\citep{Cohen21:Solving} or $\tilO(n m + \min\{n, m\}^{5/2})$~\citep{Brand21:Minimum}. Here, $\omega$ is the exponent of matrix multiplication and $\tilO$ suppresses polylogarithmic factors in $n$ and $m$. The complexity we obtain for algorithms such as $\ogda$ is not competitive even though we work in the more benign smoothed complexity model; we reiterate that we did not attempt to optimize the polynomial factors in terms of $n$ and $m$, and those can almost certainly be improved. On the other hand, there are two main aspects in which algorithms such as $\ogda$ are more appealing in terms of their scalability: the per-iteration complexity and the memory requirements. An algorithm such as $\ogda$ requires a single matrix-vector product in each iteration, which can be implemented in linear time for sparse matrices, and has a limited memory footprint. In contrast, implementing interior-point methods in large games can be prohibitive.
\section{Preliminaries}
\label{sec:prels}

In this section, we introduce some further background on smoothed complexity and define the algorithms cited earlier (\Cref{item:ogda,item:omwu,item:egda,item:egt}).

\paragraph{Further notation} For a random variable $X$, we denote by $\E[X]$ its expectation and by $\var[X]$ its variance, under the assumption that both are finite. For a sequence of random variables $X_1, \dots, X_d$ and scalars $\alpha_1, \dots, \alpha_d \in \R$, linearity of expectation yields that $\E[ \alpha_1 X_1 + \dots + \alpha_d X_d] = \alpha_1 \E[X_1] + \dots + \alpha_d \E[X_d]$. Assuming independence, it also holds that $\var[ \alpha_1 X_1 + \dots + \alpha_d X_d] = (\alpha_1)^2 \var[X_1] + \dots + (\alpha_d)^2 \var[X_d]$. We will also use the fact that a linear combination of independent Gaussian random variables is also Gaussian. More broadly, linear combinations can be understood through a convolution in the space of probability density functions, which means that smoothness (in the sense of~\Cref{lemma:smooth-Gauss}) is preserved in a certain regime.

\subsection{Smoothed complexity}

To fully specify~\Cref{def:Gauss-perturb}, we first recall that a (univariate) Gaussian random variable with zero mean and variance $\sigma^2$ admits a probability density function of the form
\begin{equation*}
    \mu : t \mapsto \frac{1}{\sigma \sqrt{2 \pi}} \exp \left( - \frac{t^2}{2\sigma^2} \right).
\end{equation*}
The law of such a Gaussian random variable will be denoted by $\mathcal{N}(0, \sigma^2)$. In the original work of~\citet{Spielman04:Smoothed}, smoothed complexity was defined as the expected running time (or some other cost function) of some algorithm over the perturbed input. More precisely, let $\cA$ be an algorithm whose inputs can be expressed as vectors in $\R^d$, and let $T_{\cA}(\cI)$ be the running time of algorithm $\cA$ on input $\cI \in \R^d$. Then, the \emph{smoothed complexity} of $\cA$ is
\[
    \mathcal{C}_{\cA}(d, \sigma) \defeq \max_{ \cI \in \R^d} \E_{ \vec{g} \sim \mathcal{N}(\vec{0}_d, \sigma^2 \mat{I}_{d \times d} )} [ T_{\cA}(\cI + \|\cI\| \vec{g} )].
\]
As pointed out by~\citet{Spielman03:Smoothed}, one does not need to limit smoothed analysis to measure the expected running time, and high probability guarantees are also quite natural; see, for example, the smoothed analysis of the perceptron algorithm due to~\citet{Blum02:Smoothed}. Our main result also provides a guarantee with high probability; it is not clear whether the expected running time can also be bounded by $\poly(n, m,1/\sigma)$, which is left for future work.

\subsection{Algorithms}

Next, we specify the algorithms we consider in this work.

\paragraph{Optimistic gradient descent/ascent} Originally proposed by~\citet{Popov80:Modification}, optimistic gradient descent/ascent (\ogda)---and other variants thereof~\citep{Hsieh19:On}---has been recently revived in the online learning literature commencing from the pioneering works of~\citet{Rakhlin13:Online} and~\citet{Chiang12:Online}. If we denote for compactness $F(\vz) \defeq (\mat{A} \vy, - \mat{A}^\top \vx )$, $\ogda$ can be expressed as follows for $t \in \N (= \{1, 2, \dots, \})$.
\[
\tag{$\ogda$}
    \begin{aligned}
        \vz^{(t)} \defeq \proj_{\cZ} ( \hvz^{(t)} - \eta F(\vz^{(t-1)}),\\
    \hvz^{(t+1)} \defeq \proj_{\cZ} ( \hvz^{(t)} - \eta F(\vz^{(t)}).
    \end{aligned}
\]
Here, $\eta > 0$ is the \emph{learning rate}; $\proj_{\cZ}(\cdot)$ denotes the (Euclidean) projection operator on set $\cZ \defeq \cX \times \cY$; and $\vz^{(0)} = \hvz^{(1)} \in \cZ$ is the initialization. That is, players simultaneously update their strategies through optimistic gradient steps. Given that $\cX$ and $\cY$ are probability simplexes, each projection can be computed exactly in nearly linear time. The key reference point for $\ogda$ in affine variational inequalities is the work of~\citet{Wei21:Linear} who established linear convergence using the notion of \emph{metric subregularity} (\Cref{def:subreg}), which is strongly related to~\Cref{def:eb}; we discuss their approach later in~\Cref{sec:proof-main}.

\paragraph{Optimistic multiplicative weights update} Deriving from the same class of online learning algorithms as $\ogda$, optimistic multiplicative weights ($\omwu$) is the incarnation of \emph{optimistic mirror descent} with an entropic regularizer, namely
\[
    \tag{$\omwu$}
    \begin{aligned}
        \vx^{(t)} \propto \vx^{(t-1)} \circ \exp\left( - 2\eta \mat{A} \vy^{(t-1)} + \eta \mat{A} \vy^{(t-2)} \right), \\
        \vy^{(t)} \propto \vy^{(t-1)} \circ \exp\left( 2\eta \mat{A}^\top \vx^{(t-1)} - \eta \mat{A}^\top \vx^{(t-2)} \right)
    \end{aligned}
\]
for $t \in \N$.\footnote{$\omwu$ is oftentimes expressed via the (optimistic) mirror descent viewpoint, but the form we provide here is easily seen to be equivalent.} Above, $\circ$ denotes the component-wise product; the exponential mapping $\exp(\cdot)$ is also to be applied component-wise; and $\vz^{(-1)} \defeq \vz^{(0)} \defeq ( \frac{1}{n} \vec{1}_n, \frac{1}{m} \vec{1}_m)$. \citet{Daskalakis19:Last} first proved that $\omwu$ exhibits asymptotic (last-iterate) convergence, and~\citet{Wei21:Linear} later established linear convergence.

\begin{remark}
    It is important to note here that the exponential map of $\omwu$ can produce iterates with an arbitrarily large number of bits. Nevertheless, it is not hard to show that the analysis of~\citet{Wei21:Linear} carries over when the iterates are truncated up to a certain length of the most significant bits, and so we will not dwell further on this issue here.
\end{remark}

\paragraph{Extra-gradient descent/ascent} The extra-gradient method of~\citet{Korpelevich76:Extragradient} is quite similar to $\ogda$, namely
\[
\tag{$\egda$}
    \begin{aligned}
        \hvz^{(t)} \defeq \proj_{\cZ} ( \vz^{(t)} - \eta F(\vz^{(t)}),\\
    \vz^{(t+1)} \defeq \proj_{\cZ} ( \vz^{(t)} - \eta F(\hvz^{(t)})
    \end{aligned}
\]
for $t \in \N$. Unlike $\ogda$, one caveat is that it requires two gradient evaluations per each iteration $t$. $\egda$ is also less suited to use in an online environment: it requires more feedback than what is provided in the online learning setting, and in fact, even legitimate variants of $\egda$ can still incur substantial regret~\citep{Golowich20:Tight}. \citet{Tseng95:Linear} first established that $\egda$ exhibits linear convergence for problems such as~\eqref{eq:zero-sum}, discussed further in~\Cref{sec:proof-main}.

\paragraph{Iterative smoothing} This algorithm is a refinement of Nesterov's classical smoothing technique~\citep{Nesterov05:Smooth} due to~\citet{Gilpin12:First}. Let us first recall the vanilla version of Nesterov, which we refer to as $\texttt{Smoothing}(\mat{A}, \vz^{(0)}, \epsilon)$:

\begin{enumerate}
    \item Initialize $\eta \defeq \frac{\epsilon}{\diam_{\cZ}}$ and $\hvz^{(0)} \defeq \vz^{(0)}$, where $\diam_{\cZ}$ is the $\ell_2$ diameter of $\cZ$. 
    \item For $t = 0, 1, \dots$
    \begin{enumerate}
        \item $\vec{u}^{(t)} \defeq \frac{2}{2 + t} \hvz^{(t)} + \frac{t}{t+2} \vz^{(t)}$.
        \item 
        \[
            \vz^{(t+1)} \defeq \arg \min_{\vz \in \cZ} \left\{ \langle \nabla F_\eta(\vec{u}^{(t)}), \vz - \vec{u}^{(t)} \rangle + \frac{L^2}{2\eta} \|\vz - \vec{u}^{(t)} \|^2 \right\},
        \]
        where $F_\eta(\vz) \defeq \max_{\hvz \in \cZ} \{ \langle F(\vz), \vz - \hvz \rangle - \frac{\eta}{2} \|\vz - \hvz\|^2 \}$ and $L$ is a suitable matrix norm.
        \item If $\Phi(\vz^{(t+1)}) < \epsilon$, \textbf{return}.
        \item 
        \[
            \hvz^{(t+1)} \defeq \arg \min_{\hvz \in \cZ} \left\{ \sum_{\tau=0}^t \frac{\tau+1}{2} \langle \nabla F_\eta(\vec{u}^{(\tau)}), \hvz - \vec{u}^{(\tau)} \rangle + \frac{L^2 }{2\eta} \|\hvz - \vz^{(0)} \|^2 \right\}.
        \]
    \end{enumerate}
\end{enumerate}
In this context, $\egt(\mat{A}, \vz^{(0)},\rho, \epsilon)$ is simple refinement of $\texttt{Smoothing}$, which nonetheless attains linear convergence~\citep{Gilpin12:First}.

\begin{enumerate}
    \item Let $\epsilon^{(0)} = F(\vz^{(0)})$. 
    \item For $t = 0, 1, \dots$
    \begin{enumerate}
        \item $\epsilon^{(t+1)} \defeq \frac{\epsilon^{(t)}}{\rho}$.
        \item $\vz^{(t+1)} \defeq \texttt{Smoothing}(\mat{A}, \vz^{(t)}, \epsilon^{(t+1)})$.
        \item If $\Phi(\vz^{(t+1)}) < \epsilon$, \textbf{return}.
    \end{enumerate}
\end{enumerate}
\section{Omitted proofs}

We dedicate this section to the proofs omitted earlier from the main body.

\subsection{Proofs from Section~\ref{sec:geometric}}
\label{sec:proofs-char}

We first point out that degenerates games have measure zero (\emph{cf.}~\citet[Proposition 5.1]{Spielman03:Smoothed}).

\begin{lemma}
    \label{fact:unique}
    For a Gaussian distributed payoff matrix $\mat{A}$ per~\Cref{def:Gauss-perturb}, the game is non-degenerate (\Cref{def:nondegen}) with probability $1$ (almost surely).
\end{lemma}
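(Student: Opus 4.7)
The plan is to show that the set $\mathcal{D} \subset \R^{n \times m}$ of payoff matrices yielding a degenerate game is contained in a \emph{finite} union of proper real-algebraic subvarieties of $\R^{n \times m}$. Since any such subvariety has Lebesgue measure zero, and the Gaussian perturbation in~\Cref{def:Gauss-perturb} is absolutely continuous with respect to Lebesgue measure (the density is everywhere strictly positive), this immediately yields $\pr[\mat{A} \in \mathcal{D}] = 0$.

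To enumerate the defining polynomial conditions, I would parametrize candidate ``support-and-value'' configurations. For each pair $(B, N)$ with $B \subseteq [n]$, $N \subseteq [m]$, $|B| = |N| = k$, the condition that the game admits an equilibrium with value $v$, $\suppx(\vxstar) = B$, $\suppy(\vystar) = N$, and satisfying strict complementarity translates into a linear system in the $2k + 1$ unknowns $(\vxstar_B, \vystar_N, v)$, with coefficients that are polynomial (in fact affine) in the entries of $\mat{A}$. Generically the system is square and yields a unique solution, and the resulting game is non-degenerate per~\Cref{def:nondegen}. Degeneracy arises in exactly two ways: \textbf{(a)} two distinct configurations $(B, N) \neq (B', N')$ simultaneously give rise to valid equilibria (violating uniqueness), or \textbf{(b)} for the true configuration $(B, N)$, an \emph{extra} inequality is tight, meaning $\vxstar_i = 0$ for some $i \in B$, or $\langle \vxstar, \mat{A}_{:, j} \rangle = v$ for some $j \notin N$ (and symmetrically for $\vy$). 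In either case, one obtains, after eliminating the auxiliary variables $(\vxstar_B, \vystar_N, v)$ via Cramer's rule, at least one additional polynomial equation $p_{B, N}(\mat{A}) = 0$ in the entries of $\mat{A}$.

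The central technical step is then to verify that each such polynomial $p_{B, N}$ is not \emph{identically} zero. This can be done by exhibiting, for each configuration, a single payoff matrix on which $p_{B, N}$ does not vanish; a convenient choice is a small generic perturbation of a diagonal matrix, which admits a unique strictly-complementary equilibrium and satisfies all non-degeneracy inequalities strictly. Since the collection of candidate $(B, N, B', N')$ tuples and extra-tight patterns is finite, the union bound over the corresponding measure-zero subvarieties gives $\pr[\mat{A} \in \mathcal{D}] = 0$. The main potential obstacle is bookkeeping---namely, verifying non-triviality of every polynomial $p_{B, N}$ that arises, and making sure the enumeration of degenerate patterns is exhaustive---but no new ideas are required beyond the semi-algebraic structure of the equilibrium map, and this mirrors the argument given by~\citet{Spielman03:Smoothed} for the analogous statement in linear programming.
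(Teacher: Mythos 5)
Your proof is correct and matches the paper's approach: both argue that the degenerate set is confined to a finite union of measure-zero subsets of $\R^{n\times m}$, each carved out by a non-trivial polynomial condition on the entries of $\mat{A}$ (the paper phrases this as an over-determined yet feasible linear system of $n$ equalities in $n-1$ unknowns, obtained after eliminating $v$ and one redundant coordinate). The only small deviation is that the paper disposes of the uniqueness requirement by citing Van Damme (1991, Theorem 3.5.1), whereas you fold uniqueness into the same algebraic framework via your case~(a), which makes the argument slightly more self-contained at the cost of a bit more bookkeeping on the non-triviality of the resulting polynomials.
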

Indeed, the set of games with a non unique equilibrium has measure zero~\citep[Theorem 3.5.1]{VanDamme91:Stability}. Regarding the characterization in terms of the number of tight inequalities of the corresponding (primal and dual) linear programs, gathered in~\Cref{def:nondegen}, we note that if $n+1$ of the inequalities were tight at $\vxstar$, that would induce a feasible linear system of $n$ equalities (by eliminating $v$) in $n-1$ variables (by eliminating one of the redundant variables); such degeneracies have measure zero, and there are only finitely many possible such degeneracies, leading to~\Cref{fact:unique}. As a result, in the smoothed complexity model, we can safely assume that the game is non-degenerate.

Now, as we alluded to earlier, establishing \Cref{def:eb} reduces to showing that for any points $\vx \in \cX$ and $\vy \in \cY$,
\begin{align}
    \max_{\vy' \in \cY} \langle \vx, \mat{A} \vy' \rangle - v \geq \kappa \| \vx - \proj_{\cXstar}(\vx) \| = \kappa \|\vx - \vxstar\|, \label{align:playerx} \\
    v - \min_{\vx' \in \cX} \langle \vx', \mat{A} \vy \rangle \geq \kappa \| \vy - \proj_{\cYstar}(\vy) \| = \kappa \|\vy - \vystar \|.\label{align:playery}
\end{align}
(\Cref{def:eb} then indeed follows from the obvious fact $\| \vx - \vxstar \| + \|\vy - \vystar\| \geq \|\vz - \vzstar \|$.) Accordingly, our proof of~\Cref{theorem:characterization} below will focus on lower bounding $\kappa$ so that~\eqref{align:playerx} holds, and~\eqref{align:playery} can then be treated similarly.

Before we proceed, let us make some observations regarding transformation~\eqref{eq:Q-trans} we saw earlier. First, one can understand the transformation $\mat{A}_{\suppx, \suppy}^\flat = \mat{T} ( \mat{Q}^\flat, \vec{b}, \vec{c}, d)$ through the equations 
\begin{align}
    \label{eq:T-detailed}
    d = \mat{A}_{i, j}; \vec{b}_{j'} = - \mat{A}_{i, j'} + \mat{A}_{i, j}; \vec{c}_{i'} = -\mat{A}_{i', j} + \mat{A}_{i, j}; \text{ and } \mat{Q}_{i', j'} = \mat{A}_{i', j'} - \mat{A}_{i, j'} - \mat{A}_{i', j} + \mat{A}_{i, j}
\end{align}
for all $(i', j') \in \tilsuppx \times \tilsuppy$. This can easily be derived from~\eqref{eq:Q-trans} by using the fact that $\hatvx_{\suppx} = (\tilx, 1 - \vec{1}^\top \tilx)$ and $\hatvy_{\suppy} = (\tily, 1 - \vec{1}^\top \tily)$. From~\eqref{eq:T-detailed}, we see that there is a permutation of the rows of $\mat{T}$ that is upper triangular, with every entry being either $1$ or $-1$. This implies that $|\det(\mat{T})| = 1$. With a slight abuse of notation, we will write $\mat{T}_{i, j}$  (as opposed to $\mat{T}_{(i, j), :}$) to access the $(i,j)$ row of $\mat{T}$, so that $\mat{A}_{i, j} = \langle \mat{T}_{i, j} , (\mat{Q}^\flat, \vec{b}, \vec{c}, d) \rangle$. From~\eqref{eq:T-detailed}, we also see that $\mat{T}_{i, j}$ contains at most $4$ non-zero entries. In turn, this implies that $\| \mat{T}_{i, j} \| \leq 2$ and $\| \mat{T}_{i, j} \|_1 \leq 4$. We gather the above observations in the claim below, which will be used in the sequel.
\begin{claim}
    \label{claim:T}
    For the (linear) transformation $\mat{T} \in \R^{(\suppx \suppy) \times (\suppx \suppy)}$ given in~\eqref{eq:T-detailed}, it holds that $|\det(\mat{T})| = 1$. Further, $\| \mat{T}_{i, j} \| \leq 2$ and $\| \mat{T}_{i, j} \|_1 \leq 4$ for all $(i, j) \in \suppx \times \suppy$.
\end{claim}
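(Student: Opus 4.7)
The plan is to verify the three assertions directly from the explicit coefficient formulas in~\eqref{eq:T-detailed}, which themselves arise by substituting $\hatvx_i = 1 - \vec{1}^\top \tilx$ and $\hatvy_j = 1 - \vec{1}^\top \tily$ into~\eqref{eq:Q-trans} and matching coefficients. Since $\mat{T}$ by definition sends $(\mat{Q}^\flat, \vec{b}, \vec{c}, d)$ to $\mat{A}^\flat_{\suppx, \suppy}$, its entries are obtained by inverting the relations of~\eqref{eq:T-detailed}, yielding the four cases $\mat{A}_{i,j} = d$, $\mat{A}_{i, j'} = -\vec{b}_{j'} + d$ for $j' \in \tilsuppy$, $\mat{A}_{i', j} = -\vec{c}_{i'} + d$ for $i' \in \tilsuppx$, and $\mat{A}_{i', j'} = \mat{Q}_{i', j'} - \vec{b}_{j'} - \vec{c}_{i'} + d$ for $(i', j') \in \tilsuppx \times \tilsuppy$.

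For the norm bounds I would simply enumerate the above four row types. In each case the corresponding row of $\mat{T}$ contains at most four nonzero entries, each of which is $\pm 1$; hence $\|\mat{T}_{i,j}\| \le \sqrt{4} = 2$ and $\|\mat{T}_{i,j}\|_1 \le 4$ uniformly over $(i,j) \in \suppx \times \suppy$.

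For the determinant I would reorder the rows and columns of $\mat{T}$ to exhibit a (block) upper-triangular structure. Order the columns (input variables) as $\mat{Q}^\flat$ first, then $\vec{b}$, then $\vec{c}$, then $d$, and order the rows correspondingly as the pairs in $\tilsuppx \times \tilsuppy$ first, then the pairs $(i, j')$ with $j' \in \tilsuppy$, then the pairs $(i', j)$ with $i' \in \tilsuppx$, and finally the row indexed by $(i, j)$. From the formulas above, the block indexed by $\tilsuppx \times \tilsuppy$ is diagonal on the $\mat{Q}^\flat$-columns (the coefficient of $\mat{Q}_{i', j'}$ in $\mat{A}_{i', j'}$ is $+1$, and $\mat{Q}_{i', j'}$ does not appear in any other row), the $(i, j')$-block is diagonal on the $\vec{b}$-columns (with diagonal entry $-1$), the $(i', j)$-block is diagonal on the $\vec{c}$-columns (with diagonal entry $-1$), and the last row is supported only on $d$ with entry $+1$. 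After the permutation, $\mat{T}$ is therefore upper triangular with all diagonal entries in $\{\pm 1\}$, so $|\det(\mat{T})| = 1$.

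The entire argument is straightforward bookkeeping starting from~\eqref{eq:T-detailed}, so I do not expect any genuine obstacle. The only mild care required is in choosing the row/column ordering that makes the block-triangular pattern visible; once that ordering is fixed, both the determinant and the norm bounds follow by inspection.
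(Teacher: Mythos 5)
Your argument is correct and is essentially the one the paper gives: the paper remarks, in the paragraph preceding the claim, that from~\eqref{eq:T-detailed} each row of $\mat{T}$ has at most four nonzero entries in $\{\pm 1\}$ (yielding the norm bounds), and that a row permutation puts $\mat{T}$ in upper-triangular form with $\pm 1$ diagonal (yielding $|\det(\mat{T})|=1$). You spell out the inversion of~\eqref{eq:T-detailed} explicitly and give the concrete row/column ordering that exhibits the triangular structure, which is slightly more detailed but not a different route.
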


The point of transformation~\eqref{eq:Q-trans} is that, as we claimed earlier, the spectral properties of matrix $\mat{Q}$ (as opposed to $\mat{A}_{\suppx, \suppy}$, which is a natural candidate) suffice to capture the difficulty of addressing the second subproblem identified in~\Cref{sec:geometric}. In addition, there is a straightforward but convenient characterization of the equilibrium $(\vxstar, \vystar)$ in terms of the transformed game in~\eqref{eq:Q-trans}, as stated below.

\begin{claim}
    \label{claim:equi-char}
    It holds that $\mat{Q} \tilystar = \vec{c}$ and $\mat{Q}^\top \tilxstar = \vec{b}$.
\end{claim}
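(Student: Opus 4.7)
The plan is to leverage the saddle-point characterization of the equilibrium together with the non-degeneracy hypothesis, which ensures that the eliminated coordinates $\hatvx_i = 1 - \vec{1}^\top \tilxstar$ and $\hatvy_j = 1 - \vec{1}^\top \tilystar$ are strictly positive, so that the reduced variables $(\tilxstar, \tilystar)$ lie in the interior of their feasible region and the first-order optimality conditions reduce to unconstrained ones.

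First, I would argue that $(\vxstar_{\suppx}, \vystar_{\suppy})$ is a saddle point of the restricted game with payoff matrix $\mat{A}_{\suppx, \suppy}$ over $\Delta(\suppx) \times \Delta(\suppy)$. This follows from the fact that $\vxstar$ and $\vystar$ vanish outside their supports, combined with the observation that any deviation within $\Delta(\suppx)$ (respectively $\Delta(\suppy)$) is a feasible deviation in the full game. Via transformation~\eqref{eq:Q-trans}, this translates into $(\tilxstar, \tilystar)$ being a saddle point of
\[
f(\tilx, \tily) \defeq \langle \tilx, \mat{Q} \tily \rangle - \langle \tilx, \vec{c} \rangle - \langle \tily, \vec{b} \rangle + d
\]
over the feasible set $\{ (\tilx, \tily) : \tilx \geq \vec{0},\, \tily \geq \vec{0},\, \vec{1}^\top \tilx \leq 1,\, \vec{1}^\top \tily \leq 1 \}$, where the upper bound on $\vec{1}^\top \tilx$ (respectively $\vec{1}^\top \tily$) corresponds to the requirement that the eliminated coordinate remains nonnegative.

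Next, I would use non-degeneracy (together with strict complementarity) to conclude that $\vxstar_k > 0$ for every $k \in \suppx$, and in particular for the eliminated coordinate $i$; symmetrically for $j$. Therefore $\tilxstar_k = \vxstar_k > 0$ for every $k \in \tilsuppx$ and $\vec{1}^\top \tilxstar = 1 - \vxstar_i < 1$, so $\tilxstar$ is strictly in the interior of its constraints; the same holds for $\tilystar$. Since $f$ is linear in $\tilx$ with $\tilystar$ fixed, the minimality of $f(\cdot, \tilystar)$ at $\tilxstar$ together with interiority forces the gradient to vanish: $\nabla_{\tilx} f(\tilxstar, \tilystar) = \mat{Q} \tilystar - \vec{c} = \vec{0}$. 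An identical argument for Player $y$ yields $\nabla_{\tily} f(\tilxstar, \tilystar) = \mat{Q}^\top \tilxstar - \vec{b} = \vec{0}$, establishing both claimed identities.

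There is no real obstacle here; the content of the claim is precisely that non-degeneracy promotes the variational inequality characterization of the equilibrium into a pair of \emph{linear} equations in the reduced variables, which is exactly what the transformation~\eqref{eq:Q-trans} was designed to accomplish. The only point that deserves attention is verifying interiority of $(\tilxstar, \tilystar)$, which I would trace back to strict complementarity (as invoked in the discussion right before~\Cref{def:quant}) to ensure that $|\suppx| = |\suppy|$ and that every entry of $\vxstar_{\suppx}, \vystar_{\suppy}$ is strictly positive.
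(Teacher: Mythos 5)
Your proof is correct and takes essentially the same approach as the paper's: restrict to the support, observe that by full support on $\suppx$ (and symmetrically $\suppy$) the reduced equilibrium lies strictly in the interior of the feasible region $\{\tilx \geq \vec{0},\ \vec{1}^\top\tilx \leq 1\}$, and conclude from first-order optimality of a linear objective at an interior point that the gradients $\mat{Q}\tilystar - \vec{c}$ and $\mat{Q}^\top\tilxstar - \vec{b}$ must vanish. The paper phrases this same argument more tersely in terms of moving probability mass in both directions (which requires exactly the two strict inequalities $\sum_{i \in \tilsuppx}\vxstar_i > 0$ and $< 1$ you verify), so the two proofs are the same up to presentation.
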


\begin{proof}
    It is clear that the vector $\mat{Q} \tilystar - \vec{c}$ must have the same value in every coordinate since $\tilxstar$ is fully supported and a best response (by assumption). If that entry was positive, then $\tilxstar$ would not be a best response since Player $x$ could profit from removing all the probability mass (which is possible since $\sum_{i \in \tilsuppx} \vxstar_i > 0$). If there was a negative entry, Player $x$ would profit from increasing its probability mass (which is possible since $\sum_{i \in \tilsuppx} \vxstar_i < 1$). Similar reasoning yields $\mat{Q}^\top \tilxstar = \vec{b}$.
\end{proof}

Having made the above observations, we next establish some lemmas claimed earlier in~\Cref{sec:geometric} which will be used for the proof of~\Cref{theorem:characterization}. First, we give the proof of~\Cref{lemma:barQ}.

\barQlem*

\begin{proof}
    Let $\tilsuppy \ni j' \in \arg\min_{ j \in \tilsuppy} \dist( \barQ_{:, j}, \spa(\barQ_{:, \tilsuppy - j})) $. By definition, there is $\vrho \in \R^{\tilsuppy - j'}$ and $\vec{r} \in \R^{\tilsuppy}$ with $\|\vec{r}\| = 1$ such that
    \[
        \barQ_{:, j} \defeq - \sum_{j \in \tilsuppy - j'} \tilystar_j \mat{Q}_{:, j} + (1 - \vystar_{j'}) \mat{Q}_{:, j'} = \sum_{j \in \tilsuppy - j'} \vrho_j ( \mat{Q}_{:, j} - \vec{c}) + \epsilon \vec{r},
    \]
    where $\epsilon \defeq \min_{ j \in \tilsuppy} \dist( \barQ_{:, j}, \spa(\barQ_{:, \tilsuppy - j}))$. Rearranging, we have
    \begin{equation}
        \label{eq:phi}
        \mat{Q}_{:, j'} \overbrace{\left( 1 - \vystar_{j'} + \vystar_{j'} \sum_{j \in \tilsuppy - j'} \vrho_j \right)}^{\phi_{j'}} + \sum_{j \in \tilsuppy - j'} \mat{Q}_{:, j}  \overbrace{\left( -\vystar_j - \vrho_j + \vystar_j \sum_{j'' \in \tilsuppy - j'} \vrho_{j''} \right)}^{\phi_j} = \epsilon \vec{r}.
    \end{equation}
    Now, let us suppose that all coefficients above are such that $|\phi_j| \leq \epsilon' \defeq \frac{1 - \sum_{j \in \tilsuppy} \vystar_j}{1 - \sum_{j \in \tilsuppy} \vystar_j + |\tilsuppy|}$ for all $j \in \tilsuppy$. Then, $\sum_{j \in \tilsuppy} \phi_j = \pm |\tilsuppy| \epsilon'$ since $| \sum_{j \in \tilsuppy} \phi_j | \leq \sum_{j \in \tilsuppy} |\phi_j| \leq \epsilon |\tilsuppy|$, where for convenience we used the notation $\sum_{j \in \tilsuppy} \phi_j = \pm |\tilsuppy| \epsilon' \iff - |\tilsuppy| \epsilon' \leq \sum_{j \in \tilsuppy} \phi_j \leq |\tilsuppy| \epsilon'$. Thus, by definition of $\phi_j$,
    \[
        \left( 1 - \sum_{j \in \tilsuppy} \vystar_j \right) \left( \sum_{j \in \tilsuppy - j'} \vrho_j \right) = \left( 1 - \sum_{j \in \tilsuppy} \vystar_j \right) \pm \epsilon' |\tilsuppy|.
    \]
    Since $0 < 1 - \sum_{j \in \tilsuppy} \vystar_j$, we have
    \[
        \left( \sum_{j \in \tilsuppy - j'} \vrho_j \right) = 1 \pm \epsilon' \frac{|\tilsuppy|}{1 - \sum_{j \in \tilsuppy} \vystar_j}.
    \]
    Thus,
    \[
        \phi_{j'} = 1 - \vystar_{j'} + \vystar_{j'} \sum_{j \in \tilsuppy - j'} \vrho_j = 1 \pm \epsilon' \frac{|\tilsuppy|} {1 - \sum_{j \in \tilsuppy} \vystar_j} > \epsilon' 
    \]
    since $\epsilon' \leq \frac{1 - \sum_{j \in \tilsuppy} \vystar_j}{1 - \sum_{j \in \tilsuppy} \vystar_j + |\tilsuppy|  }$. The last displayed inequality contradicts our earlier assumption that $|\phi_{j'}| \leq \epsilon'$. As a result, we conclude that at least one coefficient $\phi_j$ has an absolute value at least $\epsilon'$. Dividing~\eqref{eq:phi} by that coefficient, we get
    \[
      \min_{ j \in \tilsuppy} \dist( \mat{Q}_{:, j}, \spa(\mat{Q}_{:, \tilsuppy - j})) \leq \frac{\epsilon}{\epsilon'} \leq \left( 1 + \frac{|\tilsuppy|}{1 - \sum_{j \in \tilsuppy} \vystar_j} \right) \min_{ j \in \tilsuppy} \dist( \barQ_{:, j}, \spa(\barQ_{:, \tilsuppy - j})).
    \]
    This completes the proof.
\end{proof}

We continue with the proof of~\Cref{lemma:pnorm}.

\sminlem*

\begin{proof}
    Let $\mat{M} = \mat{U} \mat{\Sigma} \mat{V}^{\top}$ be a singular value decomposition (SVD) of $\mat{Q}$, where $\mat{U}$ and $\mat{V}$ are orthonormal. Then, given that $\mat{Q}$ is invertible (by assumption),
    \begin{equation*}
        \vp = \mat{V} \mat{\Sigma}^{-1} \mat{U}^\top \tilx,
    \end{equation*}
    where $\mat{\Sigma}^{-1} = \diag(\smin^{-1}, \dots, \smax^{-1})$. (Here, $\smax$ and $\smin$ are the maximum and minimum singular values of $\mat{M}$, respectively.) Thus, $\| \vp\| \leq \| \mat{V} \| \|\mat{\Sigma}^{-1} \| \|\mat{U}^\top \| \|\tilx\| \leq \frac{1}{\smin (\mat{Q}) } \| \tilx \|$, where we used the fact that the spectral norm of any orthonormal matrix is $1$ and the spectral norm of any diagonal matrix is its maximum entry in asbolute value.
\end{proof}

We next state the negative second moment identity that connects the smallest singular values in terms of a certain geometric property of the matrix (namely, \Cref{item:gamma}) (see also~\citep{Tao23:Topics} for further background).

\begin{proposition}[Negative second moment identity~\citep{Tao10:Random}]
    \label{prop:neg-id}
    Let $\mat{M} \in \R^{d \times d}$ be an invertible matrix. Then,
    \begin{equation}
        \label{eq:least-sigma}
        \sum_{\ind = 1}^d \frac{1}{\sigma^2_\ind(\mat{M})} = \sum_{\ind=1}^d \frac{1}{\dist^2( \mat{M}_{r, :} , \hypplane_{-\ind, :})} = \sum_{\ind=1}^d \frac{1}{\dist^2( \mat{M}_{:, r} , \hypplane_{:, -\ind})},
    \end{equation}
    where $\hypplane_{-\ind, :} \defeq \spa(\mat{M}_{1, :} , \dots, \mat{M}_{r-1, :} , \mat{M}_{r+1, :} , \dots, \mat{M}_{d, :})$.
\end{proposition}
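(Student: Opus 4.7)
My plan is to reduce the identity to a single Frobenius-norm computation applied to $\mat{M}^{-1}$, and then interpret the individual rows and columns of $\mat{M}^{-1}$ geometrically as reciprocal distances. First, using the SVD $\mat{M} = \mat{U}\mat{\Sigma}\mat{V}^\top$ (with $\mat{U},\mat{V}$ orthonormal and $\mat{\Sigma}=\diag(\sigma_1,\dots,\sigma_d)$, all positive since $\mat{M}$ is invertible), I would write $\mat{M}^{-1}=\mat{V}\mat{\Sigma}^{-1}\mat{U}^\top$ and observe that
\[
\sum_{\ind=1}^d \frac{1}{\sigma_\ind^2(\mat{M})} \;=\; \tr(\mat{\Sigma}^{-2}) \;=\; \tr\bigl((\mat{M}^{-1})(\mat{M}^{-1})^\top\bigr) \;=\; \|\mat{M}^{-1}\|_F^2,
\]
where the second equality uses that $\|\cdot\|_F$ is invariant under orthonormal transformations on either side. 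This reduces the proposition to showing that the squared Frobenius norm of $\mat{M}^{-1}$ equals both $\sum_\ind \dist^{-2}(\mat{M}_{\ind,:}, \hypplane_{-\ind,:})$ and $\sum_\ind \dist^{-2}(\mat{M}_{:,\ind}, \hypplane_{:,-\ind})$.

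Next, I would split $\|\mat{M}^{-1}\|_F^2$ in two natural ways, namely as $\sum_\ind \|(\mat{M}^{-1})_{:,\ind}\|^2$ and as $\sum_\ind \|(\mat{M}^{-1})_{\ind,:}\|^2$, and give each summand a geometric meaning. For the row identity, fix $\ind$ and let $\vec{w}^\top$ denote the $\ind$-th row of $\mat{M}^{-1}$. From $\mat{M}^{-1}\mat{M}=\mat{I}$ one reads off $\vec{w}^\top \mat{M}_{:,j} = \delta_{\ind j}$, so $\vec{w}$ is orthogonal to every column $\mat{M}_{:,j}$ with $j\neq \ind$ and satisfies $\langle \vec{w},\mat{M}_{:,\ind}\rangle=1$. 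Decomposing $\mat{M}_{:,\ind}$ into its component in $\hypplane_{:,-\ind}=\spa(\mat{M}_{:,j}:j\neq\ind)$ and its orthogonal component of length $\dist(\mat{M}_{:,\ind},\hypplane_{:,-\ind})$, the orthogonality of $\vec{w}$ to $\hypplane_{:,-\ind}$ forces $\vec{w}$ to be parallel to that orthogonal component, and the normalization $\langle\vec{w},\mat{M}_{:,\ind}\rangle=1$ pins down $\|\vec{w}\|=1/\dist(\mat{M}_{:,\ind},\hypplane_{:,-\ind})$. Summing over $\ind$ yields the column version of the identity. The row version is completely symmetric: applying the same argument to $\mat{M}\mat{M}^{-1}=\mat{I}$ and to the $\ind$-th column of $\mat{M}^{-1}$ gives $\|(\mat{M}^{-1})_{:,\ind}\|=1/\dist(\mat{M}_{\ind,:},\hypplane_{-\ind,:})$, and summing finishes the proof.

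There is essentially no hard step: the whole argument is bookkeeping around $\mat{M}^{-1}$. The only point that deserves a little care is justifying that $\mat{M}_{:,\ind}\notin \hypplane_{:,-\ind}$ (and similarly for rows), so that the distance is strictly positive and the normalization $\langle\vec{w},\mat{M}_{:,\ind}\rangle=1$ is achievable — but this is immediate from the invertibility of $\mat{M}$, since otherwise the columns would be linearly dependent and $\mat{M}$ would be singular. Thus invertibility is used exactly twice: to guarantee $\mat{\Sigma}^{-1}$ exists in the SVD step, and to guarantee each column (resp.\ row) lies outside the span of the remaining ones in the geometric step.
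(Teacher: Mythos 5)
Your proof is correct and follows essentially the same route as the paper's (very terse) justification: both reduce the left-hand side to $\|\mat{M}^{-1}\|_F^2$ and then equate the squared norms of the rows/columns of $\mat{M}^{-1}$ with the reciprocal squared distances; you simply spell out the geometric step ($\vec{w}$ spans the one-dimensional orthogonal complement of $\hypplane_{:,-\ind}$, and the normalization $\langle\vec{w},\mat{M}_{:,\ind}\rangle=1$ fixes its length) that the paper leaves implicit. The only nit is a labeling slip: you open with ``For the row identity'' but the argument you then carry out (using the $\ind$-th row of $\mat{M}^{-1}$ and columns of $\mat{M}$) establishes the column version, which you do correctly name at the end of that passage.
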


One can readily prove this identity by equivalently expressing the negative second moment $\tr((\mat{M}^{-1})^\top \mat{M}^{-1})$ as either $\sum_{\ind=1}^d \sigma^2_{\ind}(\mat{M}^{-1}) = \sum_{\ind=1}^d \sigma^{-2}_{\ind}(\mat{M})$ or $\sum_{\ind=1}^d \| \mat{M}^{-1}_{:, r} \|^2$, leading to the first identity in~\eqref{eq:least-sigma}. The second one follows from the fact that the singular values of $\mat{M}^\top$ coincide with the singular values of $\mat{M}$. 

We are now ready to prove~\Cref{theorem:characterization}, restated below.

\mainchar*

\begin{proof}

We lower bound $\kappa$ so that~\eqref{align:playerx} holds; bound~\eqref{align:playery} will then be treated in a symmetric fashion, and \Cref{def:eb} will follow.

Let us fix any point $\vx \in \cX$. We can write $\vx$ as $\lambda \hatvx_{\suppx} + (1 - \lambda) \hatvx_{\barsuppx}$ for some $\lambda \in [0, 1]$ such that $\hatvx_{\suppx} \in \cX$ and all coordinates of $\hatvx_{\suppx}$ in $\barsuppx$ are zero, and $\hatvx_{\barsuppx} \in \cX$ and all coordinates of $\hatvx_{\barsuppx}$ in $\suppx$ are zero. For notational convenience, we define
\begin{equation}
    \label{eq:PA}
    P(\mat{A}) \defeq \frac{1}{2 |\suppy| \sqrt{|\suppx|}} \smin(\barQ) \left( 1 + \frac{1}{\alpha_D(\mat{A})} \right)^{-1}.
\end{equation}
We consider the following two cases.

\paragraph{Case I:} $\lambda P(\mat{A}) \|\hatvx_{\suppx} - \vxstar_{\suppx} \| \geq 4 (1-\lambda) \Amax$. If $\hatvx_{\suppx} = \vxstar_{\suppx}$, it follows that $\vx = \vxstar$ (since $\lambda = 1$), and the conclusion trivially follows. We can thus assume that $\hatvx_{\suppx} \neq \vxstar_{\suppx}$. In this case, it follows that $\tilsuppx \neq \emptyset$, and we proceed as follows.
\begin{align}
    \max_{\vy' \in \cY} \langle \vx, \mat{A} \vy' \rangle - v &\geq \lambda \max_{j \in \suppy} \langle \hatvx_{\suppx} - \vxstar_{\suppx}, \mat{A}_{\suppx, j} \rangle + (1 - \lambda) \left( \langle \vx_{\barsuppx}, \mat{A}_{\barsuppx, j}  \rangle - v \right) \label{align:split} \\
    &\geq \lambda \max_{j \in \suppy} \langle \hatvx_{\suppx} - \vxstar_{\suppx}, \mat{A}_{\suppx, j} \rangle - 2 (1 - \lambda) \Amax, \label{align:norm-dom}
\end{align}
where~\eqref{align:split} follows from the definition $\vx \defeq \lambda \hatvx_{\suppx} + (1 - \lambda) \hatvx_{\barsuppx}$ and the fact that $v = \langle \vxstar_{\suppx}, \mat{A}_{\suppx, j} \rangle$ for all $j \in \suppy$; and~\eqref{align:norm-dom} uses definition of $\Amax$ to lower bound the second term in~\eqref{align:split}. Continuing from~\eqref{align:norm-dom}, we can use the transformation defined in~\eqref{eq:Q-trans} to get
\begin{equation}
    \label{eq:lb-Qc}
    \max_{j \in \suppy} \langle \hatvx_{\suppx} - \vxstar_{\suppx}, \mat{A}_{\suppx, j} \rangle = \max_{ j \in \suppy} \langle \tilx - \tilxstar, \mat{Q}_{:, j} - \vec{c} \rangle,
\end{equation}
where, with an abuse of notation, the convention above is that $\mat{Q}_{:, j} = \mat{0}$ if $j \neq \tilsuppy$. For convenience, let us define $\term_j \defeq \langle \tilx - \tilxstar, \mat{Q}_{:, j} - \vec{c} \rangle$ for all $j \in \suppy$. Our goal is to lower bound $\max_{j \in \suppy} \term_j$. To that end, we first observe that, by the fact that $\mat{Q} \tilystar = \vec{c}$ (\Cref{claim:equi-char}),
\begin{align*}
    0 = \langle \tilx - \tilxstar, \mat{Q} \tilystar - \vec{c} \rangle &= \sum_{ j \in \tilsuppy} \tilystar_j \langle \tilx - \tilxstar, \mat{Q}_{:, j} \rangle - \langle \tilx - \tilxstar, \vec{c} \rangle \\
    &= \sum_{ j \in \tilsuppy} \tilystar_j \langle \tilx - \tilxstar, \mat{Q}_{:, j} - \vec{c} \rangle + \left( 1 - \sum_{j \in \tilsuppy} \tilystar_j \right) \langle \tilx - \tilxstar, - \vec{c} \rangle.
\end{align*}
In other words,
\begin{equation*}
    \sum_{j \in \suppy} \vystar_j \term_j = 0,
\end{equation*}
which in turn implies that
\begin{align}
    \sum_{ j \in \suppy} \max(0, \term_j) \geq \sum_{ j \in \suppy} \vystar_j \max(0, \term_j) &= - \sum_{ j \in \suppy} \vystar_j \min(0, \term_j) \notag \\
    &\geq - \alpha_D(\mat{A}) \sum_{j \in \suppy} \min(0, \term_j),\label{align:sum-posneg}
\end{align}
where we made use of the obvious identity $t = \max(0, t) + \min(0, t)$ for all $t \in \R$, as well as the definition of $\alpha_{D}(\mat{A})$ (\Cref{item:alpha}). We let $\vec{p} \in \R^{\tilsuppy}$ be the (unique) solution to the linear system
\[
    \tilx - \tilxstar = \barQ \vec{p} = \sum_{j \in \tilsuppy} ( \mat{Q}_{:, j} - \vec{c}) \vec{p}_j,
\]
and $\vec{p}_j = 0$ for $j \in \suppy \setminus \tilsuppy$. By~\Cref{lemma:pnorm}, we know that $\|\vec{p}\| \leq (\smin(\barQ))^{-1} \|\tilx - \tilxstar\|$. Then, we have
\begin{equation}
    \label{eq:norm-tilx}
    \sum_{j \in \suppy} \term_j \vec{p}_j = \sum_{j \in \tilsuppy} \term_j \vec{p}_j = \left\langle \tilx - \tilxstar, \sum_{j \in \tilsuppy} (\mat{Q}_{:, j} - \vec{c} ) \vec{p}_j \right\rangle = \|\tilx - \tilxstar \|^2.
\end{equation}
Moreover,
\begin{align}
    \sum_{j \in \suppy} \term_j \vec{p}_j &= \sum_{j \in \suppy} \vp_j \max(0, \term_j) + \sum_{j \in \suppy} \vp_j \min (0, \term_j) \notag \\
    &\leq \sum_{j \in \suppy} \max(0, \vp_j ) \max(0, \term_j ) + \sum_{j \in \suppy} \min(0, \vp_j ) \min (0, \term_j ) \label{align:pos-neg} \\
    &\leq \| \vp \|_\infty \sum_{j \in \suppy} \max(0, \term_j ) - \| \vp \|_\infty \sum_{j \in \suppy} \min (0, \term_j) \label{align:revers} \\
    &\leq \| \vp \|_\infty \left( 1 + \frac{1}{\alpha_D(\mat{A})} \right) \sum_{j \in \suppy} \max(0, \term_j ) \label{align:reverse-frac} \\
    &\leq \frac{1}{\smin(\barQ)} \left( 1 + \frac{1}{\alpha_D(\mat{A})} \right) |\suppy| \max_{ j \in \suppy} \term_j \|\tilx - \tilxstar\|,\label{align:pnorm-left}
\end{align}
where~\eqref{align:pos-neg} follows from the fact that $\vec{p}_j \max(0, \term_j) \leq \max(0, \vec{p}_j) \max(0, \term_j)$ (by nonnegativity of $\max(0, \term_j)$) and $\vec{p}_j \min(0, \term_j) \leq \min(0, \vec{p}_j) \min(0, \term_j)$ (by nonpositivity of $\min(0, \term_j)$); \eqref{align:revers} uses that $\min(0, \vec{p}_j) \geq - |\vec{p}_j| \geq - \|\vec{p}\|_\infty$, which gives $\min(0, \vec{p}_j) \min(0, \term_j) \leq - \|\vec{p}\|_\infty \min(0, \term_j)$; \eqref{align:reverse-frac} follows from \eqref{align:sum-posneg}; and~\eqref{align:pnorm-left} uses the bound $\|\vec{p}\|_2 \leq (\smin(\barQ))^{-1} \|\tilx - \tilxstar\|$ (\Cref{lemma:pnorm}). Combining~\eqref{eq:norm-tilx} and~\eqref{align:pnorm-left},
\begin{align}
    \max_{j \in \suppy} \langle \hatvx_{\suppx} - \vxstar_{\suppx}, \mat{A}_{\suppx, j} \rangle &\geq \frac{1}{|\suppy|} \smin(\barQ) \left( 1 + \frac{1}{\alpha_D(\mat{A})} \right)^{-1} \|\tilx - \tilxstar \| \label{align:chi-diff} \\
    &\geq \frac{1}{2 |\suppy| \sqrt{|\suppx|}  } \smin(\barQ) \left( 1 + \frac{1}{\alpha_D(\mat{A})} \right)^{-1} \|\hatvx_{\suppx} - \vxstar_{\suppx} \|,\label{align:1-2norm}
\end{align}
where~\eqref{align:chi-diff} uses the definition of $\term_j$ and the assumption that $\tilx \neq \tilxstar$ (equivalently, $\vxstar_{\suppx} \neq \hatvx_{\suppx}$), and~\eqref{align:1-2norm} follows from the bound
\[
    \|\hatvx_{\suppx} - \vxstar_{\suppx} \| \leq \|\hatvx_{\suppx} - \vxstar_{\suppx} \|_1 \leq \sum_{i \in \tilsuppx} \left|  \vx_i - \vxstar_i \right| + \left| \sum_{i \in \tilsuppx} (\vx_i - \vxstar_i) \right| \leq 2 \| \tilx - \tilxstar \|_1 \leq 2 \sqrt{|\suppx|} \|\tilx - \tilxstar \|.
\]
Returning to~\eqref{align:norm-dom}, we have
\begin{align}
    \max_{\vy' \in \cY} \langle \vx, \mat{A} \vy' \rangle - v &\geq \lambda \frac{1}{2 |\suppy| \sqrt{|\suppx|}} \smin(\barQ) \left( 1 + \frac{1}{\alpha_D(\mat{A})} \right)^{-1} \|\hatvx_{\suppx} - \vxstar_{\suppx} \| - 2 (1 - \lambda) \Amax \notag \\
    &= \lambda P(\mat{A}) \|\hatvx_{\suppx} - \vxstar_{\suppx} \| - 2(1 - \lambda) \Amax,\label{align:PA-def}
\end{align}
where the equality above follows from the definition of $P(\mat{A})$ in~\eqref{eq:PA}. Next, we bound
\begin{align}
    \| \vx - \vxstar \|^2 &= \| \lambda \hatvx_{\suppx} - \vxstar_{\suppx} \|^2 + (1 - \lambda)^2 \|\hatvx_{\barsuppx} \|^2 \notag \\
    &= \| \lambda (\hatvx_{\suppx} - \vxstar_{\suppx}) - (1 - \lambda) \vxstar_{\suppx} \|^2 + (1 - \lambda)^2 \|\hatvx_{\barsuppx} \|^2 \notag \\
    &\leq 2 \lambda^2 \| \hatvx_{\suppx} - \vxstar_{\suppx} \|^2 + 2 (1 - \lambda)^2 \| \vxstar_{\suppx} \|^2 + (1 - \lambda)^2 \|\hatvx_{\barsuppx} \|^2 \label{align:Young} \\
    &\leq 2 \lambda^2 \|\hatvx_{\suppx} - \vxstar_{\suppx} \|^2 + 3 (1 - \lambda)^2,\label{align:x-xb-lam}
\end{align}
where~\eqref{align:Young} uses triangle inequality with respect to $\|\cdot\|$ along with the inequality $(t_1 + t_2)^2 \leq 2 t_1^2 + 2 t_2^2$, and~\eqref{align:x-xb-lam} uses that $\| \vxstar_{\suppx} \|, \|\hatvx_{\barsuppx} \| \leq 1$. Since we are assuming that $\lambda P(\mat{A}) \|\hatvx_{\suppx} - \vxstar_{\suppx} \| \geq 4 (1-\lambda) \Amax$, \eqref{align:x-xb-lam} in turn implies that
\begin{align}
    \| \vx - \vxstar \|^2 &\leq 2 \lambda^2 \|\hatvx_{\suppx} - \vxstar_{\suppx} \|^2 + \lambda^2 \left( \frac{P(\mat{A})}{\Amax} \right)^2 \| \hatvx_{\suppx} - \vxstar_{\suppx} \|^2 \notag \\ &= \lambda^2 \left(2 + \left( \frac{P(\mat{A})}{\Amax} \right)^2 \right) \| \hatvx_{\suppx} - \vxstar_{\suppx} \|^2.\label{eq:normx-bound}
\end{align}
Combining~\eqref{align:PA-def} and~\eqref{eq:normx-bound} with the assumption that $\lambda P(\mat{A}) \|\hatvx_{\suppx} - \vxstar_{\suppx} \| \geq 4 (1-\lambda) \Amax$,
\begin{align*}
    \max_{\vy' \in \cY} \langle \vx, \mat{A} \vy' \rangle - v &\geq \frac{\lambda}{2} P(\mat{A}) \|\hatvx_{\suppx} - \vxstar_{\suppx} \| \\
    &\geq \frac{1}{2} P(\mat{A})\left(2 + \left( \frac{P(\mat{A})}{\Amax} \right)^2 \right)^{-2} \|\vx - \vxstar \| \geq \kappa(\mat{A}) \|\vx - \vxstar \|.
\end{align*}
It is easy to see that $P(\mat{A})/\Amax$ is upper bounded by an absolute constant, and so we have
\begin{align*}
    \max_{\vy' \in \cY} \langle \vx, \mat{A} \vy' \rangle - v \gtrsim P(\mat{A}) \|\vx - \vxstar \| &= \frac{1}{2 |\suppy| \sqrt{|\suppx|}} \smin(\barQ) \left( 1 + \frac{1}{\alpha_D(\mat{A})} \right)^{-1} \|\vx - \vxstar \| \\
    &\gtrsim \frac{1}{|\suppx|^3} (\alpha_D(\mat{A}))^2 \gamma_P(\mat{A}) \|\vx - \vxstar \|.
\end{align*}
Above, the last bound uses the fact that
\begin{align*}
    \smin(\barQ) &\geq \frac{1}{\sqrt{|\tilsuppx|}} \min_{j \in \tilsuppy} \dist(\barQ_{:, j}, \spa(\barQ_{:, \tilsuppy - j})) \\
    &\geq \frac{1}{|\tilsuppx|^{3/2}} \min_{j \in \tilsuppy} \dist(\mat{Q}_{:, j}, \spa(\mat{Q}_{:, \tilsuppy - j})) \alpha_D(\mat{A}) = \frac{1}{|\tilsuppx|^{3/2}} \gamma_P(\mat{A}) \alpha_D(\mat{A}),
\end{align*}
where the first inequality uses~\eqref{eq:lb-sigma}, while the second one is a consequence of~\Cref{lemma:barQ}. 

\paragraph{Case II:} $\lambda P(\mat{A}) \|\hatvx_{\suppx} - \vxstar_{\suppx} \| < 4 (1-\lambda) \Amax$. This case can only arise when $\barsuppx \neq \emptyset$ (for otherwise $\lambda = 1$). Then, we bound
\begin{align}
    \max_{\vy' \in \cY} \langle \vx, \mat{A} \vy' \rangle - v &\geq \langle \vx, \mat{A} \vystar \rangle - v \notag \\
    &\geq \lambda ( \langle \hatvx_{\suppx} - \vxstar_{\suppx}, \mat{A}_{\suppx, \suppy} \vystar_{\suppy} \rangle ) + ( 1 - \lambda ) ( \langle \hatvx_{\barsuppx}, \mat{A}_{\barsuppx, \suppy} \vystar_{\suppy} - v \rangle) \notag \\
    &\geq (1 - \lambda) \beta_{D}(\mat{A}),\label{align:beta}
\end{align}
by definition of $\beta_{D}(\mat{A})$ (\Cref{item:beta}) and the fact that $\langle \hatvx_{\suppx} - \vxstar_{\suppx}, \mat{A}_{\suppx, \suppy} \vystar_{\suppy} \rangle = v \langle \hatvx_{\suppx} - \vxstar_{\suppx}, \vec{1} \rangle = 0$. Moreover, by~\eqref{align:x-xb-lam} together with the assumption that $\lambda P(\mat{A}) \|\hatvx_{\suppx} - \vxstar_{\suppx} \| < 4 (1-\lambda) \Amax$, 
\[
    \|\vx - \vxstar\|^2 \leq 32 \left( \frac{\Amax}{P(\mat{A})} \right)^2 (1 - \lambda)^2 + 3 (1 - \lambda)^2 = \left( 32 \left( \frac{\Amax}{P(\mat{A})} \right)^2 + 3 \right) (1 - \lambda)^2.
\]
Combining with~\eqref{align:beta} yields
\begin{align*}
    \max_{\vy' \in \cY} \langle \vx, \mat{A} \vy' \rangle - v &\geq \left( 32 \left( \frac{\Amax}{P(\mat{A})} \right)^2 + 3 \right)^{-2} \beta_D(\mat{A}) \|\vx - \vxstar \| \\
    &\gtrsim \frac{P(\mat{A})}{\Amax} \beta_D(\mat{A}) \|\vx - \vxstar \|\\
    &\gtrsim \frac{1}{\Amax} \frac{1}{|\suppx|^3} \alpha_D(\mat{A})^2  \beta_D(\mat{A}) \gamma_P(\mat{A}) \|\vx - \vxstar \|.
\end{align*}

\end{proof}
\subsection{Proofs from Section~\ref{sec:smoothed-geo}}
\label{sec:proofs-smoothed}

We continue with the proofs from~\Cref{sec:smoothed-geo}. As we have noted already, given that all quantities of interest in~\Cref{def:quant} depend on the support of the equilibrium, it is natural to proceed by partitioning the probability space over all possible such configurations. To do so, we will use the following simple fact~\citep[Proposition 8.1]{Spielman03:Smoothed}.

\begin{proposition}[\nakedcite{Spielman03:Smoothed}]
    \label{prop:max-prob}
    Let $X$ and $Y$ be random variables distributed according to an integrable density function. For any event $\cE(X, Y)$,
    \begin{equation*}
        \underset{ X, Y }{\pr}[\cE(X, Y)] \leq \max_{y} \underset{X, Y}{\pr} [\cE(X, Y) \mid Y = y] \eqqcolon \max_{Y} \underset{X, Y}{\pr} [\cE(X, Y) \mid Y].
    \end{equation*}
\end{proposition}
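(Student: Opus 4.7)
The plan is to prove this via the law of total probability (i.e., the tower rule for conditional expectations, or equivalently Fubini applied to the joint density). Since the statement is essentially a sanity-check lemma used to justify ``worst-casing'' over the support of the equilibrium in the subsequent smoothed analysis, the proof will be short; I outline it below.

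First, let $\rho(x, y)$ denote the joint density of $(X, Y)$, which exists by hypothesis, and let $\rho_Y(y) \defeq \int \rho(x,y) \, dx$ be the marginal density of $Y$. By definition of the conditional probability given $Y = y$, we have
\[
\pr[\cE(X, Y) \mid Y = y] = \int \bone[\cE(x, y)] \, \rho_{X \mid Y}(x \mid y) \, dx,
\]
with $\rho_{X \mid Y}(x \mid y) = \rho(x, y) / \rho_Y(y)$ on the set where $\rho_Y(y) > 0$ (and defined arbitrarily otherwise, which will not affect the integral).

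Next, I would expand the unconditional probability by integrating against the joint density and factoring it as $\rho(x,y) = \rho_Y(y) \rho_{X \mid Y}(x \mid y)$, to obtain
\[
\pr[\cE(X, Y)] = \int \rho_Y(y) \left( \int \bone[\cE(x,y)] \, \rho_{X \mid Y}(x \mid y) \, dx \right) dy = \int \rho_Y(y) \, \pr[\cE(X, Y) \mid Y = y] \, dy.
\]
The final step is to upper bound the inner conditional probability by its essential supremum over $y$ and pull it out of the integral, yielding
\[
\pr[\cE(X,Y)] \leq \left( \sup_y \pr[\cE(X, Y) \mid Y = y] \right) \int \rho_Y(y) \, dy = \sup_y \pr[\cE(X,Y) \mid Y = y],
\]
since $\rho_Y$ is a probability density and integrates to $1$. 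Writing $\max$ in place of $\sup$ matches the statement (and is legitimate whenever the supremum is attained, which is the case in all applications of the lemma in this paper, where we range over finitely many supports $\suppx \subseteq [n]$, $\suppy \subseteq [m]$).

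There is no real technical obstacle here: the only mild subtlety is that conditional densities must be handled on the measure-one set where $\rho_Y(y) > 0$, but this is standard and causes no difficulty since the contribution from $\{ y : \rho_Y(y) = 0 \}$ to either side is zero. Thus the lemma reduces to the elementary observation that an average of nonnegative quantities is bounded by their maximum.
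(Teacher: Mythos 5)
Your proof is correct: the paper cites this fact from Spielman and Teng (2003, Proposition 8.1) without reproving it, and your law-of-total-probability argument---writing the unconditional probability as the $\rho_Y$-weighted average of the conditional probabilities and bounding that average by the essential supremum---is precisely the standard justification. Your remarks about $\max$ versus $\sup$ and the measure-zero set where $\rho_Y(y)=0$ correctly dispose of the only potential subtleties, and in the paper's applications $Y$ in fact ranges over a finite index set (the possible supports), so the supremum is genuinely attained.
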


In our application, we want to condition on the event that $\suppx$ is the support of $\vxstar$ and $\suppy$ is the support of $\vystar$. For convenience, we let $\type_{\suppx, \suppy}(\mat{A})$ denote the indicator random variable representing whether $\suppx$ and $\suppy$ indeed index the positive coordinates of the equilibrium; that is, $\type_{\suppx, \suppy}(\mat{A}) \defeq \mathbbm{1} \{ \suppx = \{ i \in [n] : \vxstar_i(\mat{A}) > 0 \} \land \suppy = \{ j \in [m] : \vystar_j(\mat{A}) > 0 \} \}$. Unlike general linear programs, which can be infeasible or unbounded, the linear program induced by a zero-sum game is guaranteed to be primal and dual feasible, no matter the perturbation (under \Cref{def:Gauss-perturb}). We will thus only have to condition on events in which $\suppx$ and $\suppy$ are both nonempty. To be able to control the probability density function upon conditioning on $\type_{\suppx, \suppy}(\mat{A})$, it will be convenient to perform a certain change of variables, which is described next.

\paragraph{Change of variables} Let us denote by $\compA$ the entries of $\mat{A}$ excluding those in $\mat{A}_{\suppx, \suppy}$. We first perform a change of variables from $\compA, \mat{A}_{\suppx, \suppy}$ to $\compA, \mat{Q}, \vec{c}, \vec{b}, d$, which uses the linear transformation $\mat{T}$ associated with~\eqref{eq:Q-trans}. With this new set of variables at hand, we can conveniently express $\mat{Q} \tilystar = \vec{c}$ and $\mat{Q}^\top \tilxstar = \vec{b}$ (\Cref{claim:equi-char}). Accordingly, we next perform a change of variables from $\compA, \mat{Q}, \vec{c}, \vec{b}, d$ to $\compA, \mat{Q}, \vxstar, \vystar, v$. When performing those change of variables one has to account for the transformed probability density function. This can be understood as follows. The probability of an event $\cE(\mat{A})$ can be expressed as
\[
    \int_{\mat{A}} \cE(\mat{A}) \mu_{\mat{A}}(\mat{A}) d \mat{A}.
\]
The integral above can be cast in terms of a new set of variables $\mat{B}$ by computing the corresponding Jacobian, assuming that it is non-singular. We will make use of this fact in the sequel. The following lemma gathers some of the above observations regarding the change of variables.

\begin{lemma}[Change of variables]
    \label{lemma:change-variables}
    Let $\cE(\mat{A})$ be any event that depends on the randomness of $\mat{A}$. Then,
    \begin{align*}
        \underset{\mat{A}}{\pr} [ \cE(\mat{A}) ] &\leq \max_{\suppx, \suppy} \underset{ \mat{A}}{\pr} [ \cE(\mat{A}) \mid \type_{\suppx, \suppy}(\mat{A})] \\
        &= \max_{\suppx, \suppy} \underset{ \compA, \mat{Q}, \vxstar, \vystar, v}{\pr} [ \cE(\mat{A}) \mid \mat{A}_{\barsuppx, \suppy} \vystar_{\suppy} \geq v \vec{1} \textrm{ and }  \mat{A}^\top_{\barsuppy, \suppx} \vxstar_{\suppx} \leq v \vec{1}].
    \end{align*}
\end{lemma}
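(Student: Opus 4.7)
The plan is to prove the two parts of the lemma separately. For the inequality, I would apply Proposition~\ref{prop:max-prob} directly with $Y = \type_{\suppx, \suppy}(\mat{A})$: by Fact~\ref{fact:unique} the game is almost surely non-degenerate, so the equilibrium's support is a.s.\ well-defined, and the events $\{\type_{\suppx, \suppy}(\mat{A}) = 1\}$ partition the probability space (modulo a null set) as $(\suppx, \suppy)$ ranges over the finitely many non-empty subset pairs of $[n] \times [m]$.

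For the equality, I would fix $(\suppx, \suppy)$ and reexpress the conditional distribution of $\mat{A}$ in new coordinates. The first sub-step is to characterize the event $\{\type_{\suppx, \suppy}(\mat{A}) = 1\}$ via KKT-type conditions: by strict complementarity (cited in Section~\ref{sec:geometric}) combined with non-degeneracy and Claim~\ref{claim:equi-char}, this event is equivalent to the conjunction of (i) $\vxstar$ and $\vystar$ being supported exactly on $\suppx$ and $\suppy$; (ii) the equilibrium equations $\mat{Q}\tilystar = \vec{c}$ and $\mat{Q}^\top \tilxstar = \vec{b}$; and (iii) the slack inequalities $\mat{A}_{\barsuppx, \suppy}\vystar_{\suppy} \geq v\vec{1}$ and $\mat{A}^\top_{\barsuppy, \suppx}\vxstar_{\suppx} \leq v\vec{1}$ (which are a.s.\ strict, but can be written non-strict without altering probabilities).

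Next, I would perform the change of variables in two stages. The first stage maps $\mat{A}_{\suppx, \suppy} \mapsto (\mat{Q}, \vec{b}, \vec{c}, d)$ via the linear transformation $\mat{T}$ in~\eqref{eq:T-detailed}; by Claim~\ref{claim:T}, $|\det \mat{T}| = 1$, so this stage is volume-preserving. The second stage maps $(\vec{b}, \vec{c}, d) \mapsto (\vxstar, \vystar, v)$ with $\mat{Q}$ held fixed, using $\vec{c} = \mat{Q}\tilystar$, $\vec{b} = \mat{Q}^\top \tilxstar$, and $d = v + \langle \tilxstar, \vec{c}\rangle$ (the last relation follows from $v = \langle \vxstar_{\suppx}, \mat{A}_{\suppx, j}\rangle$ after substituting the entries from~\eqref{eq:T-detailed}). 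The composed Jacobian has absolute value $(\det \mat{Q})^2$, which is a.s.\ non-zero since non-degeneracy forces $\mat{Q}$ to be invertible, so this is a valid change of variables up to a null set. Constraint (ii) then becomes an identity built into the very definition of the new coordinates, and the event $\{\type_{\suppx, \suppy}(\mat{A}) = 1\}$ reduces exactly to the slack inequalities in (iii) (together with the positivity in (i), which is implicit in the natural domain of $\vxstar, \vystar$).

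The main obstacle, I expect, will be carefully verifying that $\mat{Q}$ is almost surely invertible under the conditioning---essentially an a.s.\ consequence of non-degeneracy together with strict complementarity, since the linear independence encoded in $\mat{Q}$ is dual to the non-degeneracy of the tight constraints---and checking that the slack inequalities in (iii), combined with the implicit domain constraints on $\vxstar, \vystar$, cut out precisely the set $\{\type_{\suppx, \suppy}(\mat{A}) = 1\}$ after the change of variables. The change-of-variables formula for probability measures then delivers the equality claimed in the lemma.
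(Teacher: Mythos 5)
Your proposal follows the paper's route essentially exactly: the inequality is an application of Proposition~\ref{prop:max-prob} with conditioning on $\type_{\suppx,\suppy}(\mat{A})$, and the equality comes from the same two-stage change of variables---$\mat{A}_{\suppx,\suppy}\mapsto(\mat{Q},\vec{b},\vec{c},d)$ via $\mat{T}$ (unit Jacobian by Claim~\ref{claim:T}), then $(\vec{b},\vec{c},d)\mapsto(\tilxstar,\tilystar,v)$ with $\mat{Q}$ fixed (Jacobian $\det(\mat{Q})^2$), after which $\type_{\suppx,\suppy}(\mat{A})$ reduces to the slack inequalities. The additional care you flag about a.s.\ invertibility of $\mat{Q}$, the role of Claim~\ref{claim:equi-char}, and the implicit positivity constraints on $(\vxstar,\vystar)$ is all consistent with (and slightly more explicit than) what the paper records.
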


Indeed, the first inequality above is a consequence of~\Cref{prop:max-prob}. The equality then follows from noting that, when
\[
\vec{c} = \mat{Q} \tilystar, \vec{b} = \mat{Q}^\top \tilxstar, v = d - \langle \tilxstar, \mat{Q} \tilystar \rangle \iff \mat{A}_{\suppx, \suppy} \vystar = v \vec{1}, \mat{A}^\top_{\suppy, \suppx} \vxstar = v \vec{1},
\]
the event $\type_{\suppx, \suppy} (\mat{A})$ can be equivalently expressed as $\mat{A}_{\barsuppx, \suppy} \vystar_{\suppy} \geq v \vec{1}$ and $ \mat{A}^\top_{\barsuppy, \suppx} \vxstar_{\suppx} \leq v \vec{1}$.

We first bound the probability that $\beta_{P}(\mat{A}) \defeq \min_{j \in \barsuppy} ( v - \langle \vxstar_{\suppx}, \mat{A}_{\suppx, j} \rangle )$ is close to $0$; the proof for $\beta_D(\mat{A})$ is then symmetric. The key ingredient is the following anti-concentration lemma pertaining to a conditional Gaussian distribution~\citep[Lemma 8.3]{Spielman03:Smoothed}.

\begin{lemma}[\nakedcite{Spielman03:Smoothed}]
    \label{lemma:smooth-Gauss}
    Let $g$ be a Gaussian random variable of variance $\sigma^2$ and mean of absolute value at most $1$. For $\epsilon \geq 0$, $\tau \geq 1$ and $t \leq \tau$,
    \begin{equation*}
        \pr [ g \leq t + \epsilon \mid g \geq t] \leq \frac{\epsilon \tau}{\sigma^2} e^{ \frac{\epsilon(\tau + 3)}{\sigma^2}}. 
    \end{equation*}
\end{lemma}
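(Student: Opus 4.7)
The plan is to reduce the bound to controlling the Gaussian hazard rate on the interval $[t, t+\epsilon]$. Let $S(s) \defeq \pr[g \geq s]$ denote the survival function of $g$; since $S'(s) = -\rho(s)$, we have $S(t+\epsilon)/S(t) = \exp(-\int_t^{t+\epsilon} \lambda(s)\,ds)$, where $\lambda(s) \defeq \rho(s)/S(s)$ is the hazard rate. The elementary inequality $1 - e^{-y} \leq y$ for $y \geq 0$ then gives
\begin{equation*}
    \pr[g \leq t+\epsilon \mid g \geq t] \;=\; 1 - \frac{S(t+\epsilon)}{S(t)} \;\leq\; \int_t^{t+\epsilon} \lambda(s)\, ds,
\end{equation*}
so the task reduces to bounding the integrated hazard uniformly over $s \in [t, t+\epsilon]$.

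Writing $g = \mu + \sigma Z$ with $Z \sim \cN(0,1)$ and denoting by $\lambda_Z$ the standard-normal hazard rate, the scaling $\lambda(s) = \sigma^{-1}\lambda_Z((s-\mu)/\sigma)$ holds. The classical bound $\lambda_Z(z) \leq z + 1$ (established for $z \geq 0$ via Mills' inequality $1 - \Phi(z) \geq \phi(z)/(z+1)$ and for $z < 0$ via $1 - \Phi(z) \geq 1/2$ together with monotonicity of $\lambda_Z$ implied by log-concavity of $\phi$) yields
\begin{equation*}
    \lambda(s) \;\leq\; \frac{s - \mu + \sigma}{\sigma^2}.
\end{equation*}
Using $s \leq t + \epsilon$, $t \leq \tau$, $\mu \geq -1$, and $\sigma \leq 1$, we split $(s - \mu + \sigma) \leq (s-t) + (\tau + 2)$, and integrating over $[t, t+\epsilon]$ produces
\begin{equation*}
    \int_t^{t+\epsilon} \lambda(s)\, ds \;\leq\; \frac{\epsilon\tau}{\sigma^2} \cdot \left( 1 + \frac{\epsilon/2 + 2}{\tau}\right).
\end{equation*}

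Applying $1 + y \leq e^{y}$ converts the additive correction factor into the exponential form: the plan is to verify that in the non-trivial regime where the bound can be less than $1$ (which forces $\epsilon \tau/\sigma^2$ to be small, hence $\epsilon \lesssim \sigma^2/\tau$), the correction $(\epsilon/2+2)/\tau$ is dominated by $\epsilon(\tau+3)/\sigma^2$, while outside this regime the stated inequality holds vacuously. The main obstacle is landing precisely on the constant $\tau + 3$ in the exponent: a naive Mills' bound only yields an additive correction of order $\epsilon(\tau+O(1))/\sigma^2$, and the delicate work is in carefully allocating the slack between the polynomial prefactor $\epsilon\tau/\sigma^2$ and the exponential remainder, together with the separate handling of the sub-case $t < \mu$ where $S(t) \geq 1/2$ renders Mills' inequality slack and a direct comparison argument is required.
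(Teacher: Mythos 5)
The paper offers no proof of this lemma---it is imported verbatim from Spielman and Teng---so your argument has to stand on its own. Your reduction to the integrated hazard, $\pr[g \le t+\epsilon \mid g \ge t] \le \int_t^{t+\epsilon}\lambda(s)\,ds$, together with the Mills-type bound $\lambda_Z(z)\le z+1$, is the right machinery, but the final step does not close, and the obstacle you flag is not a matter of ``allocating slack'' more carefully: it is fatal to the statement exactly as transcribed. After integrating you need $1+(\epsilon/2+2)/\tau \le e^{\epsilon(\tau+3)/\sigma^2}$, and as $\epsilon\to 0$ the left side stays at $1+2/\tau$ while the right side tends to $1$; restricting to the non-vacuous regime $\epsilon\tau/\sigma^2<1$ does not help, since $\epsilon$ may still be arbitrarily small there. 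In fact the inequality is false as stated: for $\mu=-1$, $\sigma=0.1$, $t=\tau=1$, $\epsilon=10^{-3}$ one computes $\pr[g\le t+\epsilon\mid g\ge t]\approx 0.18$, whereas $\frac{\epsilon\tau}{\sigma^2}e^{\epsilon(\tau+3)/\sigma^2}=0.1\,e^{0.4}\approx 0.15$. The underlying reason is that the hazard of $g$ at $t$ grows like $(t-\mu)/\sigma^2$, which can reach $(\tau+1)/\sigma^2>\tau/\sigma^2$, and no exponential correction survives the $\epsilon\to 0$ limit. What your argument does prove is the version with prefactor $\epsilon(\tau+2)/\sigma^2$: by monotonicity of the Gaussian hazard, $\int_t^{t+\epsilon}\lambda \le \epsilon\,\lambda(t+\epsilon)\le \epsilon(\tau+\epsilon+2)/\sigma^2 \le \frac{\epsilon(\tau+2)}{\sigma^2}e^{\epsilon/(\tau+2)}$, and this weaker form is all that the downstream applications (\Cref{prop:beta,prop:alpha}) use, up to constants. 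State and prove that corrected form rather than chase the prefactor $\tau$.

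Two smaller repairs. The uniform bound $\lambda(s)\le(s-\mu+\sigma)/\sigma^2$ fails for $s<\mu$, where the right-hand side can be negative while the hazard is positive; the fix is the one you gesture at---for $s\le\mu$, monotonicity gives $\sigma\lambda(s)\le\lambda_Z(0)=\sqrt{2/\pi}<1$, hence $\lambda(s)\le 1/\sigma\le\tau/\sigma^2$ once $\sigma\le 1$---and it should be carried out rather than deferred. Second, you silently use $\sigma\le 1$; the lemma statement omits this hypothesis, but it is genuinely necessary (for large $\sigma$ the conditional density near $t$ is of order $1/\sigma$, not $O(\tau/\sigma^2)$) and is supplied in context by \Cref{def:Gauss-perturb}, so it should be made explicit.
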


\propbeta*

\begin{proof}
By~\Cref{lemma:change-variables}, it suffices to bound
\begin{equation*}
    \max_{\suppx, \suppy} \underset{ \compA, \mat{Q}, \vxstar, \vystar, v}{\pr} [ \beta_P(\mat{A}) \leq \epsilon' \mid \mat{A}_{\barsuppx, \suppy} \vystar_{\suppy} \geq v \vec{1} \textrm{ and }   \mat{A}^\top_{\barsuppy, \suppx} \vxstar_{\suppx} \leq v \vec{1}].
\end{equation*}
By~\Cref{prop:max-prob}, it suffices to prove that for all $\suppx, \suppy, \mat{A}_{\barsuppx, \suppy}, \mat{A}_{\barsuppx, \barsuppy}, \mat{Q}, \vxstar, \vystar, v$ satisfying $\mat{A}_{\barsuppx, \suppy} \vystar_{\suppy} \geq v \vec{1}$,
\begin{align}
    \underset{ \mat{A}_{\suppx, \barsuppy}}{\pr} [ \exists j \in \barsuppy : v - \langle \vxstar_{\suppx}, \mat{A}_{\suppx, j} \rangle& \leq \epsilon' \mid \forall j \in \suppy : v - \langle \vxstar_{\suppx}, \mat{A}_{\suppx, j} \rangle \geq 0] \label{align:indepe-Abar} \\
    &\leq \sum_{ j \in \barsuppy} \underset{ \mat{A}_{\suppx, j}}{\pr} [v - \langle \vxstar_{\suppx}, \mat{A}_{\suppx, j} \rangle \leq \epsilon' \mid \forall j \in \suppy : v - \langle \vxstar_{\suppx}, \mat{A}_{\suppx, j} \rangle \geq 0] \label{align:ub-v} \\
    &\leq \sum_{ j \in \barsuppy} \underset{ \mat{A}_{\suppx, j}}{\pr} [v - \langle \vxstar_{\suppx}, \mat{A}_{\suppx, j} \rangle \leq \epsilon' \mid v - \langle \vxstar_{\suppx}, \mat{A}_{\suppx, j} \rangle \geq 0] \label{align:indep} \\
    &= \sum_{ j \in \barsuppy} \underset{ \vec{g}_j}{\pr} [\vec{g}_j \leq \epsilon' - v \mid \vec{g}_j \geq -v ]. \label{align:g-notation}
\end{align}
where in~\eqref{align:indepe-Abar} the distribution of $\mat{A}_{\suppx, \barsuppy}$ after conditioning on $\mat{A}_{\barsuppx, \suppy}, \mat{A}_{\barsuppx, \barsuppy}$, $\mat{Q}$, $\vxstar$, $\vystar$, $v$ remains the same, which is a consequence of independence per~\Cref{def:Gauss-perturb}; \eqref{align:ub-v} is an application of the union bound; \eqref{align:indep} uses the fact that the events $\{ v - \langle \vxstar_{\suppx}, \mat{A}_{\suppx, j} \rangle \geq 0 \}_{j \in \suppy}$ are pairwise independent; and \eqref{align:g-notation} defines $\vec{g}_j \defeq - \langle \vxstar_{\suppx}, \mat{A}_{\suppx, j} \rangle$, which is a Gaussian random variable with expectation $|\E[\vec{g}_j]| \leq \max_{i \in \suppx} |\mat{A}_{i, j}|$ and variance $\var[ \vec{g}_j] = \sum_{i \in \suppx} (\vxstar_i)^2 \var[\mat{A}_{i, j}] = \sigma^2 \sum_{i \in \suppx} (\vxstar_i)^2$ (by independence). In particular, by Cauchy-Schwarz, $\var[\vec{g}_j] \geq \frac{1}{|\suppx|} \sigma^2$. Further, by~\Cref{lemma:smooth-Gauss} (for $\tau = \max(1, |v|/|\E[\vec{g}_j]|)$), we have
\begin{align*}
    \underset{ \vec{g}_j}{\pr} [\vec{g}_j \leq \epsilon' - v \mid \vec{g}_j \geq -v ] &\leq \epsilon' \frac{ \max(|v|, |\E[\vec{g}_j]|)}{\var[\vec{g}_j]} e^{\epsilon' \frac{\max( 4 |\E[\vec{g}_j]|, 3 |\E[\vec{g}_j]| + |v|)}{\var[\vec{g}_j]}} \\
    &\leq \epsilon' \frac{ \min(n,m) \max(|v|, |\E[\vec{g}_j]|)}{\sigma^2} e^{\epsilon' \frac{\min(n,m) \max( 4 |\E[\vec{g}_j]|, 3 |\E[\vec{g}_j]| + |v|)}{\sigma^2}}
\end{align*}
for any $\epsilon' \geq 0$ and $j \in \barsuppy$, where we note that we applied~\Cref{lemma:smooth-Gauss} for $\vec{g}_j /|\E[\vec{g}_j]|$ (since the absolute value of the mean has to be at most $1$), which has variance $\var[\vec{g}_j]/(\E[\vec{g}_j])^2$. So, setting $\epsilon \defeq \epsilon' (|v| + 4 |\E[\vec{g}_j]|)$,
\begin{align}
    \underset{ \vec{g}_j}{\pr} \left[\vec{g}_j \leq \frac{\epsilon}{ |v| + 4 \max_{i \in \suppx} |\mat{A}_{i, j}|}  - v \mid \vec{g}_j \geq -v \right] &\leq \underset{ \vec{g}_j}{\pr} \left[\vec{g}_j \leq \frac{\epsilon}{ |v| + 4 |\E[\vec{g}_j]| }  - v \mid \vec{g}_j \geq -v \right] \notag \\
    &\leq \epsilon \frac{\min(n,m)}{\sigma^2} e^{\epsilon \frac{\min(n,m)}{\sigma^2}}.\label{eq:final-beta}
\end{align}
Now, when $ \epsilon \frac{ \min(n, m)}{\sigma^2} > 1$ the proposition is vacuously true, while in the contrary case the claim follows from~\eqref{eq:final-beta} and~\eqref{align:g-notation}.
\end{proof}
Next, we proceed with the bound on $\gamma_P(\mat{A})$. The key ingredient is the observation that a random variable with a slowly changing density function cannot be too concentrated on any any interval (\Cref{lemma:basic-smooth} due to~\citet[Lemma 8.2]{Spielman03:Smoothed}; we restate it below for convenience). Gaussian random variables have this property, as pointed out by~\citet[Lemma 8.1]{Spielman03:Smoothed}.

\begin{lemma}[\nakedcite{Spielman03:Smoothed}]
    \label{lemma:smooth-ratio}
    Let $\mu$ be the probability density function of a Gaussian random variable in $\R^d$ of variance $\sigma^2$ centered at a point of norm at most $1$. If $\dist(\vec{r}, \vec{r}') \leq \epsilon \leq 1$, then
    \[
        \frac{\mu(\vec{r}')}{\mu(\vec{r})} \geq e^{ - \frac{\epsilon (\|\vec{r}\| + 2) }{\sigma^2}}.
    \]
\end{lemma}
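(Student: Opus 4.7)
The plan is to prove this by a direct computation with the Gaussian density. Writing $\vec{c}$ for the center of the Gaussian (with $\|\vec{c}\| \leq 1$ by hypothesis) and using that a Gaussian in $\R^d$ with variance $\sigma^2$ per coordinate has density proportional to $\exp(-\|\vec{r}-\vec{c}\|^2/(2\sigma^2))$, the ratio simplifies to
\[
    \frac{\mu(\vec{r}')}{\mu(\vec{r})} = \exp\!\left( -\frac{\|\vec{r}' - \vec{c}\|^2 - \|\vec{r} - \vec{c}\|^2}{2\sigma^2}\right),
\]
so it suffices to upper bound the numerator of the exponent by $2\epsilon(\|\vec{r}\| + 2)$.

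Next, I would set $\vec{\Delta} \defeq \vec{r}' - \vec{r}$, which by assumption satisfies $\|\vec{\Delta}\| \leq \epsilon$, and expand
\[
    \|\vec{r}' - \vec{c}\|^2 - \|\vec{r} - \vec{c}\|^2 \;=\; 2\langle \vec{r} - \vec{c}, \vec{\Delta}\rangle + \|\vec{\Delta}\|^2.
\]
The cross term is controlled by Cauchy--Schwarz together with the triangle inequality $\|\vec{r} - \vec{c}\| \leq \|\vec{r}\| + \|\vec{c}\| \leq \|\vec{r}\| + 1$, giving $|2\langle \vec{r} - \vec{c}, \vec{\Delta}\rangle| \leq 2\epsilon(\|\vec{r}\| + 1)$. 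The quadratic term is absorbed using $\epsilon \leq 1$: namely $\|\vec{\Delta}\|^2 \leq \epsilon^2 \leq \epsilon$. Summing these bounds yields $2\epsilon\|\vec{r}\| + 3\epsilon \leq 2\epsilon(\|\vec{r}\| + 2)$, which is exactly what is needed.

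There is no real obstacle here; the lemma is essentially a non-asymptotic statement of the log-Lipschitz continuity of the Gaussian density, and the only mild subtlety is that the quadratic remainder $\|\vec{\Delta}\|^2$ must be absorbed into a linear-in-$\epsilon$ bound, which is precisely the role of the assumption $\epsilon \leq 1$. Everything else is bookkeeping with Cauchy--Schwarz and the triangle inequality applied to $\|\vec{c}\| \leq 1$.
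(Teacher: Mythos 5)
Your proof is correct. The paper cites this lemma directly from Spielman and Teng (2003) without reproducing the proof, so there is no in-paper argument to compare against; your derivation is the standard one and matches the argument in the original reference. To spell out the final step: you show $\|\vec{r}'-\vec{c}\|^2 - \|\vec{r}-\vec{c}\|^2 = 2\langle \vec{r}-\vec{c}, \vec{\Delta}\rangle + \|\vec{\Delta}\|^2 \leq 2\epsilon(\|\vec{r}\|+1) + \epsilon = 2\epsilon\|\vec{r}\| + 3\epsilon \leq 2\epsilon(\|\vec{r}\|+2)$, and dividing by $2\sigma^2$ and exponentiating gives exactly the claimed bound $e^{-\epsilon(\|\vec{r}\|+2)/\sigma^2}$. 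The use of $\epsilon \leq 1$ to absorb the quadratic remainder, and of $\|\vec{c}\| \leq 1$ in the triangle inequality $\|\vec{r}-\vec{c}\| \leq \|\vec{r}\|+1$, are both applied correctly.
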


\basicsmoothlem*

\propgamma*

\begin{proof}
Let $\mu_{\mat{A}}(\mat{A})$ be the probability density function of $\mat{A}$, which, by independence, can be expressed as $\prod_{ i \in [n], j \in [m]} \mu_{\mat{A}_{i, j}}$, where $\mu_{\mat{A}_{i, j}}$ is a Gaussian random variable. We first perform a change of variables from $\compA, \mat{A}_{\suppx, \suppy}$ to $\compA, \mat{Q}, \vec{b}, \vec{c}, d$, in accordance with~\eqref{eq:Q-trans}; this can be understood through the (non-singular; \Cref{claim:T}) linear transformation $\mat{A}^\flat_{\suppx, \suppy} = \mat{T} (\mat{Q}^\flat, \vec{b}, \vec{c}, d)$. To express the density in the new variables, we first note that the Jacobian of the change of variables is $|\det(\mat{T})| = 1$ (\Cref{claim:T}), and so the density on $\mat{Q}, \vec{b}, \vec{c}, d$ can be expressed as $\mu_{\mat{A}_{\suppx, \suppy} } ( \mat{T} ( \mat{Q}^\flat, \vec{b}, \vec{c}, d)) \mu_{\compA}(\compA)$.

Next, we perform a change of variables from $\compA, \mat{Q}, \vec{b}, \vec{c}, d$ to $\compA, \mat{Q}, \tilxstar, \tilystar, v$ according to the transformations $\mat{Q} \tilystar = \vec{c}$; $\mat{Q}^\top \tilxstar = \vec{b}$; and $v = d - \langle \tilxstar, \mat{Q} \tilystar \rangle$. It is easy to see that the Jacobian of the change of variables is
\[
\left| \det \left( \frac{ \partial (\compA, \mat{Q}, \vec{b}, \vec{c}, d ) }{ \partial ( \compA, \mat{Q}, \tilxstar, \tilystar, v) }  \right) \right| = \left| \det \left( \frac{ \partial (\vec{b}, \vec{c}, d ) }{ \partial (\tilxstar, \tilystar, v) }  \right) \right| = \det(\mat{Q})^2.
\]
So, the density on $\compA, \mat{Q}, \tilxstar, \tilystar, v$ reads
\[
    \mu_{\mat{A}_{\suppx, \suppy} } ( \mat{T} ( \mat{Q}^\flat, \mat{Q}^\top \tilxstar, \mat{Q}\tilystar, v + \langle \tilxstar, \mat{Q} \tilystar \rangle )) \mu_{\compA}(\compA) \det(\mat{Q})^2.
\]
By~\Cref{lemma:change-variables}, it suffices to upper bound
\[
    \max_{\suppx, \suppy} \underset{ \compA, \mat{Q}, \vxstar, \vystar, v}{\pr} [ \gamma_P(\mat{A}) \leq \epsilon \mid \mat{A}_{\barsuppx, \suppy} \vystar_{\suppy} \geq v \vec{1} \textrm{ and }  \mat{A}^\top_{\barsuppy, \suppx} \vxstar_{\suppx} \leq v \vec{1}].
\]
Further, by~\Cref{prop:max-prob}, it is in turn enough to bound ${\pr}_{\mat{Q}} [ \gamma_P(\mat{A}) \leq \epsilon]$ for all $\suppx$, $\suppy$ (for the non-trivial case where $\tilsuppx, \tilsuppy \neq \emptyset$), $\compA$, $\tilxstar$, $\tilystar$, $v$ such that $\mat{A}_{\barsuppx, \suppy} \vystar_\suppy \geq v \vec{1}$ and $\mat{A}^\top_{\barsuppy, \suppx} \vxstar_{\suppx} \leq v \vec{1}$, where the induced distribution on $\mat{Q}$ is
\[
    \mu_{\mat{A}_{\suppx, \suppy} } ( \mat{T} ( \mat{Q}^\flat, \mat{Q}^\top \tilxstar, \mat{Q} \tilystar, v + \langle \tilxstar, \mat{Q} \tilystar \rangle  ) ) \det(\mat{Q})^2.
\]
We will prove that for any $j \in \tilsuppy$ and $\mat{Q}_{ :, \tilsuppy - j}$,
\begin{equation}
    \label{eq:probQ-i}
    \underset{ \mat{Q}_{:, j} }{\pr} \left[ \dist( \mat{Q}_{:, j}, \spa(\mat{Q}_{:, \tilsuppy - j}) ) \leq \frac{\epsilon}{4 \|\mat{Q}_{:, j} \| + 4 |v| + 4 \|\mat{Q}^\flat_{:, \tilsuppy - j} \|_\infty + 3} \right] \leq \epsilon \frac{4 e \min(n, m)^2}{\sigma^2},
\end{equation}
and then apply a union bound over $j \in \tilsuppy$. Having fixed $\mat{Q}_{ :, \tilsuppy - j}$, we can express $\mat{Q}_{:, j}$ as $\qpar + t \qperp$, where $\R^{\tilsuppx} \ni \qpar \in \spa(\mat{Q}_{:, \tilsuppy - j})$ and $\R^{\tilsuppx} \ni \qperp$ is the unit vector orthogonal to $\spa(\mat{Q}_{ :, \tilsuppy-j})$. Then, $|t| = \dist( \mat{Q}_{:, j}, \spa(\mat{Q}_{ :, \tilsuppy - j}) )$ and $|\det(\mat{Q})| = t C(\mat{Q}_{ :, \tilsuppy - j} )$, where $C(\mat{Q}_{ :, \tilsuppy - j} )$ does not depend on $\mat{Q}_{:, j}$ (this can be obtained by expressing the determinant using the formula for parallelepipeds). By symmetry, we can prove~\eqref{eq:probQ-i} by bounding the probability that $t$ is at most $\epsilon$ given that $t$ is at least $0$. We can thus focus on proving
\begin{equation}
    \label{eq:target}
    \max_{ \qpar \in \spa(\mat{Q}_{ :, \tilsuppy - j})} \underset{t}{\pr} [ t \leq \epsilon \mid t \geq 0] \leq \epsilon \frac{4 e \min(n, m)^2 ( 4 \|\qpar \|_\infty + 4 |v| + 4 \|\mat{Q}^\flat_{:, \tilsuppy - j} \|_\infty  + 3)}{\sigma^2},
\end{equation}
and then~\eqref{eq:probQ-i} follows from the fact that $\|\mat{Q}_{:, j} \| \geq \|\qpar \|$. Now, the induced distribution on $t$ is proportional to
\[
    \rho(t) \defeq t^2 \prod_{(i, j) \in \suppx \times \suppy} \mu_{\mat{A}_{i, j}} \left( \langle \mat{T}_{i, j}, \vec{r}_{i, j}(t) \rangle  \right)
\]
for $\vec{r}_{i, j}(t)$ defined as
\begin{align*}
    ( \qpar + t \qperp, \mat{Q}^\flat_{ :, \tilsuppy - j}, \mat{Q}_{\tilsuppy - j, :}^\top \tilxstar, \langle \tilxstar, \qpar + t \qperp \rangle, \mat{Q}_{:, \tilsuppy - j} \tilystar_{\tilsuppy - j} + \tilystar_j (\qpar + t \qperp), \\
    v + \langle \tilxstar, \mat{Q}_{:, \tilsuppy - j} \tilystar_{\tilsuppy - j} \rangle + \tilystar_j \langle \tilxstar, \qpar + t \qperp \rangle ).
\end{align*}
We now want to apply~\Cref{lemma:basic-smooth}. To that end, we have
\begin{align}
    | \langle \mat{T}_{i, j}, \vec{r}_{i, j}(t) - \vec{r}_{i, j}(t') \rangle |^2 &\leq \|\mat{T}_{i, j} \|^2 \| \vec{r}_{i, j}(t) - \vec{r}_{i, j}(t') \|^2 \notag \\
    &\leq 4 (t - t' )^2 \| (\qperp, \langle \tilxstar, \qperp \rangle, \tilystar_j \qperp, \tilystar_j \langle \tilxstar, \qperp \rangle  )  \|^2 \label{align:T-2norm} \\
    &\leq 16 (t - t')^2,\label{align:t-t'}
\end{align}
where~\eqref{align:T-2norm} follows from the fact that $\| \mat{T}_{i, j} \|_2 \leq 2$ (\Cref{claim:T}), and~\eqref{align:t-t'} follows from noting that $\| \qperp \|, \|\tilxstar\|, \|\tilystar\| \leq 1$. Moreover, again by~\Cref{claim:T},
\[
    | \langle \mat{T}_{i, j}, \vec{r}_{i, j}(t) \rangle | \leq \|\mat{T}_{i, j} \|_1 \| \vec{r}_{i, j}(t) \|_\infty \leq 4 (\|\qpar\|_\infty + |v| + \|\mat{Q}_{:, \tilsuppy - j}^\flat\|_\infty + t).
\]
Let $0 \leq t \leq t' \leq \delta \leq \frac{1}{4}$ for $\delta = \frac{\sigma^2}{ 4 |\suppx| |\suppy|(4  \|\qpar\| + 4 |v| + 4 \|\mat{Q}^\flat_{:, \tilsuppy - j} \|_\infty + 3)}$. \Cref{lemma:smooth-ratio} then implies that
\[
    \frac{ \mu_{\mat{A}_{i, j}}( \langle \mat{T}_{i, j}, \vec{r}_{i, j}(t') \rangle ) }{\mu_{\mat{A}_{i, j}} (\langle \mat{T}_{i, j}, \vec{r}_{i, j}(t) \rangle)} \geq e^{ - \frac{1}{|\suppx| |\suppy| } }.
\]
Thus,
\[
    \frac{\rho(t')}{\rho(t)} \geq \left( \frac{t'}{t} \right)^2 \prod_{ (i, j) \in \suppx \times \suppy} \frac{ \mu_{\mat{A}_{i, j}}( \langle \mat{T}_{i, j}, \vec{r}_{i, j}(t') \rangle ) }{\mu_{\mat{A}_{i, j}}( \langle \mat{T}_{i, j}, \vec{r}_{i, j}(t) \rangle ) } \geq e^{-1}.
\]
We conclude that~\eqref{eq:target} can be obtained from~\Cref{lemma:basic-smooth}, and the theorem follows.
\end{proof}

Finally, we bound the probability that $\alpha_P(\mat{A})$ (\Cref{item:alpha}) is close to $0$; $\alpha_D(\mat{A})$ can be bounded in a similar fashion.

\propalpha*

\begin{proof}
    By~\Cref{lemma:change-variables}, it suffices to bound
\[
    \max_{\suppx, \suppy} \underset{ \compA, \mat{Q}, \vxstar, \vystar, v}{\pr} [ \alpha_P(\mat{A}) \leq \epsilon \mid \mat{A}_{\barsuppx, \suppy} \vystar_{\suppy} \geq v \vec{1} \textrm{ and }  \mat{A}^\top_{\barsuppy, \suppx} \vxstar_{\suppx} \leq v \vec{1}],
\]
where we recall that the induced probability density function on $\compA$, $\mat{Q}$, $\vxstar$, $\vystar$, $v$ reads
\[
    \mu_{\mat{A}_{\suppx, \suppy} } ( \mat{T} ( \mat{Q}^\flat, \mat{Q}^\top \tilxstar, \mat{Q}\tilystar, v + \langle \tilxstar, \mat{Q} \tilystar \rangle )) \mu_{\mat{A}_{\suppx, \barsuppy}}(\mat{A}_{\suppx, \barsuppy}) \mu_{\mat{A}_{\barsuppx, \suppy}}(\mat{A}_{\barsuppx, \suppy}) \mu_{\mat{A}_{\barsuppx, \barsuppy}}(\mat{A}_{\barsuppx, \barsuppy})  \det(\mat{Q})^2.
\]
We consider the non-trivial case where $\tilsuppx, \tilsuppy \neq \emptyset$. We will perform a further change of variables. Namely, let $\vec{a} = \mat{A}_{\barsuppy, i}$ for $i \in \suppx \setminus \tilsuppx$. We map $\mat{A}_{\suppx, \barsuppy}$ to $\barA_{\tilsuppx, \barsuppy} \defeq \mat{A}_{\tilsuppx, \barsuppy} - \vec{1} \vec{a}^\top$, $\vec{a}$, so that $\mat{A}^\top_{\barsuppy, \suppx} \vxstar_{\suppx} \leq v \vec{1}$ can be equivalently expressed as $\barA^\top_{\barsuppy, \tilsuppx} \tilxstar \leq v \vec{1} - \vec{a}$. The induced density function is now proportional to
\[
    \mu_{\mat{A}_{\suppx, \suppy} } ( \mat{T} ( \mat{Q}^\flat, \mat{Q}^\top \tilxstar, \mat{Q}\tilystar, v + \langle \tilxstar, \mat{Q} \tilystar \rangle )) \mu_{\vec{a}}(\vec{a}) \mu_{\mat{A}_{\tilsuppx, \barsuppy}}( \barA_{\tilsuppx, \barsuppy} + \vec{1} \vec{a}^\top ) \nu(\cdot),
\]
where $\nu(\cdot)$ does not depend on $\tilxstar$ and $\vec{a}$. By~\Cref{prop:max-prob}, it is enough to show that for any $\suppx, \suppy, \barA_{\tilsuppx, \barsuppy}, \mat{A}_{\barsuppx, \suppy}, \mat{A}_{\barsuppx, \barsuppy}, \mat{Q}, \vystar, v$ satisfying $\mat{A}_{\barsuppx, \suppy} \vystar \geq v \vec{1}$,
\begin{align}
    \underset{\tilxstar, \vec{a}}{\pr} \left[\alpha_P \leq \frac{\epsilon}{\max( ( \|\mat{Q}^\flat\|_\infty +1 )^2, ( 1 + \|\barA_{\tilsuppx, \barsuppy}^\flat \|_\infty )( 5 \|\barA_{\tilsuppx, \barsuppy}^\flat \|_\infty + |v| + 4 ) )} \mid \barA^\top_{\barsuppy, \tilsuppx} \tilxstar \leq v \vec{1} - \vec{a} \right] \notag \\ 
    \leq \epsilon \frac{8 e^2 m n \min(n, m)}{\sigma^2},\label{eq:target-alphafirst}
\end{align}
where the induced distribution on $\tilxstar$ and $\vec{a}$ is proportional to
\begin{equation}
    \label{eq:dens-a-x}
    \mu_{\mat{A}_{\suppx, \suppy} } ( \mat{T} ( \mat{Q}^\flat, \mat{Q}^\top \tilxstar, \mat{Q}\tilystar, v + \langle \tilxstar, \mat{Q} \tilystar \rangle )) \mu_{\vec{a}}(\vec{a}) \mu_{\mat{A}_{\tilsuppx, \barsuppy}}( \barA_{\tilsuppx, \barsuppy} + \vec{1} \vec{a}^\top).
\end{equation}
We see that $\tilxstar$ is independent of $\vec{a}$ and $\{ \vec{a}_j \}_{j \in \barsuppy}$ are pairwise independent. Thus, conditioning on the event $\barA^\top_{\barsuppy, \tilsuppx} \tilxstar \leq v \vec{1} - \vec{a}$, the induced distribution on $\tilxstar$ is proportional to
\[
    \mu_{\mat{A}_{\suppx, \suppy} } ( \mat{T} ( \mat{Q}^\flat, \mat{Q}^\top \tilxstar, \mat{Q}\tilystar, v + \langle \tilxstar, \mat{Q} \tilystar \rangle )) \prod_{j \in \barsuppy} \pr_{\vec{a}_j} [ \langle \barA_{\tilsuppx, j}, \tilxstar \rangle \leq v - \vec{a}_j ].
\]
We can prove~\eqref{eq:target-alphafirst} by showing that for any fixed $i \in \tilsuppx$ and $\tilxstar_{\tilsuppx - i}$,
\begin{align}
    \underset{\tilxstar_i}{\pr}\left[ \tilxstar_i \leq \frac{\epsilon}{ \max( ( \|\mat{Q}^\flat\|_\infty +1 )^2, ( 1 + \|\barA_{\tilsuppx, \barsuppy}^\flat \|_\infty )( 5 \|\barA_{\tilsuppx, \barsuppy}^\flat \|_\infty + |v| + 4 ) ) } \mid \barA^\top_{\barsuppy, \tilsuppx} \tilxstar \leq v \vec{1} - \vec{a} \right] \notag \\ 
    \leq \epsilon \frac{8 e^2 m \min(n, m)}{\sigma^2},\label{eq:target-alpha}
\end{align}
and then applying the union bound over all $i \in \tilsuppx$. 
Having fixed $\tilxstar_{\tilsuppx - i}$, the induced density on $\tilxstar_i$, say $\rho(t)$, is proportional to $\rho_1(t) \cdot \rho_2(t)$, where
\[
    \rho_1(t) \defeq \mu_{\mat{A}_{\suppx, \suppy} } ( \mat{T} ( \mat{Q}^\flat, \mat{Q}_{:, \tilsuppx - i}^\top \tilxstar_{\tilsuppx - i} + t \mat{Q}^\top_{:, i}, \mat{Q}\tilystar, v + \langle \tilxstar_{\tilsuppx - i}, \mat{Q}_{\tilsuppx - i, :} \tilystar \rangle + t \langle \mat{Q}_{i, :}, \tilystar \rangle ))
\]
and
\[
    \rho_2(t) \defeq \prod_{j \in \barsuppy} \pr_{\vec{a}_j} [ \langle \barA_{j, \tilsuppx-i}, \tilxstar_{\tilsuppx - i} \rangle + \barA_{i, j} t \leq v - \vec{a}_j ].
\]
We will first apply~\Cref{lemma:basic-smooth} to bound $\rho_1(t')/\rho_1(t)$ for $0 \leq t \leq t' \leq \delta \leq 1$ and a sufficiently small $\delta$. We define
\[
    \vec{r}_{i, j}(t) \defeq (\mat{Q}^\flat, \mat{Q}_{:, \tilsuppx - i}^\top \tilxstar_{\tilsuppx - i} + t \mat{Q}^\top_{:, i}, \mat{Q}\tilystar, v + \langle \tilxstar_{\tilsuppx - i}, \mat{Q}_{\tilsuppx - i, :} \tilystar \rangle + t \langle \mat{Q}_{i, :}, \tilystar \rangle),
\]
so that $\rho_1(t) = \prod_{(i,j) \in \suppx \times \suppy} \mu_{\mat{A}_{i, j}}( \langle \mat{T}_{i, j}, \vec{r}_{i, j}(t) \rangle )$. Then, we have
\begin{align*}
    |\langle \mat{T}_{i, j}, \vec{r}_{i, j}(t) - \vec{r}_{i, j}(t') \rangle| \leq  4 |t - t'| \| \mat{Q}^\flat \|_\infty,
\end{align*}
where we used~\Cref{claim:T}. Further,
\[
    | \langle \mat{T}_{i, j}, \vec{r}_{i, j}(t) \rangle | \leq (t + 1) \|\mat{Q}^\flat \|_\infty,
\]
and so \Cref{lemma:smooth-ratio} implies that for $\delta \leq \frac{1}{4 \|\mat{Q}^\flat \|_\infty }$,
\[
    \frac{\mu_{\mat{A}_{i, j}}( \langle \mat{T}_{i, j}, \vec{r}_{i, j}(t') \rangle ) }{\mu_{\mat{A}_{i, j}} (\langle \mat{T}_{i, j}, \vec{r}_{i, j}(t) \rangle)} \geq e^{ - \frac{8 \delta \|\mat{Q}^\flat\|_\infty ( \|\mat{Q}^\flat\|_\infty + 1) }{\sigma^2}}.
\]
As a result, for $\delta \leq \frac{\sigma^2}{ 8 |\suppx| |\suppy| \|\mat{Q}^\flat\|_\infty ( \| \mat{Q}^\flat \|_\infty + 1)}$,
\[
    \frac{\rho_1(t')}{\rho_1(t)} = \prod_{(i,j) \in \suppx \times \suppy} \frac{\mu_{\mat{A}_{i, j}}( \langle \mat{T}_{i, j}, \vec{r}_{i, j}(t') \rangle ) }{\mu_{\mat{A}_{i, j}} (\langle \mat{T}_{i, j}, \vec{r}_{i, j}(t) \rangle)} \geq e^{-1}.
\]
Next, we focus on lower bounding $\rho_2(t')/\rho_2(t)$. From~\eqref{eq:dens-a-x}, it is not hard to see that $\vec{a}_j$ is a Gaussian random variable with expectation $| \E[\vec{a}_j] | \leq 1 + \| \barA_{\tilsuppx, \barsuppy}^\flat \|_\infty$ and variance $\var[\vec{a}_j] \geq \frac{\sigma^2}{\min(n, m)}$. Also,
\begin{align*}
    \frac{\rho_2(t')}{\rho_2(t)} &= \prod_{j \in \barsuppy} \frac{ \pr_{\vec{a}_j} [ \langle \barA_{\tilsuppx-i, j}, \tilxstar_{\tilsuppx - i} \rangle + \barA_{i, j} t' \leq v - \vec{a}_j] }{ \pr_{\vec{a}_j} [ \langle \barA_{\tilsuppx-i, j}, \tilxstar_{\tilsuppx - i} \rangle + \barA_{i, j} t \leq v - \vec{a}_j] } \\
    &\geq \prod_{j \in \barsuppy} \pr_{\vec{a}_j} [ \langle \barA_{\tilsuppx-i, j}, \tilxstar_{\tilsuppx - i} \rangle + \barA_{i, j} t' \leq v - \vec{a}_j \mid \langle \barA_{\tilsuppx-i, j}, \tilxstar_{\tilsuppx - i} \rangle + \barA_{i, j} t \leq v - \vec{a}_j].
\end{align*}
By \Cref{lemma:smooth-Gauss} (for $\tau = (2 \|\barA^\flat_{\tilsuppx, \barsuppy} \|_\infty + |v| + 1)/(1 + \|\barA^\flat_{\tilsuppx, \barsuppy} \|_\infty)$),
\begin{align*}
    \pr_{\vec{a}_j} [ \langle \barA_{\tilsuppx-i, j}, \tilxstar_{\tilsuppx - i} \rangle& + \barA_{i, j} t' \leq v - \vec{a}_j \mid \langle \barA_{\tilsuppx-i, j}, \tilxstar_{\tilsuppx - i} \rangle + \barA_{i, j} t \leq v - \vec{a}_j] \\ 
    \geq 1 - \delta &\frac{ \min(n,m) \|\barA^\flat_{\tilsuppx, \barsuppy} \|_\infty  ( 2 \|\barA^\flat_{\tilsuppx, \barsuppy} \|_\infty + |v| + 1) }{\sigma^2} e^{ \delta \frac{ \min(n, m) \|\barA^\flat_{\tilsuppx, \barsuppy} \|_\infty (5 \|\barA^\flat_{\tilsuppx, \barsuppy} \|_\infty + |v| + 4) }{\sigma^2}}.
\end{align*}
Thus, for $\delta \leq \frac{1}{2 e m} \frac{\sigma^2}{ \min(n,m) \|\barA^\flat_{\tilsuppx, \barsuppy} \|_\infty (5 \|\barA^\flat_{\tilsuppx, \barsuppy} \|_\infty + |v| + 4)}$,
\[
     \pr_{\vec{a}_j} [ \langle \barA_{\tilsuppx-i, j}, \tilxstar_{\tilsuppx - i} \rangle + \barA_{i, j} t' \leq v - \vec{a}_j \mid \langle \barA_{\tilsuppx-i, j}, \tilxstar_{\tilsuppx - i} \rangle + \barA_{i, j} t \leq v - \vec{a}_j] \geq 1 - \frac{1}{2m},
\]
which in turn implies that
\[
    \frac{\rho_2(t')}{\rho_2(t)} \geq \left( 1 - \frac{1}{2m} \right)^{\barsuppy} \geq e^{-1}.   
\]
We conclude that $\frac{\rho(t')}{\rho(t)} \geq e^{-2}$, and~\eqref{eq:target-alpha} follows from~\Cref{lemma:basic-smooth} by lower bounding the value of $\delta$. This completes the proof.
\end{proof}

Armed with~\Cref{prop:alpha,prop:beta,prop:gamma}, \Cref{theorem:informal-kappa} can be obtained from~\Cref{theorem:characterization}, in conjunction with a union bound and the fact that $\Amax \leq \poly(n,m)$ with high probability (by Gaussian concentration).
\subsection{Proof of Theorem~\ref{theorem:informal}}
\label{sec:proof-main}

Having established~\Cref{theorem:informal-kappa}, here we explain how existing results imply~\Cref{theorem:informal}. We first focus on $\ogda$. We also take the opportunity to explain in more detail how~\citet{Wei21:Linear} established~\Cref{def:eb}, which was sketched earlier in~\Cref{sec:technical}. Our treatment of the rest of the algorithms will be more brief.

\paragraph{Metric subregularity} A central ingredient in the approach of~\citet{Wei21:Linear} is what they refer to as saddle-point \emph{metric subregularity}, stated below as~\Cref{def:subreg}. For the sake of generality, we give the definition for a general objective function $f: \cX \times \cY \ni (\vx, \vy) \mapsto f(\vx, \vy)$, assumed to be continuously differentiable; \eqref{eq:zero-sum} corresponds to the bilinear case $f(\vx, \vy) = \langle \vx, \mat{A} \vy \rangle$. We use again the notation $F(\vz) \defeq (\nabla_{\vx} f(\vx, \vy), - \nabla_{\vy} f(\vx, \vy))$, where $\R^{n + m} \ni \vz \defeq (\vx, \vy)$. We also let $L \in \R_{> 0}$ be a Lipschitz continuity parameter for $F$ with respect to $\|\cdot\|$, so that $\| F(\vz) - F(\vz') \| \leq L \|\vz - \vz'\|$; in the context of~\eqref{eq:zero-sum}, one can always take $L \defeq \| \mat{A}\|$.

\begin{definition}[Metric subregularity for saddle-point problems~\citep{Wei21:Linear}]
    \label{def:subreg}
    A saddle-point problem satisfies \emph{metric subregularity} if there exists a problem-dependent parameter $\kappa' \in \R_{> 0}$ such that for any $\vz \in \cZ$ and $\vzstar \defeq \proj_{\cZstar} (\vz)$,
    \begin{equation}
        \label{eq:subreg}
        \sup_{\vz' \in \cZ} \frac{ \langle F(\vz), \vz - \vz' \rangle}{\| \vz - \vz'\|} \geq \kappa' \| \vz - \vzstar\|.
    \end{equation}
\end{definition}
The nomenclature of~\Cref{def:subreg} can be justified by the fact that~\eqref{eq:subreg} is equivalent to a common type of metric subregularity~\citep[Appendix F]{Wei21:Linear}; for more background, we refer to~\citet{Dontchev09:Implicit}. We further remark that~\citet{Wei21:Linear} introduced~\eqref{eq:subreg} in a more general form by allowing an exponent $\beta \in \R_{\geq 0}$ in the right-hand side, but that additional flexibility is not relevant for our purposes.\footnote{\citet{Wei21:Linear} impose~\eqref{eq:subreg} only for points $\vz \in \cZ \setminus \cZstar$, which is easily seen to be equivalent.}

Now, there an obvious connection between~\Cref{def:eb} and \Cref{def:subreg} in bilinear problems with bounded domain; namely, we have
\[
  \sup_{\vz' \in \cZ} \frac{ \langle F(\vz), \vz - \vz' \rangle}{\| \vz - \vz'\|} \geq \frac{1}{2} \Phi(\vz),
\]
where we used the fact that $\langle F(\vz), \vz) = 0$ and $\|\vz - \vz'\| \leq \diam_{\cZ} = 2$. So, \Cref{def:eb} with respect to parameter $\kappa$ implies \Cref{def:subreg} with parameter $\kappa' \defeq \kappa/2$.

\paragraph{Linear convergence of $\ogda$}

Under metric subregularity, in the sense of~\Cref{def:subreg}, \citet{Wei21:Linear} were able to establish that $\ogda$ converges to the set $\cZstar$ at a linear rate:

\begin{theorem}[\nakedcite{Wei21:Linear}]
    \label{theorem:linear}
    Consider a saddle-point problem~\eqref{eq:zero-sum} satisfying metric subregularity with respect to some $\connum \in \R_{> 0}$. For any $\eta \leq \frac{1}{8L}$, the iterates $(\vz^{(\tau)})_{1 \leq \tau \leq t}$ of $\ogda$ satisfy
    \begin{equation}
        \dist(\vz^{(t)}, \cZstar) \leq 8 \left( 1 + \frac{16\eta^2 (\connum)^2}{81} \right)^{-t/2} \dist(\hvz^{(1)}, \cZstar).
    \end{equation}
\end{theorem}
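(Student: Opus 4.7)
The plan is to combine the standard one-step potential analysis for $\ogda$ with the metric subregularity condition (\Cref{def:subreg}) in order to turn a sublinear contraction into a linear one, following the classical Luo--Tseng ``error bound $\Rightarrow$ linear convergence'' template adapted to the monotone variational inequality setting. First I would exploit that $F(\vz) = (\mat{A}\vy, -\mat{A}^\top \vx)$ is a monotone (indeed affine) operator, so that $\langle F(\vz) - F(\vz'), \vz - \vz' \rangle = 0$, and derive the familiar $\ogda$ descent inequality of the form
\[
\Psi_{t+1} \leq \Psi_t - c_1 \|\hvz^{(t)} - \vz^{(t)}\|^2 - c_2 \|\vz^{(t)} - \vz^{(t-1)}\|^2,
\]
where $\Psi_t \defeq \|\hvz^{(t)} - \vzstar\|^2 + c_3 \|\vz^{(t)} - \vz^{(t-1)}\|^2$ is the standard $\ogda$ potential, and $c_1, c_2, c_3 > 0$ are absolute constants provided $\eta \leq 1/(8L)$. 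This is essentially a telescoping of the usual Rakhlin--Sridharan/Chiang-style identity, using monotonicity of $F$ to drop the cross term $\langle F(\vz^{(t)}), \vz^{(t)} - \vzstar \rangle \geq 0$.

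The second step, which is where metric subregularity enters, is to bound the gap on the left-hand side of \eqref{eq:subreg} at $\vz^{(t)}$ by the two residuals that appear in the descent inequality above. Starting from the projection optimality condition $\langle \hvz^{(t)} - \eta F(\vz^{(t-1)}) - \vz^{(t)}, \vz' - \vz^{(t)} \rangle \leq 0$ for all $\vz' \in \cZ$, I would rearrange to
\[
\langle F(\vz^{(t-1)}), \vz^{(t)} - \vz' \rangle \leq \tfrac{1}{\eta} \langle \hvz^{(t)} - \vz^{(t)}, \vz^{(t)} - \vz' \rangle,
\]
then add and subtract $F(\vz^{(t)})$ and invoke $L$-Lipschitzness of $F$, yielding
\[
\frac{\langle F(\vz^{(t)}), \vz^{(t)} - \vz' \rangle}{\|\vz^{(t)} - \vz'\|} \leq \tfrac{1}{\eta}\|\hvz^{(t)} - \vz^{(t)}\| + L\|\vz^{(t)} - \vz^{(t-1)}\|.
\]
Taking the supremum over $\vz' \in \cZ$ and applying \Cref{def:subreg} gives $\connum \|\vz^{(t)} - \vzstar\| \leq \tfrac{1}{\eta}\|\hvz^{(t)} - \vz^{(t)}\| + L \|\vz^{(t)} - \vz^{(t-1)}\|$.

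Third, I would square this bound and use the triangle inequality $\|\hvz^{(t)} - \vzstar\|^2 \leq 2\|\vz^{(t)} - \vzstar\|^2 + 2\|\hvz^{(t)} - \vz^{(t)}\|^2$ to transfer control from $\|\vz^{(t)} - \vzstar\|^2$ to $\|\hvz^{(t)} - \vzstar\|^2$ (and hence to $\Psi_t$). Plugging this back into the descent inequality, the two negative residual terms combine to dominate a quantity of order $\eta^2 (\connum)^2 \|\hvz^{(t)} - \vzstar\|^2$, so that, after calibrating constants and using $\Psi_t \leq O(1) \cdot \|\hvz^{(t)} - \vzstar\|^2$,
\[
\Psi_{t+1} \leq \left(1 + \tfrac{16 \eta^2 (\connum)^2}{81}\right)^{-1} \Psi_t.
\]
Telescoping and passing back from $\Psi_t$ to $\dist(\vz^{(t)}, \cZstar)^2$ (using $\dist(\vz^{(t)}, \cZstar)^2 \leq 2\|\hvz^{(t)} - \vzstar\|^2 + 2\|\vz^{(t)} - \hvz^{(t)}\|^2 \lesssim \Psi_t$) and from $\Psi_1$ to $\dist(\hvz^{(1)}, \cZstar)^2$ accounts for the overall factor $8$ in the statement.

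The main obstacle will be the constant-chasing in the final step: the $1/\eta^2$ factor from the projection optimality bound inflates the coefficient needed on $\|\hvz^{(t)} - \vz^{(t)}\|^2$ in the descent inequality, and simultaneously the potential weight $c_3$ in front of $\|\vz^{(t)} - \vz^{(t-1)}\|^2$ must be tuned so that the resulting quadratic form is positive definite for every $\eta \leq 1/(8L)$. A naive potential with $c_3 = 1$ will likely lose a constant factor; matching the exact constant $16/81$ (and the prefactor $8$) will require the more refined potential used in~\citet{Wei21:Linear}, but the qualitative shape of the argument --- a one-step descent inequality in iterate differences, closed via the subregularity bound --- should be the same.
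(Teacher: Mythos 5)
This theorem is cited verbatim from \citet{Wei21:Linear}; the paper you are working with does not reprove it, so there is no in-paper proof to compare against. What one can assess is whether your sketch is a plausible reconstruction of the Wei et al.\ argument, and here the high-level shape is right: a one-step potential inequality for $\ogda$ driven by the projection optimality conditions and the (skew-symmetric, hence monotone) structure of $F$, together with a lemma that converts the residuals $\|\hvz^{(t)} - \vz^{(t)}\|$, $\|\vz^{(t)} - \vz^{(t-1)}\|$ into a lower bound on $\sup_{\vz'} \langle F(\vz^{(t)}), \vz^{(t)} - \vz' \rangle / \|\vz^{(t)} - \vz'\|$, to which metric subregularity is applied. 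Your chain from the projection first-order condition to $\connum\,\|\vz^{(t)} - \proj_{\cZstar}(\vz^{(t)})\| \le \tfrac{1}{\eta}\|\hvz^{(t)} - \vz^{(t)}\| + L\|\vz^{(t)} - \vz^{(t-1)}\|$ is the correct bridge.

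Two caveats, both of which you partially flag. First, the constant-chasing: the prefactor $8$ and the rate $16/81$ are not something the naive potential $\Psi_t = \|\hvz^{(t)} - \vzstar\|^2 + c_3\|\vz^{(t)} - \vz^{(t-1)}\|^2$ will reproduce; the $1/\eta$ blow-up in the subregularity bridge must be absorbed by carefully tuned residual coefficients, and the analysis in \citet{Wei21:Linear} additionally case-splits depending on whether the current residuals are small or large relative to the distance to $\cZstar$. Second, a point you do not flag: the ``$\vzstar$'' appearing in the subregularity bound is $\proj_{\cZstar}(\vz^{(t)})$, which changes with $t$, while the natural potential tracks $\dist(\hvz^{(t)}, \cZstar)$ rather than distance to a fixed optimum. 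Your step $\|\hvz^{(t)} - \vzstar\|^2 \le 2\|\vz^{(t)} - \vzstar\|^2 + 2\|\hvz^{(t)} - \vz^{(t)}\|^2$ silently uses a fixed $\vzstar$; to close the loop one must argue in terms of $\dist(\cdot, \cZstar)$ throughout (using that the projection onto $\cZstar$ is $1$-Lipschitz, or that the descent inequality holds for \emph{every} $\vzstar \in \cZstar$ and can then be minimized), otherwise the telescoping does not go through cleanly when $\cZstar$ is not a singleton. Neither issue invalidates the approach, but both must be handled to get the stated constants.
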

As a result, \Cref{theorem:linear} implies that $\ogda$ guarantees $\dist(\vz^{(t)}, \cZstar) \leq \epsilon$ so long as
\begin{equation}
    \label{eq:numiter}
    t \geq 2 \left\lceil \frac{\log \left( \frac{8 \diam_{\cZ}}{\epsilon} \right) }{\log \left( 1 + \frac{(\connum)^2}{324 \|\mat{A} \|^2} \right) } \right\rceil.
\end{equation}
In conjunction with~\Cref{theorem:characterization} and~\Cref{prop:alpha,prop:beta,prop:gamma}, this immediately implies that $\ogda$ has a polynomial smoothed complexity with high probability, as claimed earlier in~\Cref{theorem:informal}.

Before we proceed, it is instructive to explain how~\citet{Wei21:Linear} treated the error bound in bilinear problems where $\cX$ and $\cY$ are polyhedral sets. As we explained earlier, it is enough to show that for any $\vx \in \cX$ and $\vy \in \cY$,
\[
\begin{aligned}
    \max_{\vy \in \cY} \vx^\top \mat{A} \vy - v \geq \kappa \| \vx - \proj_{\cXstar}(\vx) \|,\\
    v - \min_{\vx \in \cX} \vx^\top \mat{A} \vy \geq \kappa \| \vy - \proj_{\cYstar}(\vy) \|.
\end{aligned}
\]
We focus on the first inequality, which is with respect to Player $x$. We let $\cX \defeq \{ \vx \in \R^n : \convec_i^\top \vx \leq b_i \quad \forall i \in [\numcon_x] \}$, where $\ell_x$ denotes the number of vertices of $\cX$. We also let $\optvec_j \defeq \mat{A} \vy_j$, where $\vy_j$ denotes the $j$th vertex of $\cY$; for simplicity, we will denote by $k_y \in \N$ the number of vertices of $\cY$. We consider a fixed $\vx \in \cX \setminus \cXstar$ and $\vxstar = \proj_{\cXstar}(\vx)$.

It is easy to see that the set of optimal strategies for Player $x$, $\cXstar \defeq \{ \vx \in \cX : \max_{\vy \in \cY} \langle \vx, \mat{A} \vy \rangle \leq v \}$, can be expressed as
\begin{equation*}
    \cXstar \defeq \left\{ \vx \in \R^n: \convec_i^\top \vx \leq b_i, \optvec_j^\top \vx \leq v \quad \forall (i, j) \in [\numcon_x] \times [\numopt_y] \right\}.
\end{equation*}
Indeed, any point $\vy \in \cY$ is a convex combination of the vertices of $\cY$, and the converse direction is also obvious. A feasibility constraint $i \in [\numcon_x]$ is said to be \emph{tight} if $\convec_i^\top \vxstar = b_i$; similarly, an optimality constraint $j \in [\numopt_y]$ is tight if $\optvec_j^\top \vxstar = v$. We let $L_x = L_x(\vxstar) \subseteq [\ell_x]$ be the set of tight feasibility constraints and $K_y = K_y(\vxstar) \subseteq [k_y]$ be the set of tight optimality constraints. We can assume without any loss that $L_x, K_y \neq \emptyset$. It is well-known (\emph{e.g.},~\citep{Rockafellar15:Convex}) that the \emph{normal cone} of $\cXstar$ at $\vxstar$ with respect to $\cXstar$ can be expressed as
\begin{equation*}
    N_{\vxstar} \defeq \left\{ \sum_{i \in L_x} p_i \convec_i + \sum_{j \in K_y} q_j \optvec_j \quad \forall (\vec{p}, \vec{q}) \in \R^{L_x}_{\geq 0} \times \R^{K_y}_{\geq 0} \right\}.
\end{equation*}
\citet{Wei21:Linear} also define $M_{\vxstar} \subseteq N_{\vxstar}$ as
\begin{equation*}
    N_{\vxstar} \cap \left\{ \convec_i^\top \vx \leq 0 \quad \forall i \in L_x \right\}.
\end{equation*}
Now, the main parameter of interest that relates to~\Cref{def:eb} in the analysis of~\citet{Wei21:Linear} stems from the following quantity.
\begin{definition}
    \label{def:C}
    We let $C \in \R_{> 0}$ be defined as the infimum over $(0, \infty)$ so that 
    \begin{equation}
        \label{eq:C}
    \left\{ \sum_{i \in L_x} p_i \convec_i + \sum_{j \in K_y} q_j \optvec_j, 0 \leq p_i, q_j \leq C \right\} \supseteq M_{\vxstar} \cap \cB_\infty,
\end{equation}
where $\cB_\infty \subseteq \R^n$ is the set of points with $\ell_\infty$ norm upper bounded by $1$.
\end{definition}

By definition of $M_{\vxstar}$, it is evident that there always exists a finite problem-dependent parameter $C \in \R_{> 0}$ such that~\Cref{def:C} is satisfied. It is then not hard to show that
\begin{equation*}
    \max_{\vy \in \cY} \vx^\top \mat{A} \vy - v \geq \frac{1}{C |K_y| } \| \vx - \proj_{\cX}(\vxstar) \|.
\end{equation*}
Assuming that the number of vertices is polynomial in the dimensions,\footnote{In fact, by virtue of Carath\'eodory's theorem, one can refine~\Cref{def:C} so that this holds even when the number of vertices is exponential in the dimensions. Namely, a point $\vec{v} \in M_{\vxstar} \cap \cB_\infty$ can be written as the conical combination of at most $n$ of the vectors describing the cone in~\eqref{eq:C}, thereby maintaining feasibility. This observation can be used to refine the (worst-case) analysis of~\citet{Wei21:Linear} to, for example, extensive-form games wherein the number of vertices is typically exponential in the dimensions.} this shows that \Cref{def:C} essentially captures the complexity of satisfying~\Cref{def:eb}. As we explained earlier in~\Cref{sec:technical}, one challenge is that the constraint matrix of the linear program induced by~\Cref{def:C} depends both on the payoff matrix $\mat{A}$ as well as the set of constraints. It is thus unclear how to use existing results in the model of smoothed complexity~\citep{Dunagan11:Smoothed} to bound $C$. The second and more important challenge revolves around the fact that~\Cref{def:C} depends solely on the tight constraints of the optimal solution, which in turn depends on the randomness of $\mat{A}$. Under our characterization, the latter challenge was addressed earlier in~\Cref{sec:smoothed-geo}.

Continuing for $\omwu$, we again rely on the analysis of~\citet{Wei21:Linear}, which relates the rate of convergence of $\omwu$ to three quantities. The first one~\citep[Definition 3]{Wei21:Linear} is similar to~\Cref{def:subreg}, but with the difference that the maximization is now constrained to be over points whose support is a subset of the support of the equilibrium; namely,
\begin{equation}
    \label{eq:kappax}
    \kappa_x \defeq \min_{\vx \in \cX \setminus \{\vxstar \} } \max_{ \vy \in \mathcal{V}^\star(\cY)} \frac{ \langle \vx - \vxstar, \mat{A} \vy \rangle}{\| \vx - \vxstar\|_1},
\end{equation}
where $\mathcal{V}^\star(\cY) \defeq \{ \vy \in \Delta^m : \supp(\vy) \subseteq \supp(\vystar) \}$. A symmetric definition is to be considered with respect to Player $y$. To connect this to~\eqref{align:playerx}, we note that, when $\vy \in \mathcal{V}^\star(\cY)$, $\langle \vxstar, \mat{A} \vy \rangle = v$. We are thus left to lower bound $\max_{\vy} \langle \vx, \mat{A} \vy \rangle - v$ in terms of $\|\vx - \vxstar\|_1$, but under the constraint that $\vy \in \mathcal{V}^\star(\cY)$. An inspection of our proof of~\Cref{theorem:characterization} (and in particular the proof of~\eqref{align:playerx}) reveals that its conclusion holds even when the maximization is subject to the above constraint, and so our analysis immediately lower bounds~\eqref{eq:kappax} as well. The second quantity introduced by~\citet[Definition 2]{Wei21:Linear} corresponds exactly to \Cref{item:beta}, which was bounded in~\Cref{prop:beta}. The third quantity~\citep[Definition 4]{Wei21:Linear} is where the exponential overhead is introduced. Namely, the iteration complexity of $\omwu$ in their analysis depends on $\exp\left( \min( \alpha_P(\mat{A}), \alpha_D(\mat{A}) )^{-1} \right)$,
where we recall the definition in~\Cref{item:alpha}.\footnote{More specifically, the proof of~\citet[Theorem 3]{Wei21:Linear} upper bounds the Kullback-Leibler divergence $\text{KL}(\vz^{(t)}, \vzstar)$ by a quantity that is at least as large as $\left( 1 + \frac{15 \eta^2 C_2}{32} \right)^{-t}$, where $C_2 \leq \exp\left( \min( \alpha_P(\mat{A}), \alpha_D(\mat{A}) )^{-1} \right)$. Thus, to guarantee $\text{KL}(\vz^{(t)}, \vzstar) \leq \epsilon$ using the analysis of~\citet{Wei21:Linear} one needs at least $\log(1/\epsilon) /\log \left( 1 + \frac{15 \eta^2 C_2}{32} \right)$ iterations. When $C_2 \ll 1$, this grows with $1/C_2 \geq \exp\left( \min( \alpha_P(\mat{A}), \alpha_D(\mat{A}))  \right)$.} Unfortunately, \emph{for any game}, it holds that $\alpha_P(\mat{A}) \leq 1/n$ and $\alpha_D(\mat{A}) \leq 1/m$, and so even if the geometry of the problem is favorable, the obtained bound is exponential. (The reason the above quantity is crucial in their analysis is because it lower bounds the probability of playing any action through the trajectory of $\omwu$.) Nevertheless, using~\Cref{prop:alpha}, our analysis provides instead a bound of $\exp(\poly(n, m, 1/\sigma))$ with high probability, which is still a major improvement over the worst-case bound of~\citet{Wei21:Linear}, which can be doubly exponential in the number of bits $L$ describing the game---one can easily make sure that $\alpha_P(\mat{A}) \approx 1/2^L$ (\Cref{prop:ill-cond}).

Next, for $\egda$, \citet{Tseng95:Linear} established linear convergence under the error bound
\[
    \dist(\vz, \vzstar) \leq \tau \| \vz - \proj_{\cZ}(\vz - \eta F(\vz) )\|
\]
for some $\tau > 0$ and a suitable $\eta > 0$~\citep[Corollary 3.3]{Tseng95:Linear}. It is easy to make the following connection.

\begin{lemma}
    It holds that $\Phi(\vz) \leq \frac{2}{\eta} \| \vz - \proj_{\cZ}(\vz - \eta F(\vz) )\|$.
\end{lemma}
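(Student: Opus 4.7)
The plan is to exploit the well-known identity $\Phi(\vz) = \max_{\vz' \in \cZ} \langle F(\vz), \vz - \vz'\rangle$, which follows directly from the bilinear form of $F(\vz) = (\mat{A}\vy, -\mat{A}^\top \vx)$: expanding the inner product gives $\langle F(\vz), \vz - \vz'\rangle = \langle \vx, \mat{A}\vy'\rangle - \langle \vx', \mat{A}\vy\rangle$, whose maximum over $\cZ = \Delta^n \times \Delta^m$ is precisely $\Phi(\vz)$ as defined in \eqref{eq:dualgap}. Let $\vec p \defeq \proj_\cZ(\vz - \eta F(\vz))$ and let $\vz^\star \in \cZ$ attain the max in the identity. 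I will decompose
\[
\Phi(\vz) \;=\; \langle F(\vz), \vz - \vec p\rangle \;+\; \langle F(\vz), \vec p - \vz^\star\rangle
\]
and bound each term separately in terms of $\|\vz - \vec p\|$.

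First I would tackle the second term, which is the ``easy half'' driven by the projection. The first-order optimality condition for the Euclidean projection reads $\langle \vec p - (\vz - \eta F(\vz)), \vz^\star - \vec p\rangle \geq 0$; rearranging yields $\eta \langle F(\vz), \vec p - \vz^\star\rangle \leq \langle \vz - \vec p, \vz^\star - \vec p\rangle$. Applying Cauchy--Schwarz and the diameter bound $\|\vz^\star - \vec p\| \leq \mathrm{diam}(\cZ) = 2$ (since each simplex has Euclidean diameter $\sqrt 2$), this piece is at most $\frac{2}{\eta}\|\vz - \vec p\|$, which already matches the right-hand side of the claimed inequality.

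For the first term I plan to use the monotonicity (in fact, the \emph{skew-symmetry}) of the bilinear operator: a direct expansion shows $\langle F(\vz) - F(\vec p), \vz - \vec p\rangle = 0$, so $\langle F(\vz), \vz - \vec p\rangle = \langle F(\vec p), \vz - \vec p\rangle$. Then Cauchy--Schwarz together with the trivial bound $\|F(\vec p)\| \leq \sqrt{2}\,\|\mat{A}\|$ on the simplex gives $\langle F(\vz), \vz - \vec p\rangle \leq \sqrt{2}\,\|\mat{A}\|\,\|\vz - \vec p\|$. Under the standing step-size regime $\eta \leq \tfrac{1}{8L}$ used for $\egda$ (so that $\|\mat{A}\| \leq \tfrac{1}{8\eta}$), this contribution is absorbed into the leading $\tfrac{2}{\eta}$ factor, yielding the stated bound (up to an inconsequential constant that can be folded into the $2$).

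The main obstacle, and the one step that is not fully ``automatic'', is precisely the control of $\langle F(\vz), \vz - \vec p\rangle$: the variational inequality obtained by testing the projection against $\vz' = \vz$ gives only a \emph{lower} bound $\langle F(\vz), \vz - \vec p\rangle \geq \|\vz - \vec p\|^2/\eta$, not an upper bound. The skew-symmetry trick above circumvents this by shifting evaluation from $\vz$ to $\vec p$, after which the Lipschitz/operator-norm bound takes over. Everything else is bookkeeping with Cauchy--Schwarz and the simplex diameter.
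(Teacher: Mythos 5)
Your decomposition
\[
\Phi(\vz) = \langle F(\vz), \vz - \vec{p}\rangle + \langle F(\vz), \vec{p} - \vz^\star\rangle
\]
refines the paper's argument, and in fact fixes a subtlety that the paper's proof (as written) glosses over. The paper argues directly from the projection variational inequality: testing it at the point $\hvz^\star \in \arg\min_{\hvz} \langle \hvz, F(\vz)\rangle$ and using the diameter bound gives
\[
\eta\Phi(\vz) \;\leq\; 2\,\|\vz - \vec p\| \;-\; \eta\,\langle F(\vz), \vec p\rangle
\;=\; 2\,\|\vz - \vec p\| \;+\; \eta\,\langle F(\vz), \vz - \vec p\rangle,
\]
and then simply drops the second term. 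But testing the VI at $\hvz = \vz$ shows $\langle F(\vz), \vz - \vec p\rangle \geq \|\vz - \vec p\|^2/\eta \geq 0$, so the extra term goes the \emph{wrong} way and cannot be discarded for free. This is exactly the ``main obstacle'' you flag, and your skew-symmetry trick ($\langle F(\vz), \vz - \vec p\rangle = \langle F(\vec p), \vz - \vec p\rangle \leq \|F(\vec p)\|\,\|\vz - \vec p\|$) combined with the $\egda$ step-size regime $\eta \leq 1/(8L)$ is the right way to kill it. Indeed, one can check (e.g.\ take $\mat A = 3\mat I_2$, $\vz = ((1,0),(0,1))$, $\eta = 1$) that without a step-size constraint the inequality with the bare constant $2$ fails, so such a condition is genuinely needed, not just a convenience.

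Two small cautions. First, your final constant is $2 + \sqrt{2}\,\eta\|\mat A\|$, which under $\eta \leq 1/(8\|\mat A\|)$ is about $2.18$, not $2$; ``fold it into the $2$'' is not literally possible, though any absolute constant suffices for the downstream purpose (feeding into Tseng's error bound for $\egda$), so this is harmless. Second, you should state explicitly that the step-size hypothesis is part of the lemma's effective scope, since the claim as a freestanding inequality for arbitrary $\eta > 0$ is false. With those caveats, your proof is correct and, in my view, more complete than the one in the paper.
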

\begin{proof}
    Indeed, by the first-order optimality condition for the optimization problem associated with 
$$\vz' \defeq \proj_{\cZ}(\vz - \eta F(\vz) ) = \arg\min_{\vz' \in \cZ} \left\{ \| \vz' - (\vz - \eta F(\vz)) \|^2 \coloneqq h(\vz') \right\},$$
we get $\langle \hvz - \vz', \nabla h(\vz') \rangle \geq 0 $ for any $\hvz \in \cZ$, or equivalently, $\min_{\hvz \in \cZ} \langle \hvz - \vz', \vz' - \vz + \eta F(\vz) \rangle \geq 0$. Observing that $\min_{\hvz \in \cZ} \langle \hvz, F(\vz) \rangle = - \Phi(\vz)$ and bounding 
$$ \langle \vz - \vz', \hvz - \vz' \rangle \geq - \|\vz - \vz' \| \|\hvz - \vz'\| \geq - \diam_{\cZ} \|\vz - \vz'\| = - 2 \| \vz - \proj_{\cZ}(\vz - \eta F(\vz) ) \|$$
leads to the claim. 
\end{proof}
It can thus be shown that~\Cref{def:eb} is again sufficient to dictate the rate of convergence of $\egda$.

Finally, for $\egt$, \citet{Gilpin12:First} introduced a ``condition measure'' of the payoff matrix $\mat{A}$, which in fact corresponds precisely to~\Cref{def:eb}. Thus, \Cref{theorem:informal} with respect to~$\egt$ follows readily from~\citep[Theorem 2]{Gilpin12:First}.
\subsection{Proof of~Theorem~\ref{theorem:eb-stable}}
\label{sec:proof-stable}

Finally, we conclude with the proof of~\Cref{theorem:eb-stable}, which is restated below.

\perstable*

\begin{proof}[Proof of~\Cref{theorem:eb-stable}]
    We treat each parameter separately.
    \begin{itemize}
        \item Let us start from $\beta_P(\mat{A})$ (\Cref{item:beta}). We let $j' \in \arg\min_{j \in \barsuppy} ( v - \langle \vxstar_{\suppx}, \mat{A}_{\suppx, j} \rangle )$, where we assume that $\barsuppy \neq \emptyset$. We consider a perturbed matrix $\mat{A}'$ such that
        \[
        \mat{A}'_{i, j} =
        \begin{cases}
             \mat{A}_{i, j} - \beta_P(\mat{A}) &\text{if } i \in \suppx, j = j',\\
             \mat{A}_{i, j} &\text{otherwise}.
        \end{cases}
        \]
        Then, the game described by $\mat{A}'$ cannot be non-degenerate with the same support as $\mat{A}$. Indeed, in the contrary case it would follow that the (unique) equilibrium $(\vxstar_{\suppx}, \vystar_{\suppy})$ remains the same since $\mat{A}'_{\suppx, \suppy} = \mat{A}_{\suppx, \suppy}$. But then, $v - \langle \vxstar_{\suppx}, \mat{A}'_{\suppx, j'} \rangle = v - \langle \vxstar_{\suppx}, \mat{A}_{\suppx, j'} \rangle - \beta_P(\mat{A}) = 0$, by definition of $j'$ and $\beta_P(\mat{A})$, which is a contradiction. Further, $\| \mat{A} - \mat{A}'\| = \beta_P(\mat{A})$. In turn, this implies that $\delta \leq \beta_P(\mat{A})$. Similar reasoning yields that $\delta \leq \beta_D(\mat{A})$.
        \item Continuing for $\gamma_P(\mat{A})$ (\Cref{item:gamma}), we assume that $\tilsuppx, \tilsuppy \neq \emptyset$. We let $\mat{U} \mat{\Sigma} \mat{V}^\top$ be a singular value decomposition (SVD) of $\mat{Q}$. Then, a perturbation to $\mat{Q}$ of the form $\mat{U} \diag(0, 0, \dots, \smin(\mat{Q})) \mat{V}^\top$ leads to a singular matrix $\mat{Q}'$, which cannot be the case if the perturbed game is non-degenerate with the same support. This perturbation can be cast in terms of $\mat{A}_{\suppx, \suppy}'$ through transformation $\mat{T}$ in~\eqref{eq:Q-trans}. This lower bounds $\smin(\mat{Q})$ in terms of $\delta$, and~\Cref{prop:neg-id} can in turn lower bound $\gamma_P(\mat{A})$ in terms of $\smin(\mat{Q})$.
        \item Finally, we treat $\alpha_P(\mat{A})$ (\Cref{item:alpha}). The non-trivial case is again when $\tilsuppx, \tilsuppy \neq \emptyset$. Let $i' \in \arg \min_{i \in \suppx} (\vxstar_i)$. If $i' \in \tilsuppx$, we define
        \[
          \R^{\tilsuppx} \ni \tilx_i' = 
          \begin{cases}
              0 &\text{if } i = i',\\
              \vxstar_i &\text{otherwise}.
          \end{cases}
        \]
        We know that $\mat{Q}^\top \tilxstar = \vec{b}$. We then consider the perturbed vector $\vec{b}' \defeq \mat{Q}^\top \tilx'$. If the perturbed game was non-degenerate with the same support, it would follow that $(\tilx', \cdot)$ is the unique equilibrium, which is a contradiction since $\tilx_{i'} = 0$. Further, the norm of the perturbation $\| \vec{b} - \vec{b}'\|$ is upper bounded in terms of $\alpha_P(\mat{A})$, which can be again expressed in terms of $\mat{A}_{\suppx, \suppy}$ through transformation~\eqref{eq:Q-trans}. Similarly, if $i' \notin \tilsuppx$, we define
        \[
          \R^{\tilsuppx} \ni \tilx_i' = \vxstar_i + \frac{\alpha_P(\mat{A})}{|\tilsuppx|},
        \]
        and we consider the perturbed vector $\vec{b}' \defeq \mat{Q}^\top \tilx'$. If the perturbed game was non-degenerate with the same support, it would follow that $(\tilx', \cdot)$ is the unique equilibrium, which is a contradiction since $\sum_{i \in \tilsuppx} \tilx_i' = \sum_{i \in \tilsuppx} \vxstar_i + \alpha_D(\mat{A}) = 1$. The norm of the perturbation is again upper bounded in terms of $\alpha_P(\mat{A})$. Overall, we have shown that $\delta \leq \alpha_P(\mat{A}) \poly(n, m)$. Similar reasoning applies with respect to $\alpha_D(\mat{A})$. This completes the proof.
    \end{itemize}
\end{proof}

\end{document}